\newtheorem{theorem}{Theorem}
\newtheorem{corollary}[theorem]{Corollary}
\newtheorem{lemma}[theorem]{Lemma}
\newtheorem{proposition}[theorem]{Proposition}
\newtheorem{definition}[theorem]{Definition}
\newtheorem{problem}[theorem]{Problem}
\newtheorem{claim}[theorem]{Claim}
\newcommand{\ketbra}[2]{|#1\rangle\! \langle #2|}
\newcommand{\Id}{\mathbb{I}}
\newcommand{\norm}[1]{\left\lVert#1\right\rVert}
\newcommand{\abs}[1]{\left\lvert#1\right\rvert}
\newcommand{\avg}[1]{\left\langle#1\right\rangle}
\begin{document}

\title{Hamiltonian Learning via Shadow Tomography of Pseudo-Choi States}
\author{Juan Castaneda}
\affiliation{Department of Physics, University of Toronto, Toronto ON, Canada}
\author{Nathan Wiebe}
\affiliation{Department of Computer Science, University of Toronto, Toronto ON, Canada}
\affiliation{ Pacific Northwest National Laboratory, Richland WA, USA}
\affiliation{ Canadian Institute for Advanced Research, Toronto ON, Canada}

\maketitle

\begin{abstract}
    We introduce a new approach to learn Hamiltonians through a resource that we call the pseudo-Choi state, which encodes the Hamiltonian in a state using a procedure that is analogous to the Choi-Jamiolkowski isomorphism. We provide an efficient method for generating these pseudo-Choi states by querying a time evolution unitary of the form $e^{-iHt}$ and its inverse, and show that for a Hamiltonian with $M$ terms the Hamiltonian coefficients can be estimated via classical shadow tomography within error $\epsilon$ in the $2$-norm using $\widetilde{O}\left(\frac{M}{t^2\epsilon^2}\right)$ queries to the state preparation protocol, where $t \leq \frac{1}{2\norm{H}}$. We further show an alternative approach that eschews classical shadow tomography in favor of quantum mean estimation that reduces this cost (at the price of many more qubits) to $\widetilde{O}\left(\frac{M}{t\epsilon}\right)$. Additionally, we show that in the case where one does not have access to the state preparation protocol, the Hamiltonian can be learned using $\widetilde{O}\left(\frac{\alpha^4M}{\epsilon^2}\right)$ copies of the pseudo-Choi state. The constant $\alpha$ depends on the norm of the Hamiltonian, and the scaling in terms of $\alpha$ can be improved quadratically if using pseudo-Choi states of the normalized Hamiltonian. Finally, we show that our learning process is robust to errors in the resource states and to errors in the Hamiltonian class.  Specifically, we show that if the true Hamiltonian contains more terms than we believe are present in the reconstruction, then our methods give an indication that there are Hamiltonian terms that have not been identified and will still accurately estimate the known terms in the Hamiltonian.
\end{abstract}
\tableofcontents

\section{Introduction}

Identifying the system Hamiltonian is a central problem in physics, as the Hamiltonian uniquely describes the dynamics of any closed quantum system. Additionally, as quantum computers evolve, the need to characterize errors in their behaviour is becoming increasingly important, and as a result, several classes of quantum certification protocols have been designed to ensure quantum devices behave as intended. There exist a wide variety of such protocols, each providing different amounts and types of information about the processes in question, and operating under different assumptions with varying resource costs~\cite{eisertCertificationReview}. In contrast to efficient protocols like randomized benchmarking, which only provide information about the average error rate of a quantum gate set \cite{knill2008, proctor2017, helsen2022}, Hamiltonian learning can provide information about the systematic error in a unitary evolution under some fixed but erroneous Hamiltonian, without requiring exponential resources like protocols such as full process tomography which can provide more information about noise processes. Learning the Hamiltonian yields low-level information about the quantum dynamics that can be directly used for quantum control, and thus is an indispensable task for calibration and certification of quantum devices since it can be used to debug quantum Hamiltonian simulation algorithms by learning the simulated Hamiltonian and comparing the result to the ideal target.

Despite the importance of this problem, systematic methods for inferring the Hamiltonian and bounding the sample complexity needed to do so has only recently become a major focus in quantum information science. The most important thrusts include direct inference from short time experiments~\cite{da2011practical}, approximate Bayesian inference~\cite{Granade_2012, Wiebe2014, Wang2017, evans2019scalable}, PAC (Probably Approximate Correct) methods based on learning from thermal states~\cite{anshu2020sampleefficient, haah2021optimal}, as well as PAC learning by querying the time evolution unitary $U = e^{-iHt}$ \cite{haah2021optimal, yu2022practical, gu2022practical, PhysRevLett.130.200403, caro2023learning, frança2022efficient, dutkiewicz2023advantage}. These latter techniques provide rigorous sample bounds on the evidence needed to learn the Hamiltonian within fixed error and probability of failure using thermal states or time evolution as the learning resource. However, in addition to potentially costly state preparation methods, approaches for Hamiltonian learning from thermal states are generally suitable for the high-temperature regime, not at very low temperatures.~\cite{anshu2020sampleefficient,haah2021optimal}.

Much of the recent progress made in the field has instead used the time evolution operator $U = e^{-iHt}$ as the learning resource. This approach naturally leads to applications for certifying quantum computers, where one would like to check whether a desired time evolution is actually performed in practice (this applies to scenarios where one would like to verify their own computations, as well as to scenarios where one party performs the computation, and a second party is interested in determining what went on under the hood). It also adds the possibility of using quantum control to improve the efficiency of learning techniques. Even so, the focus has typically been restricted to physically local Hamiltonians (and the slightly more general low-intersection class) rather than the more generic class of $k$-local Hamiltonians or beyond. In this work we address this by providing several new algorithms that allow us to efficiently learn Hamiltonians beyond the low-intersection regime, and provide explicit bounds on number of operations needed to produce these state both with and without coherent quantum control. Our method achieves this by introducing a new learning resource - in particular the ``pseudo-Choi state'' of the Hamiltonian - which we show can be used to efficiently learn the Hamiltonian. The main idea is that these states encode the Hamiltonian in a way that makes it easy to recover the Hamiltonian coefficients by using few copies of the state to estimate the expectation values for a set of ``decoding operators'', which are simple operators whose expectation values can be computed on a classical computer in polynomial time. Furthermore, we provide a method of generating these pseudo-Choi states efficiently via controlled queries to the time evolution unitary $U = e^{-iHt}$ and its inverse $U^\dagger$. This access model is similar to that in~\cite{haah2021optimal, yu2022practical, gu2022practical, PhysRevLett.130.200403, frança2022efficient, caro2023learning, dutkiewicz2023advantage}, though it is slightly more restrictive because it requires access to the backwards time-evolution $U^\dagger$ as well. 

While there have been a few recent approaches focused on learning Hamiltonians beyond the low-intersection class (\cite{yu2022practical, caro2023learning}), our approach improves the dependence of the query complexity on the error, locality of the Hamiltonian terms, and the number of Hamiltonian terms. In addition to this, some of our learning algorithms are robust in the sense that if there are unexpected Hamiltonian terms our algorithm will still correctly estimate the coefficients for terms we expected, and also signal that there were additional terms present. A more in-depth discussion about how our result compares to existing techniques for learning Hamiltonians is left for Section~\ref{section:Applications}. For now, we summarize by saying that there is no clear ``best'' approach, but rather several good approaches that make sense for different scenarios (i.e. the type of input model considered and the types of Hamiltonians considered). That said, under the admittedly more restrictive input model considered here, the techniques we develop are able to achieve the best scaling in terms of both the number of Hamiltonian terms, as well as the error in estimating the parameters, with the added benefit of robustness. Furthermore, depending on the choice of error metric, our approach is applicable to \textit{any} n-qubit Hamiltonian, including those with exponentially many terms (thus going far beyond even the k-local regime), so long as the norm of the Hamiltonian is bounded by a polynomial in the number of qubits. For these reasons, it remains of interest to study whether pseudo-Choi states can be produced in an efficient manner if one relaxes the input model to only require queries to $U$.

  The paper is laid out as follows.  Section~\ref{section:results} summarizes the key results of our work and gives specific statements of the learning problem we consider. In addition, we include a summary of the advantages and drawbacks of our approach when compared to existing Hamiltonian learning methods.
    Section~\ref{section:ChoiStateLearning} introduces the pseudo-Choi state of a Hamiltonian and discusses how it can be used as a resource to efficiently solve the Hamiltonian learning problem. A brief overview of the classical shadows tomography procedure introduced in~\cite{Huang_2020} is also included. Section~\ref{section:BlackboxLearning} examines the cost of learning the Hamiltonian by querying blackboxes that implement $U = e^{-iHt}$ and $U^\dagger$ using classical shadows in the learning procedure. An alternative method is introduced in Section~\ref{section: QME}, which uses a quantum algorithm introduced by \textit{Huggins et al.}~\cite{Huggins_nearly_optimal_estimating} for estimating expectation values. This approach quadratically improves how the query complexity scales with respect to the error and evolution time. Section~\ref{section:Applications} discusses the types of Hamiltonians our learning approach is well suited for and includes a comparison to prior work on Hamiltonian learning. Finally, we discuss the robustness of our procedure in Section~\ref{section: robustness}, before concluding in Section~\ref{section:conclusion}.

\section{Problem Statements and Summary of Results}\label{section:results}
The problem that we wish to tackle is a natural one.  Let us assume that we have a set of possible Hamiltonian terms whose weighted sum comprises a Hamiltonian.  Our task is to learn, within a fixed probability of failure, the weights for the Hamiltonian coefficeints within fixed error.  This requirement falls within the scope of the quantum PAC learning paradigm, which was first introduced for theoretical purposes~\cite{servedio2004equivalences,Aaronson_2004,arunachalam2017guest} but has recently become a practical tool used in characterization~\cite{Huang_2020,Huggins_nearly_optimal_estimating}.  The formal statement of this intuitive problem is given below.

        \begin{definition} [Hamiltonian Learning Problem]{ \label{def:HamLearningProblem}Let $\mathcal{H}_S$  be a Hilbert space of dimension $d=2^n$, and let $\{H_m \mid m=0,\ldots M-1\}$ be a set of Hermitian and unitary operators acting on the space such that $\{H_m\}$ forms an orthonormal operator basis with respect to the inner product $\langle H_j,H_k \rangle = d^{-1} {\rm Tr}(H_j H_k)$ and let the Hamiltonian $H$ be parameterized for ${\mathbf{c}} \in \mathbb{R}^M$ such that
        \begin{align*}
            H = \sum_{m=0}^{M-1} c_m H_m
        \end{align*}
        
        The Hamiltonian learning problem is to compute a vector $\hat{\textbf{c}}$ such that with probability at least $1-\delta$ we have that 
        $
            \norm{\hat{\textbf{c}} -  \textbf{c}}_2 \leq \epsilon
        $
        where $\norm{\cdot}_2$ is the vector $L^2$-norm.
        }
        \end{definition}
        
  In the most general case, there are up to $d^2$ unique Hamiltonian terms in the decomposition.  These dense Hamiltonians generically cannot be learned efficiently within the operator basis considered as they require that we learn exponentially many terms in the Hamiltonians. However, focusing on special classes of Hamiltonians can reduce this number. Several recent papers (\cite{haah2021optimal,PhysRevLett.130.200403,gu2022practical}) have focused on the \textbf{low-intersection} class of Hamiltonians, which was introduced in \cite{haah2021optimal}, and we describe in Definition~\ref{def:LowIntersection}. The low-intersection class contains all spatially local Hamiltonians, which are the focus of the previous work by Anshu et al. \cite{anshu2020sampleefficient}, and is therefore relevant for describing many physical systems.
        
        \begin{definition} [Low-Intersection Hamiltonian]{\label{def:LowIntersection}
        
        A low-intersection Hamiltonian on $n$ qubits is of the form $H=\sum_m c_m H_m$ and satisfies the following properties:
        
        \begin{enumerate}
            \item It is \textbf{k-local} : Each $H_m$ is a finite dimensional Hermitian operator that acts non-trivially (i.e. as a non-identity operator) on at most $k \in \mathcal{O}(1)$ qubits.
            
            \item For each Hamiltonian term $H_m$ there exist at most a constant number of distinct terms $H_l$ such that $H_m$ and $H_l$ act non-trivially on at least one common qubit.
        \end{enumerate} }
        \end{definition}
        The second property directly implies that each qubit is acted on non-trivially by only a constant number of Hamiltonian terms. As a result, the number of Hamiltonian terms in low-intersection Hamiltonians scales as $\mathcal{O}(n)$.
        
 In contrast, the class of k-local Hamiltonians that are not low-intersection is also of interest. This includes Hamiltonians that have pairwise interactions between every pair of qubits, as well as Hamiltonians that can be represented by star graphs (i.e. one qubit interacts with all others, but there are no/few interactions between the other qubits). Importantly, though this class is more general than the low-intersection class, k-local Hamiltonians contain at most $\mathcal{O}(n^k)$ Hamiltonian terms, and so the learning approach we introduce in this paper remains tractable. Physically interesting examples of k-local Hamiltonians include hardcore boson models, Heisenberg models on the complete graph, and spin glass models such as the Sherrington-Kirkpatrick and p-spin models. Furthermore, many quantum algorithms use Hamiltonian simulation as a subroutine, so for certification purposes it is useful to expand the scope of Hamiltonian learning beyond the low-intersection class and even to ``less physical'' Hamiltonians that may not be common in nature. 
        
We consider Hamiltonian learning using two different, yet related, learning resources. The first is a state which we dub the ``pseudo-Choi state'' of the Hamiltonian, which roughly speaking is the representation of the Hamiltonian as a quantum state in a larger Hilbert space: 
\begin{align}\label{eq: ChoiBrief}
    \ket{\psi_{c}} & := \frac{({H} \otimes \Id_A)\ket{\Phi_d}_{SA}\ket{0}_C + \ket{\Phi_d}_{SA}\ket{1}_C}{\alpha}
\end{align}
where $H$ is dimensionless, $\alpha$ is a normalization constant and $\ket{\Phi_d}_{SA}$ is a maximally entangled state over two sub-systems denoted $S$ and $A$. The name prescribed to this state is a consequence of the fact that the left-hand side of the state, $({H} \otimes \Id_A)\ket{\Phi_d}_{SA}$, is similar to a Choi state, which is defined for a unitary $U_S \in \mathcal{H}_S $ as $\ket{\psi_{Choi}} = (U_S \otimes \Id_A)\ket{\Phi_d}_{SA}$. However, it is not generally a Choi state because $H$ need not correspond to a CPTP map. The pseudo-Choi state is formally described in Definition~\ref{def:pseudoChoi}. While these states do not result from natural processes like in the case of thermal states, we will see that they allow for very efficient learning of the Hamiltonian coefficients, which makes it of interest to find ways of efficiently generating them.

The second learning resource under consideration is a controllable time-evolution black-box $U = e^{-iHt}$. Along with thermal states, the unitary dynamics of a system is a natural resource to consider for learning the system Hamiltonian. In addition to the cost of thermal state preparation, learning from thermal states is often not feasible, such as at very low temperatures.  To see this, consider two Hamiltonians $H_0 \succeq 0$ and $H_1 \succeq 0$ 
\begin{align}
H' = \ketbra{\psi}{\psi} + (I - \ketbra{\psi}{\psi})H_0(I - \ketbra{\psi}{\psi})\\
H'' = \ketbra{\psi}{\psi} + (I - \ketbra{\psi}{\psi})H_1(I - \ketbra{\psi}{\psi})
\end{align}
In both cases, the zero-temperature groundstate is $\ketbra{\psi}{\psi}$ and thus the success probability of successfully distinguishing between both Hamiltonians is $1/2$, which shows that for sufficiently low temperature thermal states it can become impractical to learn the Hamiltonian in question.
In such cases, it may be more convenient to use the unitary evolution of the system as the learning resource. In Section~\ref{section:BlackboxLearning} we demonstrate that the pseudo-Choi state can be generated efficiently using the time evolution and its inverse.

As the time evolution unitary is merely one of conceivably many ways of generating the pseudo-Choi state, Section~\ref{section:ChoiStateLearning} considers Hamiltonian learning in the more general scenario where copies of the pseudo-Choi state are provided to the user with no promises about how they were generated. We refer readers to Section~\ref{section: comparison} for more discussion about pseudo-Choi states, and in particular why they may be more powerful than regular Choi states of the unitary evolution. On the other hand, while the time evolution unitary is merely a specific example of how such states can be generated, it is also a more intuitive resource - especially in the context of certifying quantum simulators. Furthermore, it has been used as the learning resource in previous work on Hamiltonian learning (\cite{haah2021optimal, yu2022practical, gu2022practical, PhysRevLett.130.200403, caro2023learning}), which allows for a more direct comparison between the sample complexities of each approach. Therefore, we consider it as a separate case in Section~\ref{section:BlackboxLearning}.

\subsection{Main Results}

It should be noted that while our results below are stated for k-local Hamiltonians, our methods can be applied to more general Hamiltonians. In particular, as long as there are at most $\textbf{poly}(n)$ Hamiltonian terms, and we have information about the structure of the Hamiltonian (i.e. which Hamiltonian terms are present), the query complexity remains polynomial in the number of qubits for both the classical shadows and QME based approaches, as does the computational complexity. Furthermore, if we consider the infinity-norm in learning the Hamiltonian coefficients, our approach will be able to learn \textit{any} n-qubit Hamiltonian with $\norm{H} \in \textbf{poly}(n)$ using only $\textbf{poly}(n)$ queries to $U$ and $U^\dagger$. Remarkably, this includes Hamiltonians with an exponential number of terms. For these cases with exponentially many Hamiltonian coefficients, while our methods are query-efficient, the computational complexity will be exponential. However, since all the Hamiltonian coefficients can in theory be computed at the same time on a classical computer, this could be somewhat alleviated with access to a large classical memory.

    Our first main result is Theorem~\ref{thm:samplecomplexityIntro}, which provides an upper bound on the number of pseudo-Choi states needed to learn a Hamiltonian. 
    
    \begin{theorem}[Solving the pseudo-Choi Hamiltonian Learning Problem via Classical Shadows] { \label{thm:samplecomplexityIntro}

    The number of pseudo-Choi states required to solve the Hamiltonian learning problem of Definition~\ref{def:HamLearningProblem} within error $\epsilon$ and failure probability at most $\delta$ for a k-local Hamiltonian is at most
    \begin{align*}
        N \in \widetilde{\mathcal{O}}\left( \frac{\alpha^4  n^k}{\epsilon^2}  \right)
    \end{align*}
    where $\alpha\le 1+\|H\|$ is the normalization constant from equation~\eqref{eq: ChoiBrief}.
    }
    \end{theorem}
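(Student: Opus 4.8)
\emph{Proof strategy.} The plan is to introduce a family of ``decoding observables'' $\{O_m\}_{m=0}^{M-1}$ whose expectation values in the pseudo-Choi state $\ket{\psi_c}$ reproduce the Hamiltonian coefficients $c_m$ up to the known scalar $\alpha^2/2$, and then to apply the classical-shadows sample-complexity bound to estimate all $M$ of these expectation values simultaneously.

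First I would define $O_m := H_m \otimes \Id_A \otimes X_C$ on $S\otimes A\otimes C$, which is Hermitian, has $\norm{O_m}\le 1$, and acts nontrivially on at most the $\le k$ qubits of $S$ in the support of $H_m$ together with the single qubit of $C$ --- hence on at most $k+1$ qubits in total, since the ancilla $A$ (whose only role is to carry $\ket{\Phi_d}_{SA}$) is left untouched. Using $\bra{\Phi_d}(B\otimes\Id_A)\ket{\Phi_d} = d^{-1}\Tr(B)$, the Hermiticity of $H$, and the fact that $X_C$ exchanges the two mutually orthogonal branches $\ket0_C$ and $\ket1_C$ of $\ket{\psi_c}$, a short computation gives
\begin{align*}
\bra{\psi_c} O_m \ket{\psi_c} \;=\; \frac{1}{\alpha^2 d}\bigl(\Tr(H H_m) + \Tr(H_m H)\bigr) \;=\; \frac{2}{\alpha^2 d}\Tr(H_m H) \;=\; \frac{2 c_m}{\alpha^2},
\end{align*}
where the last equality is the orthonormality hypothesis $d^{-1}\Tr(H_m H_\ell)=\delta_{m\ell}$ of Definition~\ref{def:HamLearningProblem}. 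Hence $c_m = \tfrac{\alpha^2}{2}\bra{\psi_c}O_m\ket{\psi_c}$, and since a classical-shadow snapshot $\hat\rho$ obeys $\Ex[\Tr(O_m\hat\rho)] = \bra{\psi_c}O_m\ket{\psi_c}$, the shadow estimates of these expectation values, rescaled by $\alpha^2/2$, furnish the estimator $\hat{\mathbf c}$.

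Next I would do the error bookkeeping. To ensure $\norm{\hat{\mathbf c}-\mathbf c}_2\le\epsilon$ it suffices that $\abs{\hat c_m - c_m}\le \epsilon/\sqrt M$ for all $m$, i.e.\ that each $\bra{\psi_c}O_m\ket{\psi_c}$ is estimated to additive error $\epsilon' := \tfrac{2\epsilon}{\alpha^2\sqrt M}$, and a union bound spreads the failure probability $\delta$ over the $M$ coefficients. The classical-shadows procedure of~\cite{Huang_2020} with random single-qubit Pauli bases and median-of-means post-processing estimates $M$ fixed observables each to error $\epsilon'$ with overall success probability $\ge 1-\delta$ from $N = \mathcal{O}\!\left(\frac{\log(M/\delta)\,\max_m\norm{O_m}_{\mathrm{shadow}}^2}{\epsilon'^2}\right)$ copies of the state; for observables supported on at most $k+1$ qubits one has $\max_m\norm{O_m}_{\mathrm{shadow}}^2 = \mathcal{O}(4^{k+1})$, a constant for $k\in\mathcal{O}(1)$. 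Substituting $\epsilon'$, using $M = \mathcal{O}(n^k)$ for a $k$-local Hamiltonian, and recalling $\alpha\le 1+\norm H$ (which follows from the definition of $\ket{\psi_c}$, whose two branches are orthogonal and satisfy $d^{-1}\Tr(H^2)=\norm{\mathbf c}_2^2\le\norm H^2$), one gets $N = \mathcal{O}\!\left(\frac{\alpha^4 M\, 4^{k+1}\log(M/\delta)}{\epsilon^2}\right) = \widetilde{\mathcal{O}}\!\left(\frac{\alpha^4 n^k}{\epsilon^2}\right)$, which is the claim.

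The creative step is the choice of the decoding operators $O_m$; everything after it is routine except for one point that needs genuine care --- the shadow-norm bound. Two things matter there: (i) the ancilla register $A$ inflates the total qubit count to $2n+1$, but since every $O_m$ acts as the identity on $A$ the shadow norm depends only on the $(k+1)$-qubit support inside $S\otimes C$, so no factor polynomial in $d=2^n$ appears; and (ii) if the basis elements $H_m$ are not literally single Pauli strings but merely Hermitian unitaries on $\le k$ qubits, one must bound $\norm{O_m}_{\mathrm{shadow}}$ via the general estimate for $\mathcal{O}(1)$-local bounded observables rather than the single-Pauli variance formula --- this only changes the hidden $k$-dependent constant. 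I would also remark that the explicit $M$ (equivalently $n^k$) factor is an artifact of targeting the $2$-norm error; running the same argument for the $\infty$-norm error removes the $\sqrt M$ and hence this factor, which is how the method extends to Hamiltonians with super-polynomially many terms.
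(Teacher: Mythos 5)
Your decoding operators $O_m = H_m \otimes \Id_A \otimes X_C$ and the resulting identity $\bra{\psi_c}O_m\ket{\psi_c} = 2c_m/\alpha^2$ are correct, and your route is essentially the paper's Pauli-measurement variant (Appendix~\ref{App: PauliShadows}, where the operators $\tfrac{1}{2}H_l\otimes X_C$ are applied after tracing out $A$ --- equivalent to your observation that the shadow norm sees only the $(k+1)$-qubit support), rather than the main-text Clifford-shadows route, which instead uses the non-Hermitian operators $(H_l\otimes\Id)\ketbra{\Phi_d}{\Phi_d}\otimes\ketbra{0}{1}_C$ split into Hermitian combinations with shadow norm at most $6$ independent of $k$. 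That difference only affects the $k$-dependent constant ($3^{k+1}$ or $4^{k+1}$ versus $6$), which is hidden for $k\in\mathcal{O}(1)$, so it is an acceptable alternative for the theorem as stated.

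There is, however, one genuine gap: your estimator $\hat c_m = \tfrac{\alpha^2}{2}\avg{O_m}_s$ presupposes that $\alpha$ is known, but $\alpha = \sqrt{\norm{\mathbf{c}}_2^2+1}$ is a function of the very coefficients you are trying to learn, and in the pseudo-Choi access model the states are handed to you with no side information. The rescaling by $\alpha^2/2$ therefore cannot be performed as written. The paper closes this by introducing an additional decoding operator $O_\alpha$ (in your conventions, $\Id_{SA}\otimes\ketbra{1}{1}_C$) satisfying ${\rm Tr}(\rho_c O_\alpha) = 1/\alpha^2$, estimating it with the same classical shadow, and dividing the raw estimates by that quantity. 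This is not merely cosmetic: propagating the error of the $1/\alpha^2$ estimate through the division inflates the per-coefficient error to $\epsilon_s\alpha^2\sqrt{c_m^2+1}$ (Proposition~\ref{prop:errbd}), which is where the $(c_{\max}+1)$ factor in the detailed bound of Theorem~\ref{thm:samplecomplexityUB} comes from; it is benign only under the additional assumption $\abs{c_m}\le 1$. Your error bookkeeping, which takes $\epsilon' = 2\epsilon/(\alpha^2\sqrt{M})$ and multiplies by an exactly known $\alpha^2/2$, silently skips this step. Adding the $O_\alpha$ estimation and redoing the propagation recovers the claimed $\widetilde{\mathcal{O}}(\alpha^4 n^k/\epsilon^2)$ scaling, so the gap is fixable, but as written the algorithm is not executable.
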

    Theorem~\ref{thm:samplecomplexityIntro} is presented in more detail as Theorem~\ref{thm:samplecomplexityUB} and is proven in Section~\ref{sec: SampleCompUB}. Note that in Theorem~\ref{thm:samplecomplexityIntro} we have replaced $M$ by the upper bound $M \in \mathcal{O}(n^k)$ for the number of terms in a k-local Hamiltonian, and the $\mathcal{\widetilde{O}}$ notation hides logarithmic factors.

    Our second main result, Theorem~\ref{thm:querycomplexityIntro}, gives an upper bound on the number of queries to the time evolution unitary and its inverse needed to learn a Hamiltonian using classical shadows. This result is related to the previous one, as our approach involves first generating copies of the pseudo-Choi state by querying the unitary dynamics of the Hamiltonian, and then proceeding in similar fashion to the approach used to prove Theorem~\ref{thm:samplecomplexityIntro}.
    
    \begin{theorem} [Solving the Unitary Hamiltonian Learning Problem via Classical Shadows]{\label{thm:querycomplexityIntro}

    The number of queries to the time evolution unitary $e^{-iHt}$ and its inverse, for $t \in \mathcal{O}( 1/\norm{H})$, required to solve the Hamiltonian learning problem of Definition~\ref{def:HamLearningProblem} within error $\epsilon$ and failure probability at most $\delta$ for a k-local Hamiltonian is at most
    \begin{align*}
        N & \in \widetilde{\mathcal{O}}\left( \frac{n^k}{t^2 \epsilon^2}\right)
    \end{align*}
    }
    \end{theorem}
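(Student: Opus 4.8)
The plan is to reduce Theorem~\ref{thm:querycomplexityIntro} to Theorem~\ref{thm:samplecomplexityIntro} by showing that one copy of the pseudo-Choi state of the \emph{rescaled} Hamiltonian $tH=\sum_m (tc_m) H_m$ can be prepared using only polylogarithmically many controlled queries to $U=e^{-iHt}$ and $U^\dagger$. The heart of the argument is a state-preparation lemma: given access to controlled-$U$ and controlled-$U^\dagger$, one can prepare a state $\eta$-close in trace distance to the pseudo-Choi state $\ket{\psi_{tc}}$ of $tH$ (i.e. \eqref{eq: ChoiBrief} and Definition~\ref{def:pseudoChoi} with $H$ replaced by $tH$) using $\mathcal{O}(\mathrm{polylog}(1/\eta))$ queries. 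I would build this state from $\ket{\Phi_d}_{SA}$ together with a branch qubit $C$ prepared in a superposition $\sqrt{p_0}\ket{0}_C+\sqrt{p_1}\ket{1}_C$ and, controlled on $C=0$, apply a linear-combination-of-unitaries (LCU) circuit assembled from the powers $U^{j},U^{\dagger j}$ that effects $f(tH)\otimes\Id_A$ on the $S$ register, where $f$ is an odd trigonometric polynomial of degree $D$ that $\eta$-approximates the identity map $x\mapsto x$ on the interval $[-\tfrac12,\tfrac12]$ containing $\mathrm{spec}(tH)$. Such an $f$ exists with $D=\mathcal{O}(\mathrm{polylog}(1/\eta))$ and coefficient $\ell_1$-norm $\mathcal{O}(1)$, because a smoothly windowed version of $x\mapsto x$ is $C^\infty$ and $2\pi$-periodic and hence has rapidly decaying Fourier coefficients; implementing it costs $\mathcal{O}(D^2)=\mathcal{O}(\mathrm{polylog}(1/\eta))$ queries. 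Choosing $p_0,p_1$ to balance the two branches and post-selecting the LCU ancilla on $\ket{0}$ then yields the claimed approximation of $\ket{\psi_{tc}}$.

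Granting this lemma, the remainder is bookkeeping. Since $t\norm{H}\le\tfrac12$, the normalization constant of $\ket{\psi_{tc}}$ is $\alpha_{tH}\le 1+t\norm{H}=\mathcal{O}(1)$. Learning the coefficient vector $t\mathbf{c}$ in $L^2$-error $\epsilon'$ is, after dividing by $t$, the same as learning $\mathbf{c}$ in $L^2$-error $\epsilon'/t$, so it suffices to apply Theorem~\ref{thm:samplecomplexityIntro} to the $k$-local Hamiltonian $tH$ with target error $\epsilon'=t\epsilon$: this gives $\widetilde{\mathcal{O}}(\alpha_{tH}^4 n^k/(t\epsilon)^2)=\widetilde{\mathcal{O}}(n^k/(t^2\epsilon^2))$ copies of $\ket{\psi_{tc}}$. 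Multiplying by the $\mathcal{O}(\mathrm{polylog})$ per-copy query cost --- with $\eta$ chosen a fixed constant below $t\epsilon$ so the preparation error is absorbed into the overall budget --- leaves the total query count at $\widetilde{\mathcal{O}}(n^k/(t^2\epsilon^2))$, as claimed.

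The main obstacle is the state-preparation lemma, and inside it the quantitative error control. Two things require care. First, the LCU must reproduce a function close to $x\mapsto x$ in operator norm over $\mathrm{spec}(tH)$ while using only polylogarithmically many powers of $U$; the hypothesis $t\norm{H}\le\tfrac12$ is exactly what makes a low-degree windowed approximation possible, and one must check that an operator-norm error of $\eta$ translates to an $L^2$-error of at most $\eta$ on the extracted coefficient vector --- this follows from $\sqrt{d^{-1}\Tr(A^2)}\le\norm{A}$ for Hermitian $A$ together with orthonormality of $\{H_m\}$, so no dimension factors intrude. Second, one must verify that the combined post-selection (branch weights plus LCU ancilla) succeeds with probability bounded below by a constant, so that no amplitude amplification is needed and the per-copy overhead is genuinely polylogarithmic rather than inverse-polynomial in $t\epsilon$; this uses $\alpha_{tH}=\mathcal{O}(1)$ and the $\mathcal{O}(1)$ coefficient $\ell_1$-norm. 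Propagating the $\eta$-closeness of the prepared states through the classical-shadows estimator of Theorem~\ref{thm:samplecomplexityIntro} and taking a union bound over the $\mathcal{O}(n^k)$ coefficients is then routine.
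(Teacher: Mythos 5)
Your proposal is correct and follows the same overall architecture as the paper's proof: reduce to the pseudo-Choi sample-complexity bound (Theorem~\ref{thm:samplecomplexityIntro}/\ref{thm:samplecomplexityUB}) by showing that each pseudo-Choi state of the rescaled Hamiltonian costs only polylogarithmically many queries to $U$ and $U^\dagger$, with post-selection succeeding with probability at least a constant (indeed $\ge 1/2$, since the reference branch $\ket{1}_C$ always survives), and then split the error budget between the state-preparation error and the shadow-tomography error. The only genuine difference is the engine inside the state-preparation lemma: the paper invokes Gilyén et al.'s logarithm-of-unitary result (Lemma~\ref{lemma:ProduceBlockEncoding}), which builds a block-encoding of $\sin(Ht)$ by LCU from $U,U^\dagger$ and then applies a QSVT $\arcsin$ polynomial to obtain $\tfrac{2Ht}{\pi}$ with $\mathcal{O}(\log(1/\epsilon_b))$ queries, whereas you propose a single LCU over powers $U^{j},U^{\dagger j}$ whose coefficients are the Fourier coefficients of a smoothly windowed identity function. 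Both yield the same per-copy cost; yours avoids QSVT and, via the observation $\norm{\widetilde{\mathbf{c}}-\mathbf{c}}_2=\sqrt{d^{-1}\Tr((\widetilde H-H)^2)}\le\norm{\widetilde H-H}$, even sharpens the paper's Proposition~\ref{prop: coeffErrUB} by a factor of $\sqrt{M}$ (which only improves a logarithmic term here), but it puts the burden of the quantitative approximation on you: mere $C^\infty$-smoothness of the window gives superpolynomial Fourier decay, which by itself does not guarantee degree $\mathrm{polylog}(1/\eta)$ --- you need a Gevrey-class window (the standard $e^{-1/x}$-type construction works and gives degree $\mathcal{O}(\log^2(1/\eta))$ with $\mathcal{O}(1)$ coefficient $\ell_1$-norm), so one extra line pinning this down would complete the argument. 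The remaining bookkeeping (constant $\alpha_{tH}$, the Chernoff-type repetition overhead for the heralded preparation, and the union bound over the $\mathcal{O}(n^k)$ decoding operators) matches the paper's treatment.
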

    Theorem~\ref{thm:querycomplexityIntro} is stated in more detail as Theorem~\ref{thm:querycomplexityUB} and is proven in Section~\ref{section: QueryComplexity}. Again, we have replaced the number of Hamiltonian terms, $M$, by its upper bound for a k-local Hamiltonian. Importantly, the method we use for generating our resource states produces pseudo-Choi states of the normalized Hamiltonian, rather than the true Hamiltonian, which effectively results in a quadratic improvement of the query complexity in terms of the normalization constant. This is reflected in Theorem~\ref{thm:querycomplexityIntro} by the quadratic scaling in terms of $1/t$, versus the quartic scaling in terms $\alpha$ from Theorem~\ref{thm:samplecomplexityIntro}.

    We also demonstrate that our learning algorithm is robust to two types of errors and in fact provides a witness to the fact that an error has occurred in the learning problem.  The first type of error that we show robustness to is under specification of the Hamiltonian, wherein the Hamiltonian contains further terms that are not specified in the Hamiltonian Learning Problem.  In this case, we show that the learning problem will provide the coefficients for the Hamiltonian terms present in the system and the presence of uncharacterized terms can be inferred from the sum of the coefficients observed relative to the coefficient sum inferred from the pseudo-Choi state.  The second form of robustness that we consider is to error in the Choi state, in which case we demonstrate that polynomially small noise can be tolerated in the Hamiltonian learning problem which indicates that the protocol is robust. These results are discussed in Section~\ref{section: robustness}.
    
    Our final main result, Theorem~\ref{thm: gradEstFullQueryIntro}, gives an upper bound on the number of queries to the time evolution unitary and its inverse needed to learn a Hamiltonian using the mean-estimation results of Huggins et al.~\cite{Huggins_nearly_optimal_estimating}. This process requires more quantum computations than the classical shadows approach (in which all post-measurement calculations are classical), but improves the query complexity quadratically with respect to the evolution time and the error in learning the coefficients.
    \begin{theorem}[Solving the Unitary Hamiltonian Learning Problem via Quantum Mean Estimation]\label{thm: gradEstFullQueryIntro}
    The number of queries to the time evolution unitary $e^{-iHt}$ and its inverse, for $t \in \mathcal{O}( 1/\norm{H})$, required to solve the Hamiltonian learning problem of Definition~\ref{def:HamLearningProblem} within error $\epsilon$ and failure probability at most $\delta$ for a k-local Hamiltonian is at most
    \begin{align*}
        N &\in \widetilde{\mathcal{O}}\left( \frac{n^k}{\epsilon t} \right)
    \end{align*}
\end{theorem}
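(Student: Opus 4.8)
The plan is to keep intact the decoding scheme underlying Theorems~\ref{thm:samplecomplexityIntro} and~\ref{thm:querycomplexityIntro} and only swap out the estimation engine: instead of extracting the decoding-operator expectation values with classical shadows, extract them with the quantum multiple-expectation-value (equivalently, gradient) estimation algorithm of Huggins et al.~\cite{Huggins_nearly_optimal_estimating}, which pins down $M$ bounded observables to additive error $\eta$ using $\widetilde{\mathcal{O}}(\sqrt{M}/\eta)$ state preparations --- a quadratic improvement over the $1/\eta^{2}$ cost of shadow tomography.

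First I would recall from Section~\ref{section:ChoiStateLearning} that each coefficient $c_{m}$ is encoded linearly in the pseudo-Choi state through the expectation value $e_{m}$ of a fixed Hermitian ``decoding operator'' $O_{m}$ of the form $H_{m}\otimes\Id_{A}$ dressed by a Pauli on the flag qubit $C$; such an $O_{m}$ has $\norm{O_{m}}=\mathcal{O}(1)$ and, being $(k{+}1)$-local, is cheap to block-encode. Working with the pseudo-Choi state of the \emph{normalized} Hamiltonian $Ht$ --- which, by Section~\ref{section:BlackboxLearning}, can be prepared (up to an error $\delta'$ that I make polynomially small at only polylogarithmic query cost) using $\widetilde{\mathcal{O}}(1)$ queries to $e^{-iHt}$ and $e^{iHt}$ --- the encoding reads $t\,c_{m}=\tfrac{\alpha'^{2}}{2}\,e_{m}$ with $\alpha'\le 1+\norm{Ht}=\mathcal{O}(1)$ since $t=\mathcal{O}(1/\norm{H})$. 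This $\mathcal{O}(1)$ normalization, in place of the $\alpha$ of~\eqref{eq: ChoiBrief}, is exactly what replaced the $\alpha^{4}$ of Theorem~\ref{thm:samplecomplexityIntro} by the $1/t^{2}$ of Theorem~\ref{thm:querycomplexityIntro}, and here it yields the clean $1/(\epsilon t)$.

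Next I would apply~\cite{Huggins_nearly_optimal_estimating} to the state-preparation unitary of the (approximate) pseudo-Choi state of $Ht$ and the $M\in\mathcal{O}(n^{k})$ operators $\{O_{m}\}$: for any target precision $\eta$ and failure probability $\delta$ it returns estimates $\hat{e}_{m}$ with $\abs{\hat{e}_{m}-e_{m}}\le\eta$ for all $m$ simultaneously, with probability at least $1-\delta$, using $\widetilde{\mathcal{O}}(\sqrt{M}/\eta)$ applications of that unitary, its inverse and controlled versions, with $\widetilde{\mathcal{O}}$ suppressing factors polylogarithmic in $M$ and $1/\delta$. Setting $\hat{c}_{m}:=\tfrac{\alpha'^{2}}{2t}\hat{e}_{m}$, the triangle inequality gives $\norm{\hat{\mathbf{c}}-\mathbf{c}}_{2}=\mathcal{O}\!\big(\sqrt{M}\,(\eta+\delta')/t\big)$, so choosing $\eta=\Theta(\epsilon t/\sqrt{M})$ and $\delta'$ of the same order yields $\norm{\hat{\mathbf{c}}-\mathbf{c}}_{2}\le\epsilon$. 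Since each invocation of the state-preparation unitary costs $\widetilde{\mathcal{O}}(1)$ queries to $e^{\pm iHt}$, the total is $\widetilde{\mathcal{O}}(\sqrt{M}/\eta)=\widetilde{\mathcal{O}}(M/(\epsilon t))=\widetilde{\mathcal{O}}(n^{k}/(\epsilon t))$.

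The main obstacle is making the two quadratic speedups --- in $\epsilon$ and in $t$ --- emerge simultaneously and cleanly. One must check that (i) the decoding operators are genuinely $\mathcal{O}(1)$-norm and efficiently implementable, so that the hypotheses of~\cite{Huggins_nearly_optimal_estimating} carry no hidden $n$- or $k$-dependent overhead; (ii) the factor $\sqrt{M}$ lost when converting the per-coefficient ($L^{\infty}$) guarantee of the estimation primitive into the $L^{2}$ guarantee of Definition~\ref{def:HamLearningProblem} combines with the $\sqrt{M}$ in the primitive's own cost to give exactly $M$ rather than $M^{3/2}$; and (iii) the error in synthesizing the pseudo-Choi state of $Ht$ from the time-evolution oracle is controlled to this same precision with only logarithmic query overhead. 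Everything else is a routine substitution of the quantum estimation subroutine for classical shadows in the proof of Theorem~\ref{thm:querycomplexityIntro}.
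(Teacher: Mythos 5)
Your overall strategy is the one the paper follows: prepare pseudo-Choi states of the rescaled Hamiltonian via the block-encoding of Section~\ref{section:BlackboxLearning}, feed the $(k{+}1)$-local decoding operators $O_l = (H_l\otimes X_C)/2$ into the multi-observable mean-estimation routine of Huggins et al., pay a $\sqrt{M}$ for the $L^\infty\!\to\!L^2$ conversion on top of the $\sqrt{M}$ in the primitive's cost, and exploit $t\in\mathcal{O}(1/\norm{H})$ to keep the normalization constant $\mathcal{O}(1)$, arriving at $\widetilde{\mathcal{O}}(M/(\epsilon t))$. The error-budget bookkeeping ($\eta=\Theta(\epsilon t/\sqrt{M})$, plus a state-preparation error controlled at polylogarithmic cost) also matches the paper's proof of Theorem~\ref{thm: gradEstFullQuery}.

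There is, however, one genuine gap: you invoke "the state-preparation unitary of the (approximate) pseudo-Choi state," but the preparation circuit of Section~\ref{section: GenerateChoiState} (Figure~\ref{figure: ProduceChoiState}) is not unitary --- it produces the pseudo-Choi state only upon a heralding measurement of the block-encoding register $B$, succeeding with probability $\gamma^2/2\ge 1/2$. The algorithm of~\cite{Huggins_nearly_optimal_estimating} makes coherent queries to $U_\rho$, $U_\rho^\dagger$, and controlled versions thereof, so post-selection is not an option; one must first replace the measurement by a coherent procedure. The paper does this with the fixed-point amplitude amplification of Yoder et al., which, because the initial overlap is at least $1/2$, yields a state within trace distance $D$ of the target using only $\mathcal{O}(\log(1/D))$ queries to the preparation circuit and its inverse; it then propagates $D$ into the coefficient estimates via Von Neumann's trace inequality (contributing a term $\sqrt{M}\Delta\gamma^2 D$ to the $L^2$ error) and allots it a third of the error budget. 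Your item (iii) gestures at the right magnitude of overhead, but without identifying this coherization step the application of the mean-estimation primitive does not go through as written. Once FPAA (or an equivalent coherent amplification) is inserted, the rest of your argument is a faithful reconstruction of the paper's proof.
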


Theorem~\ref{thm: gradEstFullQueryIntro} is restated for a Hamiltonian with $M$ terms and proven as Theorem~\ref{thm: gradEstFullQuery} in Section~\ref{section:UnitaryConstruction}. Although our input model requires the backwards time evolution, this is the best query complexity we are aware of for learning the Hamiltonian coefficients to error $\epsilon$ in the 2-norm (see Section~\ref{section:Applications} for further discussion).

One caveat in that should be mentioned is that our approach requires short evolution times on the order of $1/\norm{H}$ in order to prepare pseudo-Choi states from $U$ and $U^\dagger$, preventing us from using long evolution times to reduce the query complexity. Furthermore, a good estimate of $\norm{H}$ is required in order to achieve the query complexity in Theorems~\ref{thm:querycomplexityIntro} and~\ref{thm: gradEstFullQueryIntro}. Without this knowledge, the norm can be overestimated by the number of Hamiltonian terms (leading to shorter-than-necessary evolution times) to ensure the algorithm doesn't break down. This, for example, would lead to an additional factor of $n^k$ in the query complexity in Theorem~\ref{thm: gradEstFullQueryIntro}.


\section{Learning the Hamiltonian from Pseudo-Choi States}\label{section:ChoiStateLearning}

This section is organized in the following way. We begin by explaining the first of the two learning resources (the pseudo-Choi state of the Hamiltonian, see Definition~\ref{def:pseudoChoi}) that will be considered in our model of Hamiltonian learning. Next, Section~\ref{subsection:shadows} summarizes the parts of the classical shadows procedure (introduced in~\cite{Huang_2020}) which are necessary to understand our approach. Finally, Section~\ref{subsection:LearnCoeffs} contains the two main results of this section. In particular: Algorithm~\ref{algorithm: FindCoeffClifford}, which describes how one would learn the Hamiltonian from multiple copies of the pseudo-Choi state through the use of classical shadows tomography, and Theorem~\ref{thm:samplecomplexityUB}, which describes the sample complexity of the approach.

Note that the use of classical shadows is not strictly necessary in the learning procedure; any shadow tomography protocol - that is, any protocol that allows one to estimate expectation values of operators for a given quantum state - can be used instead. The main benefits of classical shadows here are two-fold. First, for the operators we are interested in, the sample complexity has favourable scaling in terms of the number of Hamiltonian terms as a result of the shadow norm of these operators being upper bounded by a constant. Second, once all measurements are complete, the remainder of the algorithm is purely classical. While the main approach we focus on uses classical shadows, Section~\ref{section: QME} gives an example of an alternative approach.

Our input state, the ``pseudo-Choi state'' of the Hamiltonian, consists of three subsystems. It contains the subsystem of interest, an ancillary subsystem of the same size, and an additional single-qubit subsystem that is used to ensure that the norm of the Hamiltonian is not lost in the pseudo-Choi state.  The formal definition is given below. Note that while lowercase subscripts are used as indices throughout this paper, uppercase subscripts identify the system that corresponds to an operator or state. For example, the ket $\ket{\Phi_d}_{SA}$ resides in the Hilbert space $\mathcal{H}_{S}\otimes \mathcal{H}_A$.

\begin{definition}[Pseudo-Choi State]\label{def:pseudoChoi}
    Let $\mathcal{H}_S$ be the $d=2^n$ dimensional Hilbert space upon which the map containing a dimensionless Hamiltonian $H$ acts. Furthermore, let $\mathcal{H}_A$ be the Hilbert space of an ancilla of the same size, and let $\mathcal{H}_C$ be the 2 dimensional Hilbert space of another ancilla qubit.
    
    The pseudo-Choi state of the dimensionless Hamiltonian $H$ resides on the Hilbert space $\mathcal{H}_S \otimes \mathcal{H}_A \otimes \mathcal{H}_C$ and is defined as
    \begin{align}
        \ket{\psi_{c}} & := \frac{({H} \otimes \Id_A)\ket{\Phi_d}_{SA}\ket{0}_C + \ket{\Phi_d}_{SA}\ket{1}_C}{\alpha}\label{eq:choistate}
    \end{align}
    where $\alpha$ is a normalization constant of the form
    \begin{align}
        \alpha &:= \sqrt{\bra{\Phi_d}_{SA}(H^2\otimes \Id_A) \ket{\Phi_d}_{SA} + 1}
    \end{align}
    and
    \begin{align}
        \ket{\Phi_d}_{SA} = \frac{1}{\sqrt{d}} \sum_{i=0}^{d-1}\ket{i}_S\otimes\ket{i}_A
    \end{align}
    is the maximally entangled state between $\mathcal{H}_S$ and $\mathcal{H}_A$.
\end{definition}

Expanding $\alpha$ using the representation of the Hamiltonian from Definition~\ref{def:HamLearningProblem}, it is easily shown (see Appendix~\ref{section:randomsection2}) that
\begin{align}
    \alpha & = \sqrt{\norm{\textbf{c}}_2^2 + 1}\label{eq:alpha}
\end{align}

Note that the right-hand side of the pseudo-Choi state is important, since it acts as a reference and allows us to learn the overall sign and norm of the Hamiltonian (which would otherwise show up as a global phase if the reference term was not there).




\subsection{Shadow Tomography of the pseudo-Choi State \label{subsection:shadows}}
With the pseudo-Choi state in hand, we can now begin to learn the coefficients of the Hamiltonian through the use of shadow tomography. The idea of shadow tomography is to construct a model that can predict the expectation values of a particular set of observables with low error and high probability over the set.  Unlike conventional tomography, shadow tomography does not aim to provide a specific density operator, which means that the output expectation values need not be precisely consistent with a density operator, nor does the model need to provide a high fidelity estimate of the quantum state.  Instead, we require only that it predicts the outcomes of a set of observables.  Fortunately, this is all that we need to reconstruct the Hamiltonian and so this tool is precisely what we need to address our problem.  A formal statement of the shadow tomography problem is given below.

\begin{problem} [Shadow Tomography Problem]{\label{Prob:ShadTom}

Given an unknown quantum state, $\rho$, of dimension $d'=2^{\eta}$, and a set of $L$ observables, $\{E_i\}$, estimate $\text{Tr}(\rho E_i)$ to within error $\epsilon_s$ for all $1 \leq i \leq L$, with success probability $1-\delta_s$
using as few copies of $\rho$ as possible. \cite{aaronson2018shadow}
}
\end{problem}

This problem can be solved classically in in settings where the operators in question have efficient classical representations, such as Pauli operators. This approach, referred to as classical shadows tomography \cite{Huang_2020}, involves using a small number of copies of $\rho$ to create a classical representation of $\rho$. Section~\ref{subsection:LearnCoeffs} details how classical shadows can be used in conjunction with the pseudo-Choi state to learn the Hamiltonian coefficients, but to make it clear to the reader where some of the objects in later sections come from, we begin with a brief overview of classical shadows tomography.


The version of classical shadows we consider here involves applying Clifford operators from the group $\text{Cl}(2^\eta)$ to an $\eta$ qubit state $\rho$. The procedure consists of $N$ rounds in which a Clifford operator $U_i$ is sampled uniformly at random and applied to $\rho$, followed by a measurement of each qubit of the resulting state in the computational basis. After each measurement, a ``snapshot'' of the form $\sigma_i = U_i^\dagger \ket{b_i}\bra{b_i}U_i$ is stored in classical memory. Here $U_i$ is the randomly sampled Clifford unitary applied to $\rho$ during the $i^{th}$ round, and $\ket{b_i}$ is the bit string of length $\eta$ that encodes the measurement outcomes of the $i^{th}$ round. The $j^{th}$ bit of $\ket{b_i}$ is the outcome of measuring the $j^{th}$ qubit in the computational basis (i.e. if a bit in $\ket{b_i}$ is zero, then the outcome of measuring the corresponding qubit was $\ket{0}$). These snapshots are stabilizer states, meaning they can be obtained from $\ket{0}^{\otimes \eta}$ using only Clifford operations, and as a result $O\left(\eta^2\right)$ classical bits are needed to store each one in memory \cite{Aaronson_2004}.  In total, the procedure uses $N$ copies of $\rho$ to generate $N$ classical snapshots.

Note that although $\ket{b_i}$ is a classical bit string, we represent it using Dirac notation due to the simplicity of the notation, and to be consistent with the original notation used to describe classical shadows in \cite{Huang_2020}. The expectation value of these snapshots over all choices of Clifford unitaries and measurement outcomes can be viewed as a quantum channel. In particular, since the Clifford group forms a unitary 3-design, the process of randomly sampling Clifford operations corresponds to the following depolarizing channel~\cite{Huang_2020}:
\begin{align}
    D_{1/(2^{\eta} + 1)}(\rho) &= \frac{\rho}{2^{\eta} + 1} + \left(1 - \frac{1}{(2^{\eta} + 1)}\right)\frac{{\rm Tr}(\rho)\Id}{2^{\eta} } \nonumber \\
    &= \frac{\rho + {\rm Tr}(\rho)\Id}{2^{\eta} + 1}
\end{align}
This channel can be inverted in the following way:
\begin{align}
    D_{1/(2^{\eta} + 1)}^{-1}(\rho) &= D_{2^{\eta} + 1}(\rho) \nonumber \\
    &= (2^{\eta} + 1)\rho - {\rm Tr}(\rho)\Id \label{eq: inverseDepolarizingChannel}
\end{align}
where the first line uses the fact that $D_a(D_b(\rho)) = D_{ab}(\rho)$ (this is easily proven using the definition of a depolarizing channel), and the second line follows directly from the definition of a depolarizing channel.

The classical shadow of $\rho$ is then defined as the collection of $N$ snapshots after applying this inverse channel to each of them, as in Definition~\ref{Def:ClassicalShadowClifford}.
\begin{definition}[Classical Shadow (From Clifford Measurements)]\label{Def:ClassicalShadowClifford}
    Given $N$ copies of an $\eta$-qubit quantum state $\rho$, the classical shadows procedure based on random Clifford measurements returns a classical shadow of $\rho$ which is of the form
    \begin{align}
        \hat{\rho} &= \{\hat{\rho}_i | i \in \mathbb{Z}_N\},
    \end{align}
    where
    \begin{align}
        \hat{\rho}_i = (2^{\eta} + 1)(U_i^\dagger \ket{b_i}\bra{b_i} U_i) -\Id
    \end{align}
    is the result of applying the inverse of the depolarizing channel (equation~\eqref{eq: inverseDepolarizingChannel}) to the $i^{th}$ classical snapshot.
\end{definition}
As shown by Huang et al. \cite{Huang_2020}, the classical shadow can be used to efficiently predict expectation values of operators on the state $\rho$, as required by the Shadow Tomography Problem. 

Note that this procedure requires randomly sampling Clifford circuits, which is not as easy as sampling single-qubit operators such as Pauli operators. However, recent work has shown that this process can still be done efficiently. In particular, Bravyi and Maslov~\cite{Bravyi_2021} and van den Berg~\cite{vandenBerg_2021} have proposed algorithms for sampling Clifford circuits uniformly at random with time complexity $\mathcal{O}(n^2)$ and circuit depth $\mathcal{\widetilde{\mathcal{O}}}(n)$. The latter algorithm outputs a circuit directly and allows for parallelism, which effectively reduces the time complexity to $\mathcal{O}(n)$, but requires a quantum computer with fully connected topology to achieve the claimed circuit depth~\cite{vandenBerg_2021}. Meanwhile, the former approach achieves a two-qubit gate depth of $9n$ on the much simpler Linear Nearest Neighbour architecture~\cite{Bravyi_2021}.

Another variation of the classical shadows procedure replaces the random Clifford operators from the group $\text{Cl}(2^\eta)$ with tensor products of single-qubit Clifford operators (i.e. operators from the group $\text{Cl}(2)^{\otimes \eta}$). This process is referred to as the ``random Pauli measurement'' version of classical shadows, because the single-qubit Clifford group is generated by the Hadamard and phase gates, and these operations enact transformations between the Pauli X, Y, and Z bases. Therefore, allowing only single-qubit Clifford operations and measurement in the computational ($Z$) basis is equivalent to only allowing measurements in the Pauli X, Y, and Z bases. In the next section, we focus on the use of the Clifford measurement version of classical shadows, and leave discussion about the use of the Pauli version to Appendix~\ref{App: PauliShadows}.

\subsection{Efficient Estimation of the Hamiltonian Coefficients (Clifford Shadows)\label{subsection:LearnCoeffs}}

In this section we discuss using the Clifford measurement version of classical shadows to extract the Hamiltonian coefficients from the pseudo-Choi state. We refer to the inference of these coefficients as the Pseudo-Choi Hamiltonian Learning problem (PC HLP). Note that the Pauli measurement flavour of classical shadows can also be used, and this approach is included in Appendix~\ref{App: PauliShadows}. While the Pauli version only requires single-qubit operations and can reduce the computational complexity of the classical post-processing, the Clifford version has the better sample complexity for k-local systems. However, although the gap between the two sample complexity upper bounds grows exponentially with k, the benefits of the Pauli approach may outweigh the increased sample complexity for small values of k.

Since classical shadows tomography allows for efficient prediction of linear functions of operators, we can solve the PC HLP by choosing a specific set of operators whose expectation values correspond to the Hamiltonian coefficients. For Clifford-based shadows, these ``decoding operators'' are given in Definition~\ref{Def: CliffDecodingOperators}.

\begin{definition}[Decoding Operators] \label{Def: CliffDecodingOperators}
    Let the set of decoding operators be defined as
    \begin{align}
        \boldsymbol{O} & := \{ O_l | l \in \mathbb{Z}_{M} \} 
    \end{align}
    where
    \begin{align}
         O_l & := \left(H_l \otimes \Id\right) \ket{\Phi_d}_{SA} \bra{\Phi_d}_{SA} \otimes \ket{0}_C\bra{1}_C
    \end{align}
    and $H_l$ is one of the $M$ Hamiltonian terms. 
    
    Furthermore, we define
    \begin{align}
        O_{\alpha} & := \ket{\Phi_d}_{SA} \bra{\Phi_d}_{SA} \otimes \ket{1}_C\bra{1}_C
    \end{align}
\end{definition}

Recalling Definition~\ref{def:pseudoChoi}, the density matrix representing the pseudo-Choi state is given by
\begin{align}
    \rho_c  :=& \ket{\psi_c} \bra{\psi_c} \label{eq:ChoiDensityOp} \nonumber \\
    \begin{split}
         = & \frac{1}{\alpha^2}\Big((H \otimes \Id_{A}) \ket{\Phi_d}_{SA}\bra{\Phi_d}_{SA}(H \otimes \Id_{A}) \otimes \ket{0}_C\bra{0}_C \\
        & + (H \otimes \Id_{A}) \ket{\Phi_d}_{SA}\bra{\Phi_d}_{SA}\otimes \ket{0}_C\bra{1}_C  \\
        & + \ket{\Phi_d}_{SA}\bra{\Phi_d}_{SA} (H \otimes \Id_{A}) \otimes \ket{1}_C\bra{0}_C\\
        & +  \ket{\Phi_d}_{SA}\bra{\Phi_d}_{SA}\otimes \ket{1}_C\bra{1}_C \Big)
    \end{split}
\end{align}

We show below in Proposition~\ref{proposition: ComputeHamCoeffs} that the expectation values of the decoding operators acting on the pseudo-Choi state give the Hamiltonian coefficients up to a constant factor. Note that the Hamiltonian terms must be known in order to construct the decoding operators. As this work focuses on k-local Hamiltonians, this is not a problem, since there are at most $\mathcal{O}(n^k)$ possible terms.

\begin{proposition}[Computing the Hamiltonian Coefficients\label{proposition: ComputeHamCoeffs}]
If $\rho_c$ is the pseudo-Choi state \eqref{eq:ChoiDensityOp}, and $O_l$ is a decoding operator as defined in Definition~\ref{Def: CliffDecodingOperators} then 
\begin{align}
    {\rm Tr}(\rho_c O_{l}) = \frac{c_l}{\alpha^2},
\end{align}
where $c_l$ is the $l^{th}$ Hamiltonian coefficient, and $\alpha$ is the normalization constant of the pseudo-Choi state \eqref{eq:alpha}. Furthermore,
\begin{align}
    {\rm Tr}(\rho_c O_{\alpha}) = \frac{1}{\alpha^2}
\end{align}

\end{proposition}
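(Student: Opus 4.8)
The plan is to compute the two traces directly by plugging the explicit form of $\rho_c$ from equation~\eqref{eq:ChoiDensityOp} into $\Tr(\rho_c O_l)$ and $\Tr(\rho_c O_\alpha)$, and using orthogonality on the $C$ register to kill all but one term. First I would observe that $O_l = (H_l\otimes\Id_A)\ket{\Phi_d}_{SA}\bra{\Phi_d}_{SA}\otimes\ket{0}_C\bra{1}_C$ has support only on the $\ket{0}_C\bra{1}_C$ block of the $C$ register, so when multiplied against the four-term expansion of $\rho_c$, the cyclicity of the trace together with $\langle 0|1\rangle = 0$, $\langle 1|0\rangle = 0$ annihilates three of the four terms. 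The only surviving contribution comes from pairing the $\ket{0}_C\bra{1}_C$ factor of $O_l$ with the $\ket{1}_C\bra{0}_C$ factor that accompanies $\ket{\Phi_d}_{SA}\bra{\Phi_d}_{SA}(H\otimes\Id_A)$ in the third line of~\eqref{eq:ChoiDensityOp} (after the cyclic shift the $C$-register trace becomes $\Tr_C(\ket{0}\bra{1}\,\ket{1}\bra{0}) = 1$).

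The remaining work is then a computation purely on $\mathcal{H}_S\otimes\mathcal{H}_A$: I need to evaluate
\begin{align*}
\frac{1}{\alpha^2}\Tr_{SA}\!\left[(H_l\otimes\Id_A)\ket{\Phi_d}\bra{\Phi_d}\,\ket{\Phi_d}\bra{\Phi_d}(H\otimes\Id_A)\right] = \frac{1}{\alpha^2}\bra{\Phi_d}(H\otimes\Id_A)(H_l\otimes\Id_A)\ket{\Phi_d},
\end{align*}
where I used $\langle\Phi_d|\Phi_d\rangle = 1$ and cyclicity. Now I expand $H = \sum_m c_m H_m$ and invoke the standard maximally-entangled-state identity $\bra{\Phi_d}(A\otimes\Id_A)\ket{\Phi_d} = \frac{1}{d}\Tr(A)$ for any operator $A$ on $\mathcal{H}_S$; applied to $A = H_m H_l$ this gives $\frac{1}{d}\Tr(H_m H_l) = \langle H_m, H_l\rangle = \delta_{ml}$ by the orthonormality of the operator basis assumed in Definition~\ref{def:HamLearningProblem}. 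Hence $\bra{\Phi_d}(H H_l\otimes\Id_A)\ket{\Phi_d} = \sum_m c_m\delta_{ml} = c_l$, yielding $\Tr(\rho_c O_l) = c_l/\alpha^2$.

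For the second identity the argument is even shorter: $O_\alpha = \ket{\Phi_d}\bra{\Phi_d}\otimes\ket{1}_C\bra{1}_C$ picks out only the fourth ($\ket{1}_C\bra{1}_C$) block of $\rho_c$, and that block is exactly $\frac{1}{\alpha^2}\ket{\Phi_d}\bra{\Phi_d}\otimes\ket{1}_C\bra{1}_C$, so $\Tr(\rho_c O_\alpha) = \frac{1}{\alpha^2}\Tr_{SA}(\ket{\Phi_d}\bra{\Phi_d})\cdot\Tr_C(\ket{1}\bra{1}) = \frac{1}{\alpha^2}$. I do not anticipate a genuine obstacle here — the only point requiring a little care is bookkeeping: tracking which of the four $\rho_c$ terms survives against each decoding operator and making sure the $C$-register inner products are taken in the correct order under the cyclic shift. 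The one nontrivial input is the identity $\bra{\Phi_d}(A\otimes\Id)\ket{\Phi_d} = d^{-1}\Tr(A)$, which I would either cite as standard or verify in one line from the definition $\ket{\Phi_d} = d^{-1/2}\sum_i\ket{i}\ket{i}$.
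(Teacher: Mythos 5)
Your proposal is correct and follows essentially the same route as the paper's proof: reduce $\Tr(\rho_c O_l)$ to $\alpha^{-2}\bra{\Phi_d}(H H_l\otimes\Id)\ket{\Phi_d}$ by orthogonality on the $C$ register, then apply $\bra{\Phi_d}(A\otimes\Id)\ket{\Phi_d}=d^{-1}\Tr(A)$ together with the orthonormality of the operator basis to extract $c_l$, with the $O_\alpha$ identity read off directly from the $\ket{1}_C\bra{1}_C$ block. The only difference is that you make the term-by-term bookkeeping on the $C$ register explicit where the paper simply asserts it, which is a harmless elaboration.
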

\begin{proof}

From the definition of the decoding operators and the pseudo-Choi state, it is easy to see that
\begin{align}
    {\rm Tr}(\rho_c O_l)  &= \frac{\bra{\Phi_d}_{SA}(H \otimes \Id) \left(H_l \otimes \Id\right) \ket{\Phi_d}_{SA} }{\alpha^2} \label{eq:randomlabel5}
\end{align}

Expanding the above using the representation of the Hamiltonian from Definition~\ref{def:HamLearningProblem} gives
\begin{align}
    {\rm Tr}(\rho_c O_l) & = \frac{\bra{\Phi_d}_{SA} \left(\sum_{m} c_m H_m \otimes \Id\right) \left(H_l \otimes \Id \right) \ket{\Phi_d}_{SA}}{\alpha^2} \nonumber \\
    & = \frac{1}{\alpha^2}  \sum_{m} c_m \bra{\Phi_d}_{SA} \left( H_m \otimes \Id\right) \left(H_l \otimes \Id \right) \ket{\Phi_d}_{SA}\label{eq:randomlabel6}
\end{align}

We can now use the definition of $\ket{\Phi_d}_{SA}$ to expand the term inside the summation:
\begin{align}
    \bra{\Phi_d}_{SA} \left(H_m \otimes \Id \right)  \left(H_l \otimes \Id\right) \ket{\Phi_d}_{SA} &= \left(\frac{1}{\sqrt{d}}\sum_i \bra{i}\bra{i}\right)(H_m \otimes \Id)(H_l \otimes \Id)\left(\frac{1}{\sqrt{d}}\sum_j \ket{j}\ket{j}\right) \nonumber \\
    & = \frac{1}{d}\sum_i \sum_j \bra{i}H_m H_l\ket{j} \bra{i}\ket{j} \nonumber \\
    & = \frac{1}{d}\sum_i \bra{i}H_m H_l\ket{i} \nonumber \\
    & = \frac{\text{Tr}(H_m H_l)}{d}\label{eq:randomlabel7}
\end{align}

Finally, we substitute \eqref{eq:randomlabel7} back in to \eqref{eq:randomlabel6} to get
\begin{align}
    {\rm Tr}(\rho_c O_l) & = \frac{1}{\alpha^2}\sum_{m} c_m  \frac{\text{Tr}(H_m H_l)}{d} \nonumber \\
    & = \frac{1}{\alpha^2}\sum_{m} c_m  \delta_{ma} \nonumber \\
    & = \frac{c_a}{\alpha^2}
\end{align}

The second claim (${\rm Tr}(\rho_c O_{\alpha}) = \frac{1}{\alpha^2}$) follows immediately from the definition of the pseudo-Choi state \eqref{eq:ChoiDensityOp} and the definition of $O_{\alpha}$. 

\end{proof}


We seek to approximate the expectation values of Proposition~\ref{proposition: ComputeHamCoeffs} using the classical shadows approach introduced in \cite{Huang_2020}. However, the classical shadows procedure requires Hermitian operators, which $O_l$ are not. Instead, we can use the following pair of Hermitian operators:
\begin{align}
    O_l^{+} &:= O_l + \left(O_l\right)^\dagger\\
    O_l^{-} &:= iO_l - i\left(O_l\right)^\dagger
\end{align}

These two operators are Hermitian, and can be inverted to find that
\begin{align}
    \frac{O_l^{+} - iO_l^{-}}{2} & = O_l
\end{align}

Therefore, the expectation value of $O_l$ can be predicted by using classical shadows to estimate the expectation values of $O_l^{+}$ and $O_l^{-}$:
\begin{align}
    \avg{O_l}_s &= \frac{\avg{O_l^{+}}_s - \avg{iO_l^{-}}_s}{2} \label{eq:O_l expectation}
\end{align}
where the subscript $s$ denotes that these are not the true expectation values, but estimates generated by the classical shadows procedure.


We now consider using the Clifford measurement flavor of classical shadows for the pseudo-Choi state. Since the pseudo-Choi state is on $2n+1$ qubits, its classical shadow is made up of $N$ components of the form
\begin{align}
    \hat{\rho}_i &= \left( 2^{2n+1} + 1\right) U^\dagger_i \ket{b_i}\bra{b_i} U_i - \Id
\end{align}

However, it is not necessary to store the shadow in this explicit form. Instead, it is helpful to use the form of the classical shadow specified in Definition~\ref{Def: compressedShadowCliff}, and introduce the $\Id$ term and the factor of $2^{2n+1} + 1$ as needed.
\begin{definition}\label{Def: compressedShadowCliff}
    If the random Clifford measurement version of the classical shadows procedure is performed, the resulting classical shadow of size N can be stored in the following way:
    \begin{align}
        \hat{\rho} &= \{\ket{\hat{\rho}_i} | i \in \mathbb{Z}_N\},
    \end{align}
    where
    \begin{align}
        \ket{\hat{\rho}_i} &= U_i^\dagger \ket{b_i}
    \end{align}
\end{definition}

The full procedure for estimating the vector of Hamiltonian coefficients, \textbf{c}, is given in Algorithm~\ref{algorithm: FindCoeffClifford}.  The algorithm also invokes a further subroutine (Algorithm~\ref{algorithm: ComputeExpectation}) for determining the expectations using classical shadows. Algorithms~\ref{algorithm: FindCoeffClifford} and~\ref{algorithm: ComputeExpectation} do not use the explicit form of the classical shadow given by Definition~\ref{Def:ClassicalShadowClifford}, but instead use the ``compressed'' version given by Definition~\ref{Def: compressedShadowCliff}.

Previous work by Aaronson and Gottesman provides an algorithm for simulating Clifford circuits classically in polynomial time, as well as an algorithm for calculating the inner product between two stabilizer states in $\mathcal{O}\left(n^3\right)$ time \cite{Aaronson_2004}. For future reference, we shall refer to this second algorithm (see the end of Section III in \cite{Aaronson_2004}) as the Stabilizer Inner Product (SIP) Algorithm. Algorithm~\ref{algorithm: FindCoeffClifford} invokes the former of these algorithms, while Algorithm~\ref{algorithm: ComputeExpectation} leverages the SIP Algorithm to compute expectation values in polynomial time.

\begin{algorithm}[t!]
\caption{$\mathtt{FindCoeffClifford}(\rho^{\otimes N}_c, \boldsymbol{H}, n, N)$: Determine the Vector of Hamiltonian Coefficients from $\rho_c$ using random Clifford measurement based classical shadows}\label{algorithm: FindCoeffClifford}
\textbf{Input:} 

$\rho_c^{\otimes N} $: Collection of $N$ pseudo-Choi states on $2n + 1$ qubits, each on Hilbert space $\mathcal{H}_S \otimes \mathcal{H}_A \otimes \mathcal{H}_C$

$\boldsymbol{H}$: Set of M k-local Hamiltonian terms $H_l$ whose coefficients $c_m$ are desired

\Comment{Note: $\rho^{\otimes N}_c$ is a quantum state, whereas $\boldsymbol{H}$ is a set of classical operators}

$n$: Number of qubits in the system on Hilbert space $\mathcal{H}_S$

$N$: Number of pseudo-Choi states

\begin{enumerate}
    \item \textbf{Initialize the array of Hamiltonian coefficients}
    
    Let $\hat{\textbf{c}} \gets \begin{bmatrix} 0, 0, ..., 0 \end{bmatrix}_{1\times M}$ and let $\hat{\textbf{c}}_l$ denote the $l^{th}$ element of $\hat{\textbf{c}}$
    
    \item \textbf{Generate and store the classical shadow}
    
    Use $N$ copies of $\rho_c$ to generate $\hat{\rho}$, a classical shadow of size $N$ of $\rho_c$, using the classical shadows procedure based on random Clifford measurements from \cite{Huang_2020}.
    
    Store $\hat{\rho}$ in classical memory (as per Definition~\ref{Def: compressedShadowCliff}) using the tableau algorithm introduced in \cite{Aaronson_2004} to compute and store each stabilizer state $\ket{\rho_i} = U^\dagger_i \ket{b_i}$
    
    \item \textbf{Generate an estimate of ${\rm Tr}(\rho O_{l}) = \frac{c_l}{\alpha^2}$ for each corresponding Hamiltonian term}
 
    $\mathtt{For}$ $l$ $\mathtt{from}$ $0$ $\mathtt{to}$ $M-1$:

        \qquad $\hat{\textbf{c}}_l \gets \mathtt{ComputeExpectation}(\hat{\rho}, H_l, N, l, n)$ \Comment{Estimate of $\frac{c_l}{\alpha^2}$}

    \item \textbf{Generate an estimate of ${\rm Tr}(\rho O_{\alpha}) = \frac{1}{\alpha^2}$}
    
    $\hat{o}_{\alpha} \gets \mathtt{ComputeExpectation}(\hat{\rho}, O_{\alpha}, N, -1, n)$ \Comment{Estimate of $\frac{1}{\alpha^2}$}
    
    \item \textbf{Remove the factor of $\frac{1}{\alpha^2}$ from the Hamiltonian coefficients}
    
    Let $\hat{\textbf{c}} \gets \frac{\hat{\textbf{c}}}{\hat{o}_{\alpha}}$ 
    
    \item \textbf{Output the vector of Hamiltonian coefficients}
    
    Return $\hat{\textbf{c}} = 
    \begin{bmatrix} \hat{c}_0, \hat{c}_1, ..., \hat{c}_{M-1} \end{bmatrix}$
\end{enumerate}
\end{algorithm}

Note that in order for the output of Algorithm~\ref{algorithm: FindCoeffClifford} to solve the PC HLP, the number of samples of $\rho_c$ must be at large enough for the error and success probability to satisfy the requirements of the Hamiltonian Learning Problem~\ref{def:HamLearningProblem}. This sufficient value is given later as Theorem~\ref{thm:samplecomplexityUB}.

To go along with Algorithm~\ref{algorithm: ComputeExpectation}, it is also helpful to understand how the expectation values involved are explicitly computed using a classical shadow. Proposition~\ref{Prop: EfficientTrace} gives expressions for the three expectation values of interest. These expressions contain several inner products which can be efficiently computed using the SIP algorithm of \cite{Aaronson_2004}.
\begin{proposition}[] \label{Prop: EfficientTrace}

Let $\hat{\rho}_i = \left( 2^{2n+1} + 1\right) U^\dagger_i \ket{b_i}\bra{b_i} U_i - \Id$ be a classical shadow generated from the random Clifford measurement version of the classical shadows procedure, and consider the operators
\begin{align*}
    O_l^{+} &= O_l + \left(O_l\right)^\dagger\\
    O_l^{-} &= iO_l - i\left(O_l\right)^\dagger
\end{align*}
where $O_l$ is a decoding operator as in Definition~\ref{Def: CliffDecodingOperators}. 

We then have that

\begin{align}
	\begin{split}
		{\rm Tr}\left(\hat{\rho_i} O_l^+\right)=&  \left( 2^{2n+1} + 1 \right) \Big(\bra{b_i} U_i  \left(H_l \otimes \Id_A \right) \ket{\Phi_d}_{SA} \ket{0}_C\Big) \Big( \bra{\Phi_d}_{SA}  \bra{1}_C    U_i^\dagger \ket{b_i}\Big) \\
		&+  \left( 2^{2n+1} + 1 \right)\Big( \bra{b_i} U_i  \ket{\Phi_d}_{SA} \ket{1}_C\Big) \Big( \bra{\Phi_d}_{SA}\left(H_l \otimes \Id_A \right)  \bra{0}_C    U_i^\dagger \ket{b_i}\Big),
	\end{split}\label{eq:EfficientTrace1}
\end{align}
and similarly,
\begin{align}
    \begin{split}
		{\rm Tr}\left(\hat{\rho_i} O_l^-\right)=&  i\left( 2^{2n+1} + 1 \right) \Big(\bra{b_i} U_i  \left(H_l \otimes \Id_A \right) \ket{\Phi_d}_{SA} \ket{0}_C\Big) \Big( \bra{\Phi_d}_{SA}  \bra{1}_C    U_i^\dagger \ket{b_i}\Big) \\
		&-i  \left( 2^{2n+1} + 1 \right)\Big( \bra{b_i} U_i  \ket{\Phi_d}_{SA} \ket{1}_C\Big) \Big( \bra{\Phi_d}_{SA}\left(H_l \otimes \Id_A \right)   \bra{0}_C    U_i^\dagger \ket{b_i}\Big) \\
	\end{split}\label{eq:EfficientTrace2}
\end{align}
Furthermore, 
\begin{align}
    {\rm Tr}( \hat{\rho}_i O_\alpha) & = \left( 2^{2n + 1} + 1 \right) \abs{\bra{b_i}U_i \ket{\phi_d}_{SA}\ket{1}_C}^2 - 1 \label{eq:EfficientTrace3}
\end{align}

Note that $U_i^\dagger \ket{b_i}$ is on the Hilbert space $\mathcal{H}_S \otimes\mathcal{H}_A \otimes\mathcal{H}_C$, so all the inner products above give scalar values.
\end{proposition}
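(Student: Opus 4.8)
The plan is to reduce each trace to products of overlaps by writing every decoding operator as an outer product of two fixed vectors and then expanding against the two pieces of the snapshot $\hat\rho_i=(2^{2n+1}+1)U_i^\dagger\ket{b_i}\bra{b_i}U_i-\Id$. First I would set up the shorthand $\ket{\chi_l}:=(H_l\otimes\Id_A)\ket{\Phi_d}_{SA}\ket{0}_C$ and $\ket{\phi}:=\ket{\Phi_d}_{SA}\ket{1}_C$, so that the decoding operators of Definition~\ref{Def: CliffDecodingOperators} become $O_l=\ket{\chi_l}\bra{\phi}$, hence $O_l^{+}=\ket{\chi_l}\bra{\phi}+\ket{\phi}\bra{\chi_l}$, $O_l^{-}=i\ket{\chi_l}\bra{\phi}-i\ket{\phi}\bra{\chi_l}$, and $O_\alpha=\ket{\phi}\bra{\phi}$. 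Since $H_l$ is Hermitian, $\bra{\chi_l}=\bra{\Phi_d}_{SA}\bra{0}_C(H_l\otimes\Id_A)$, which is what lets the factor $H_l$ migrate onto the conjugated side in the formulas.

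Next, for each of $O_l^{+},O_l^{-},O_\alpha$ I would expand ${\rm Tr}(\hat\rho_i\,\cdot\,)$ by linearity into the $(2^{2n+1}+1)$-weighted term and the $-\Id$ term. For the weighted term, cyclicity of the trace together with ${\rm Tr}(\ket{a}\bra{b}\,\ket{c}\bra{d})=\langle b|c\rangle\langle d|a\rangle$ collapses, for example, ${\rm Tr}\big(U_i^\dagger\ket{b_i}\bra{b_i}U_i\ket{\chi_l}\bra{\phi}\big)$ to $\big(\bra{b_i}U_i\ket{\chi_l}\big)\big(\bra{\phi}U_i^\dagger\ket{b_i}\big)$, and the mirror term $\ket{\phi}\bra{\chi_l}$ to $\big(\bra{b_i}U_i\ket{\phi}\big)\big(\bra{\chi_l}U_i^\dagger\ket{b_i}\big)$; substituting the definitions of $\ket{\chi_l}$ and $\ket{\phi}$ reproduces verbatim the bracketed products in \eqref{eq:EfficientTrace1} and \eqref{eq:EfficientTrace2}, and in the case of $O_\alpha$, where the two vectors coincide, it gives the modulus-squared $\big|\bra{b_i}U_i\ket{\Phi_d}_{SA}\ket{1}_C\big|^2$ of \eqref{eq:EfficientTrace3}. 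The $i$ and $-i$ prefactors in \eqref{eq:EfficientTrace2} come along for free from $O_l^{-}=iO_l-iO_l^\dagger$.

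For the $-\Id$ term I would simply note ${\rm Tr}(O_l^{+})=\langle\phi|\chi_l\rangle+\langle\chi_l|\phi\rangle=0$ and ${\rm Tr}(O_l^{-})=i\langle\phi|\chi_l\rangle-i\langle\chi_l|\phi\rangle=0$, both vanishing because $\langle\phi|\chi_l\rangle=\bra{\Phi_d}_{SA}\bra{1}_C(H_l\otimes\Id_A)\ket{\Phi_d}_{SA}\ket{0}_C=\langle 1|0\rangle_C\cdot(\cdots)=0$; this is why \eqref{eq:EfficientTrace1}–\eqref{eq:EfficientTrace2} carry no additive constant. By contrast ${\rm Tr}(O_\alpha)=\langle\phi|\phi\rangle=1$ since $\ket{\Phi_d}_{SA}$ and $\ket{1}_C$ are normalized, which is the source of the $-1$ in \eqref{eq:EfficientTrace3}. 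Adding the two contributions in each case yields the three stated identities, and the closing remark is immediate because every factor produced is an inner product of two vectors in $\mathcal{H}_S\otimes\mathcal{H}_A\otimes\mathcal{H}_C$.

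I do not expect a genuine obstacle here: the argument is bookkeeping with outer products and the trace. The only spots that need care are (i) the asymmetry between $O_l^{\pm}$ and $O_\alpha$ for the identity contribution, which is precisely the distinction $\langle 0|1\rangle_C=0$ versus $\langle 1|1\rangle_C=1$, and (ii) keeping the $S$, $A$, $C$ tensor factors and the placement of $H_l=H_l^\dagger$ straight when moving it past the relevant bras and kets, so that the conjugated expressions match the ordering written in \eqref{eq:EfficientTrace1}–\eqref{eq:EfficientTrace2}.
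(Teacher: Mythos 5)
Your proposal is correct and follows essentially the same route as the paper's proof: expand $\hat{\rho}_i$ into the weighted projector term and the $-\Id$ term, observe that the identity contribution vanishes for $O_l^{\pm}$ (via the orthogonality $\langle 1|0\rangle_C=0$, which the paper phrases as ${\rm Tr}(\ket{0}_C\bra{1}_C)=0$) but equals $1$ for $O_\alpha$, and collapse the remaining rank-one traces into products of inner products. Your outer-product shorthand $O_l=\ket{\chi_l}\bra{\phi}$ is a tidier packaging of the same bookkeeping the paper carries out explicitly in Appendix~\ref{section:EfficientTrace}.
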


The proof of Proposition~\ref{Prop: EfficientTrace} can be found in Appendix~\ref{section:EfficientTrace}. As previously mentioned, $U_i \ket{b_i}$ is a stabilizer state. Furthermore, $H_l \ket{\Phi_d}_{SA}$ is also a stabilizer state for all $l \in M$, since the Hamiltonian terms are Pauli operators, and therefore also Clifford operators. From this, we see that calculating the expectation values in Proposition~\ref{Prop: EfficientTrace} comes down to computing a few inner products between stabilizer states, which we can do on a classical computer. Algorithm~\ref{algorithm: ComputeExpectation} computes these inner products using the SIP Algorithm of \cite{Aaronson_2004}, and then performs a median of means (as prescribed by the classical shadows procedure \cite{Huang_2020}) to reduce the variance of the estimate.

\begin{algorithm}
\caption{$\mathtt{ComputeExpectation}(\hat{\rho}, O, N, l, n)$: Compute Expectation Values with a Classical Shadow}\label{algorithm: ComputeExpectation}

\textbf{Input:} 

$\hat{\rho} $: Classical shadow of size $N$ of the resource state $\rho$, in the compressed form of Definition~\ref{Def: compressedShadowCliff}

$H_l$: Hermitian operator from the set $\boldsymbol{H}$

$N$: The size of the classical shadow $\hat{\rho}$

$l$: Integer indicating which of the $M$ Hamiltonian terms $O$ is. $l=-1$ indicates that the algorithm should use the formula for estimating the normalization constant $\alpha$

$n$: Number of qubits in the system

\begin{enumerate}
    \item \textbf{Iterate over the components ($\ket{\hat{\rho}_i} = U_i^\dagger \ket{b_i}$) of $\hat{\rho}$ and compute ${\rm Tr}\left(\hat{\rho_i} O\right)$ for each one}
        Let $\hat{\textbf{o}} \gets \begin{bmatrix} 0, 0, ..., 0 \end{bmatrix}_{1\times N}$ and let $\hat{\textbf{o}}_i$ denote the $i^{th}$ element of $\hat{\textbf{o}}$
        
        $\mathtt{if}$ $l==-1 \mathtt{:}$
        
        \qquad $\mathtt{for}$ $i$ $\mathtt{from}$ $0$ to $N-1$:
        
        \qquad \qquad Compute ${\rm Tr}\left(\hat{\rho_i} O_\alpha\right)$ using the SIP Algorithm given in \cite{Aaronson_2004} to compute the 
        
        \qquad \qquad inner product in equation \eqref{eq:EfficientTrace3}
        
        $\mathtt{else:}$
        
        \qquad $\mathtt{for}$ $i$ $\mathtt{from}$ $0$ to $N-1$:
        
        \qquad \qquad Compute ${\rm Tr}\left(\hat{\rho_i} O_l^+\right)$ using the SIP Algorithm given in \cite{Aaronson_2004} to compute the four  
        
        \qquad \qquad inner products in equation \eqref{eq:EfficientTrace1}
        
        \qquad \qquad Compute ${\rm Tr}\left(\hat{\rho_i} O_l^-\right)$ using the SIP Algorithm given in \cite{Aaronson_2004} to compute the four  
        
        \qquad \qquad inner products in equation \eqref{eq:EfficientTrace2}
        
        \qquad \qquad $\hat{o}_i \gets \frac{{\rm Tr}\left(\hat{\rho_i} O_l^+\right) - i{\rm Tr}\left(\hat{\rho_i} O_l^-\right) }{2}$
        \Comment{As per Equation \eqref{eq:O_l expectation}}

    \item \textbf{Perform median of means protocol and output the estimate of ${\rm Tr}(\rho_c O_{l}) = \frac{c_l}{\alpha^2}$}
    
    Return $\mathtt{MedianOfMeans}(\hat{\textbf{o}})$

\end{enumerate}
\end{algorithm}

\begin{algorithm}
\caption{$\mathtt{MedianOfMeans}(\hat{\textbf{o}})$:}\label{algorithm: MoM}

\textbf{Input:} 

$\hat{\textbf{o}} $: Vector of size $N$ 

\bigbreak

\textbf{Perform median of means protocol:}

Separate the elements of $\hat{\textbf{o}}$, $\hat{o}_i$, into $K$ groups, $G_k$, of equal size

For $k$ from $0$ to $K-1$:

\qquad $\hat{o}_k \gets \textbf{mean}\left( G_k \right)$

$\hat{o}_l \gets \textbf{median}\left(\hat{o}_0, \hat{o}_1, ..., \hat{o}_{K-1} \right)$

Return $\hat{o}_l$

\end{algorithm}

The SIP Algorithm given in \cite{Aaronson_2004} requires $\mathcal{O}(n^3)$ time to compute inner products between stabilizer states, and Algorithm~\ref{algorithm: ComputeExpectation} calls it $\mathcal{O}(1)$ times for each of the $N$ snapshots (see Theorem~\ref{thm:samplecomplexityUB} for the sufficient value of $N$). In turn, Algorithm~\ref{algorithm: FindCoeffClifford} calls Algorithm~\ref{algorithm: ComputeExpectation} for each of the $M+1$ operators, giving a total time complexity of $\mathcal{O}(MNn^3)$ for computing all the inner products between stabilizer states. However, all these calculations are independent and can be parallelized over both the operators and the classical snapshots. Thus, with a large classical memory, the computation time required can be greatly reduced through massive parallelism.


\subsubsection{Sample Complexity Upper Bound} \label{sec: SampleCompUB}
The main result in this section is Theorem~\ref{thm:samplecomplexityUB}, a proof of the sample complexity of solving the Hamiltonian learning problem using pseudo-Choi states as resources.  Our result depends on the concept of a shadow norm which we define below for reference.
\begin{definition} [Shadow Norm for Clifford-Based Shadows]{
    Let $O$ be a finite-dimensional operator. The shadow norm of the operator $O$ is defined as
    \begin{align*}
        \norm{O}_{shadow} = \max_{\sigma}\left( \mathbb{E}_{U \in \mathcal{C}} \sum_{b \in \{0,1\}^n} \bra{b} U\sigma U^\dagger \ket{b} \bra{b} U \mathcal{M}^{-1}(O) U^\dagger\ket{b}^2 \right)^{1/2}
    \end{align*}
    where $\mathcal{C}$ is the Clifford group and the maximization is over all quantum states.
}
\end{definition}

Note that the definition of the shadow norm above assumes the classical shadows process is randomly sampling Clifford unitaries, and not unitaries from some other ensemble.

In order to state our proof of Theorem~\ref{thm:samplecomplexityUB} we need to first state Lemma~\ref{lem:shadownormUB} which bounds the shadow norm, as well as Proposition~\ref{prop:errbd} which gives a value of $\epsilon_s$ that ensures the total error of the learning process is at most $\epsilon$, as required by the Hamiltonian Learning Problem (Definition~\ref{def:HamLearningProblem}). 
\begin{lemma} [Shadow Norm Upper Bounds\label{lem:shadownormUB}]{
    Let the set \{$O_{\alpha}$\} $\cup$  $\{O_l^{+}$, $O_l^{-} \mid l \in \mathbb{Z}_{M}\}$ be denoted by $\textbf{O} \equiv \{O_i \mid i \in \mathbb{Z}_{2M}\}$. If using the Clifford measurement version of the classical shadows procedure to predict the expectation values of the operators in the set, the shadow norm of each operator is at most 
    \begin{align*}
        \norm{O_{i}}^2_{shadow} \leq 3{\rm Tr}\left(\left(O_{i}\right)^2\right) & \leq 6
    \end{align*}
}
\end{lemma}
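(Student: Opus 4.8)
The plan is to derive the bound from two ingredients: the general Clifford shadow-norm estimate of Huang, Kueng and Preskill, and a short direct computation of $\Tr(O_i^2)$ for each operator in the set $\textbf{O}$. For the first ingredient, recall from~\cite{Huang_2020} that because the global Clifford group on $\eta$ qubits is a unitary $3$-design, the random-Clifford shadow norm of any Hermitian operator $O$ satisfies $\norm{O}_{shadow}^2 \le 3\,\Tr(O_0^2)$, where $O_0 := O - 2^{-\eta}\Tr(O)\Id$ is the traceless part; since $\Tr(O^2) = \Tr(O_0^2) + 2^{-\eta}\Tr(O)^2 \ge \Tr(O_0^2)$, this already yields $\norm{O}_{shadow}^2 \le 3\Tr(O^2)$, the first inequality of the Lemma. (All operators in $\textbf{O}$ are Hermitian by construction; moreover each $O_l^{\pm}$ is itself traceless, so for those the estimate applies with $O_0 = O_l^{\pm}$.) It then remains only to show $\Tr(O_i^2) \le 2$ for every $O_i \in \textbf{O}$.

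For the computation, set $\ket{\phi_l} := (H_l \otimes \Id_A)\ket{\Phi_d}_{SA}$, which is a unit vector because $H_l$ is unitary and $\ket{\Phi_d}_{SA}$ is normalized. Then the decoding operator factorizes as $O_l = \ketbra{\phi_l}{\Phi_d} \otimes \ketbra{0}{1}$ on $\mathcal{H}_S\otimes\mathcal{H}_A\otimes\mathcal{H}_C$, so the single-qubit factor $\ketbra{0}{1}$ forces $O_l^2 = (O_l^\dagger)^2 = 0$, while $O_l O_l^\dagger = \ketbra{\phi_l}{\phi_l}\otimes\ketbra{0}{0}$ and $O_l^\dagger O_l = \ketbra{\Phi_d}{\Phi_d}\otimes\ketbra{1}{1}$ are rank-one projectors. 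Expanding $(O_l^{\pm})^2$ and using these identities gives $(O_l^{+})^2 = (O_l^{-})^2 = O_lO_l^\dagger + O_l^\dagger O_l$, hence $\Tr\!\big((O_l^{\pm})^2\big) = 1+1 = 2$. Finally $O_\alpha = \ketbra{\Phi_d}{\Phi_d}\otimes\ketbra{1}{1}$ is a rank-one projector, so $\Tr(O_\alpha^2) = \Tr(O_\alpha) = 1 \le 2$. Combining with the first ingredient yields $\norm{O_i}_{shadow}^2 \le 3\Tr(O_i^2) \le 6$ for every $O_i \in \textbf{O}$.

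The only genuinely nontrivial input is the inequality $\norm{O}_{shadow}^2 \le 3\Tr(O_0^2)$ itself, which I would not reprove --- it is exactly the Clifford shadow-norm bound of~\cite{Huang_2020}, obtained there by explicitly evaluating the relevant second-moment average over the Clifford ensemble via the $3$-design property. Everything else is bookkeeping, and the one point to watch is the trace/traceless-part distinction in the definition of $\norm{\cdot}_{shadow}$ used here: since the $O_l^{\pm}$ are traceless the bound is immediate, while for $O_\alpha$ one either works with its traceless part (whose squared Frobenius norm is $1 - 2^{-(2n+1)} < 2$) or checks the rank-one projector case directly. Thus the main obstacle is simply importing the shadow-norm estimate in the precise form needed and tracking the elementary fact $\Tr(O_0^2)\le\Tr(O^2)$; the reduction to $\Tr(O_i^2)\le 2$, with the vanishing of the "diagonal" products $O_l^2$ and $(O_l^\dagger)^2$ keeping the traces as small as $2$ rather than scaling with $d$, is then immediate.
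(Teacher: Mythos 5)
Your proposal is correct and follows essentially the same route as the paper's proof in Appendix~\ref{App:shadownormUB}: import the bound $\norm{O}^2_{shadow}\le 3\,\Tr(O^2)$ from~\cite{Huang_2020}, observe that $O_l^2=(O_l^\dagger)^2=0$ because of the $\ketbra{0}{1}$ factor so that $\Tr\bigl((O_l^{\pm})^2\bigr)=2\Tr(O_lO_l^\dagger)=2$, and note $\Tr(O_\alpha^2)=1$. Your remark on the traceless-part subtlety in the shadow-norm bound is a small extra care the paper elides, but it does not change the argument.
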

The proof of Lemma~\ref{lem:shadownormUB} can be found in Appendix~\ref{App:shadownormUB}.

\begin{proposition}\label{prop:errbd}
To solve the Hamiltonian learning problem with an error of at most $\epsilon$, it serves to let the error in the classical shadows procedure be
\begin{align*}
    \epsilon_s & = \frac{\epsilon}{\alpha^2 \sqrt{c^2_{\rm max} +1} \sqrt{M}}, 
\end{align*}
where $c_{\rm max}$ denotes the Hamiltonian coefficient with the largest absolute value. 
\label{proposition: esUB}
\end{proposition}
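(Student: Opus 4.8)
The plan is to read Proposition~\ref{proposition: esUB} as a purely deterministic error-propagation statement: granting that the shadow-tomography subroutine returns, for each of the $2M+1$ decoding observables, an estimate within additive error $\epsilon_s$, I would bound the $L^2$ error of the coefficient vector $\hat{\mathbf c}$ output by Algorithm~\ref{algorithm: FindCoeffClifford}, and then choose $\epsilon_s$ so that this bound does not exceed $\epsilon$. Note that Lemma~\ref{lem:shadownormUB} and Theorem~\ref{thm:samplecomplexityUB} are what turn ``error $\epsilon_s$'' into a concrete sample count; here only the $\epsilon_s$ promise is used.

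First I would set up the two error sources. By Proposition~\ref{proposition: ComputeHamCoeffs} the exact quantities being estimated are $\Tr(\rho_c O_l)=c_l/\alpha^2$ and $\Tr(\rho_c O_\alpha)=1/\alpha^2$. Since $O_l=(O_l^+-iO_l^-)/2$ and each of $O_l^\pm$ is estimated to within $\epsilon_s$, equation~\eqref{eq:O_l expectation} yields an estimate $\hat a_l=c_l/\alpha^2+\delta_l$ with $\abs{\delta_l}\le\epsilon_s$, and likewise $\hat o_\alpha=1/\alpha^2+\gamma$ with $\abs{\gamma}\le\epsilon_s$. Writing $r:=\alpha^2\gamma$, I would observe that we may freely assume $\epsilon_s\alpha^2\le 1/2$ (shrinking $\epsilon_s$ costs only an $\mathcal{O}(1)$ factor, harmless inside the $\widetilde{\mathcal{O}}$ of Theorem~\ref{thm:samplecomplexityUB}), so that $\hat o_\alpha\ge 1/(2\alpha^2)>0$ and the division in step~5 of the algorithm is legitimate.

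The core computation is the per-coordinate error of the output $\hat c_l=\hat a_l/\hat o_\alpha$. Substituting $\hat o_\alpha=(1+r)/\alpha^2$ gives the exact identity
\begin{align*}
    \hat c_l-c_l=\frac{\alpha^2\delta_l-r\,c_l}{1+r},
\end{align*}
whence $\abs{\hat c_l-c_l}\le(1-\abs r)^{-1}\bigl(\alpha^2\epsilon_s+\abs r\,c_{\rm max}\bigr)\le 2\alpha^2\epsilon_s(1+c_{\rm max})$, using $\abs r\le\min\{1/2,\epsilon_s\alpha^2\}$ and $\abs{c_l}\le c_{\rm max}$. Summing the squares over the $M$ coordinates and taking a root,
\begin{align*}
    \norm{\hat{\mathbf c}-\mathbf c}_2\le 2\sqrt M\,\alpha^2\epsilon_s(1+c_{\rm max})\le\mathcal{O}(1)\cdot\alpha^2\sqrt M\sqrt{c_{\rm max}^2+1}\,\epsilon_s ,
\end{align*}
where the last step uses $1+c_{\rm max}\le\sqrt2\,\sqrt{c_{\rm max}^2+1}$. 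Setting the right-hand side equal to $\epsilon$ and solving for $\epsilon_s$ returns, up to the absorbed constant, $\epsilon_s=\epsilon/\bigl(\alpha^2\sqrt{c_{\rm max}^2+1}\sqrt M\bigr)$, which is the claim.

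I expect the only real obstacle to be the division by the \emph{estimated} normalization $\hat o_\alpha$ rather than the true $1/\alpha^2$: one must verify $\hat o_\alpha$ stays bounded away from $0$ and then track how its multiplicative error $r$ couples into every coordinate. This is precisely what produces the $c_{\rm max}$ (equivalently $\sqrt{c_{\rm max}^2+1}$) factor, since a relative error in the normalization rescales the entire coefficient vector, whose $L^2$-norm is $\sqrt{\alpha^2-1}\le\sqrt M\,c_{\rm max}$. Everything else is elementary -- the numerator errors contribute the $\alpha^2\sqrt M$ part, and passing from $O_l^\pm$ back to $O_l$ via~\eqref{eq:O_l expectation} only changes constants. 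An equally valid route would be to split the target budget $\epsilon$ into a piece $\epsilon/2$ absorbing the numerator errors and a piece $\epsilon/2$ absorbing the normalization error, reaching the same $\epsilon_s$ up to constants.
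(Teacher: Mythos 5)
Your proposal is correct and follows the same skeleton as the paper's proof in Appendix B.4: propagate the errors in $\hat{o}_l \approx c_l/\alpha^2$ and $\hat{o}_\alpha \approx 1/\alpha^2$ through the division in step 5 of Algorithm~\ref{algorithm: FindCoeffClifford}, bound the per-coordinate error by something of the form $\epsilon_s\alpha^2\sqrt{c_{\max}^2+1}$ up to constants, and then pay a factor of $\sqrt{M}$ to pass to the $2$-norm. The one substantive difference is in how the quotient error is handled: the paper applies the first-order ``add relative errors in quadrature'' propagation formula, which yields the clean expression $\epsilon_i = \epsilon_s\alpha^2\sqrt{c_i^2+1}$ but is a linearization rather than a rigorous worst-case bound, whereas you derive the exact identity $\hat c_l - c_l = (\alpha^2\delta_l - r c_l)/(1+r)$, control the denominator via the assumption $\epsilon_s\alpha^2\le 1/2$, and obtain a genuine upper bound $2\alpha^2\epsilon_s(1+c_{\max})$. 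Your route is therefore slightly more rigorous; it costs a constant factor (roughly $2\sqrt2$) relative to the stated value of $\epsilon_s$, which is harmless since the proposition is only consumed inside the $\mathcal{O}(\cdot)$ of Theorem~\ref{thm:samplecomplexityUB}. Your explicit attention to the positivity of $\hat o_\alpha$ and to the constant lost in passing from $O_l^{\pm}$ back to $O_l$ via equation~\eqref{eq:O_l expectation} addresses points the paper leaves implicit.
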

The proof of Proposition~\ref{prop:errbd} can be found in Appendix~\ref{App:esUB}.

With these results in hand, we can prove the following result which provides the scaling of the number of resource states that are needed to solve our learning problem.
\begin{theorem} [Sample Complexity Bound for PC HLP\label{thm:samplecomplexityUB}]{

The number of copies of the pseudo-Choi state  \eqref{eq:choistate} required by Algorithm~\ref{algorithm: FindCoeffClifford} to solve the Hamiltonian Learning Problem given by Definition~\ref{def:HamLearningProblem} is 
\begin{align}\label{eq:Nbd}
    N \in \mathcal{O}\left( \frac{\alpha^4 (c_{\rm max} + 1) M\log\left(M/\delta\right) }{\epsilon^2}  \right)
\end{align}

}
\end{theorem}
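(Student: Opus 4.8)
The plan is to reduce the theorem to the classical shadows sample-complexity guarantee of Huang et al.~\cite{Huang_2020}, applied to the decoding operators of Definition~\ref{Def: CliffDecodingOperators} and combined with the error accounting already carried out in Lemma~\ref{lem:shadownormUB} and Proposition~\ref{prop:errbd}. Recall that Algorithm~\ref{algorithm: FindCoeffClifford} builds a single Clifford classical shadow $\hat\rho$ of size $N$ from $N$ copies of $\rho_c$ and then, through Algorithm~\ref{algorithm: ComputeExpectation}, extracts median-of-means estimates of $\mathrm{Tr}(\rho_c O_l^{+})$ and $\mathrm{Tr}(\rho_c O_l^{-})$ for every $l\in\mathbb{Z}_M$, together with an estimate of $\mathrm{Tr}(\rho_c O_\alpha)$; these are combined into estimates of $\mathrm{Tr}(\rho_c O_l)=c_l/\alpha^2$ (Proposition~\ref{proposition: ComputeHamCoeffs}) and then divided, in step~5, by the estimate of $\mathrm{Tr}(\rho_c O_\alpha)=1/\alpha^2$ to output $\hat{\mathbf c}$.

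First I would recall the form of the shadow-tomography bound needed: a single Clifford classical shadow of size
\begin{align*}
    N \in \mathcal{O}\!\left(\frac{\log(L/\delta_s)\,\max_{i}\norm{O_i}_{shadow}^{2}}{\epsilon_s^{2}}\right)
\end{align*}
(equivalently, $\mathcal{O}(\log(L/\delta_s))$ median-of-means groups, each of size $\mathcal{O}(\max_i\norm{O_i}^2_{shadow}/\epsilon_s^{2})$) suffices to estimate all $L$ functionals $\mathrm{Tr}(\rho O_i)$ to additive error $\epsilon_s$ \emph{simultaneously} with failure probability at most $\delta_s$, the union bound over the $L$ observables being absorbed into the logarithm and the same snapshots serving every observable. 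Here the observable set is $\{O_\alpha\}\cup\{O_l^{+},O_l^{-}\mid l\in\mathbb{Z}_M\}$, so $L=2M+1\in\mathcal{O}(M)$, and Lemma~\ref{lem:shadownormUB} gives $\max_i\norm{O_i}^2_{shadow}\le 6\in\mathcal{O}(1)$.

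Next I would fix the per-observable precision. Proposition~\ref{prop:errbd} states that taking $\epsilon_s=\epsilon/(\alpha^2\sqrt{c_{\rm max}^2+1}\,\sqrt{M})$ guarantees that, after the linear combination of the $O_l^{\pm}$ estimates and the division by the estimate of $1/\alpha^2$ (whose error propagation is already folded into that proposition), the output satisfies $\norm{\hat{\mathbf c}-\mathbf c}_2\le\epsilon$. Substituting this $\epsilon_s$, the bound $\max_i\norm{O_i}^2_{shadow}\le 6$, $L\in\mathcal{O}(M)$, and $\delta_s=\delta$ into the displayed size formula yields
\begin{align*}
    N \in \mathcal{O}\!\left(\frac{\alpha^{4}(c_{\rm max}^{2}+1)\,M\log(M/\delta)}{\epsilon^{2}}\right),
\end{align*}
which is the claimed bound (with $c_{\rm max}^2+1$ in place of the $c_{\rm max}+1$ written in the statement; since only an upper bound on $N$ is asserted one may loosen $c_{\rm max}^2+1$ to $(c_{\rm max}+1)^2$, and a final version should state explicitly which of these factors is intended).

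I do not anticipate a genuine obstacle, since the technical substance has been front-loaded into Proposition~\ref{proposition: ComputeHamCoeffs}, Lemma~\ref{lem:shadownormUB}, and Proposition~\ref{prop:errbd}; what remains is bookkeeping. The one point deserving care is the probabilistic accounting: one must verify that a single median-of-means pass with $\mathcal{O}(\log(M/\delta))$ groups simultaneously controls all $2M+1$ scalar estimates — exactly the shadow-tomography feature of~\cite{Huang_2020}, so that the union bound over observables costs only a logarithm — and that the failure events of the numerator estimates and of the denominator estimate are jointly accounted for before Proposition~\ref{prop:errbd} is invoked. A secondary, purely conventional point is confirming that the bound of Lemma~\ref{lem:shadownormUB} is stated for exactly the form of the operators (traceless part, or appropriately trace-shifted) that enters the variance bound in Huang et al., so that the shadow-norm estimate plugs in without an extra correction term.
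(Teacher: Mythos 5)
Your proposal is correct and follows essentially the same route as the paper's proof: count the $2M+1$ observables, invoke the classical-shadows sample bound of Huang et al.\ together with the shadow-norm bound of Lemma~\ref{lem:shadownormUB} and the precision $\epsilon_s$ from Proposition~\ref{prop:errbd}, set $\delta_s=\delta$, and substitute. Your remark that direct substitution yields a factor of $c_{\rm max}^2+1$ rather than the $c_{\rm max}+1$ printed in the theorem also matches the paper's own derivation, which carries the $\sqrt{c_{\rm max}^2+1}$ factor through Proposition~\ref{prop:errbd} before stating the bound.
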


\begin{proof}[Proof of Theorem~\ref{thm:samplecomplexityUB}]

The sample complexity of the shadow tomography procedure depends on the number of expectation values that are to be predicted. Recall that $M$ is the number of terms in the Hamiltonian. Algorithm~\ref{algorithm: FindCoeffClifford} requires estimating $M$ expectation values $\avg{O_{l}}$ (and each $\avg{O_{l}}$ requires predicting the expectation values of $2$ operators), as well as $\avg{O_{\alpha}}$. Therefore, the total number of operators whose expectation values are to be estimated via shadow tomography is $2M+1$. Let the set of these operators (\{$O_{\alpha}$\} $\cup$  $\{O_l^{+}$, $O_l^{-} \mid l \in \mathbb{Z}_{M}\}$ ) be denoted by $\textbf{O} \equiv \{O_i \mid i \in \mathbb{Z}_{2M}\}$.

Using classical shadows, this gives a sample complexity of 
\begin{align}
    N_s & = \mathcal{O}\left( \frac{\log\left(M/\delta_s\right)}{\epsilon_s^2} \max_i \norm{O_i}^2_{shadow} \right)
\end{align}
to estimate all the expectation values, each within sampling error $\epsilon_s$ and with a total probability of failure at most $\delta$. \cite{Huang_2020}

Since we use the random Clifford measurement version of the classical shadows procedure, Lemma~\ref{lem:shadownormUB} implies that
\begin{align}
    \max_i \norm{O_i}^2_{shadow} & \leq 6
\end{align}
This result implies that the sample complexity for predicting the expectation values in Algorithm~\ref{algorithm: FindCoeffClifford} within an error  $\epsilon_s$ with probability $\delta_s$ is
\begin{align}
    N_s & = \mathcal{O}\left( \frac{\log\left(M/\delta_s\right)}{\epsilon_s^2}  \right)\label{eq:NsUB}.
\end{align}

We would like to determine an upper bound on the number of samples such that with probability $1- \delta$, the $l$-$2$ error in the vector of Hamiltonian coefficients is at most $\epsilon$, as per the Hamiltonian learning problem~\ref{def:HamLearningProblem}. The total probability of failure of Algorithm~\ref{algorithm: FindCoeffClifford} is that of the classical shadows procedure, so $\delta = \delta_s$. Additionally, to minimize the number of samples needed while still ensuring that the total error in the Hamiltonian learning process is at most $\epsilon$, it serves to choose the value of $\epsilon_s$ given as per Proposition~\ref{prop:errbd}. Combining this value of $\epsilon_s$ with Equation \eqref{eq:NsUB}, we arrive at the claimed result.
\end{proof}
It is worth noting that if coherent quantum queries were made to the quantum state preparation routine then we can learn the coefficients with quadratically better scaling with $\epsilon$ than the above procedure using results by Huggins et al.~\cite{Huggins_nearly_optimal_estimating}. However, since doing so necessitates the introduction of a more powerful oracle that allows coherent queries we ignore such optimizations here, and instead revisit this approach in Section~\ref{section: QME}.


\section{Learning the Hamiltonian from a Unitary Time Evolution Blackbox}\label{section:BlackboxLearning}

While the pseudo-Choi state might seem like a strange choice to use as the learning resource, the time evolution blackbox of the form
\begin{align}
    U & = e^{-iHt}
\end{align}
is perhaps a more intuitive choice of learning resource, and has been recently studied in the context of Hamiltonian learning \cite{haah2021optimal,yu2022practical, PhysRevLett.130.200403, gu2022practical}. The goal of this section is to demonstrate that by querying this time evolution unitary and its inverse for $\abs{t} \leq \frac{\pi}{2 \norm{H}}$, we can produce states similar to the pseudo-Choi states of the previous section, and as a result learn the Hamiltonian using a procedure very similar to the one in Algorithm~\ref{algorithm: FindCoeffClifford}. This leads to Theorem~\ref{thm:querycomplexityUB}, which is the main result of this section and gives an upper bound on the number of queries to $U$ and $U^\dagger$ needed to solve the Hamiltonian learning problem.

The process of generating the pseudo-Choi state contains three main steps. First, the time evolution black-box is used to generate a unitary matrix that encodes the Hamiltonian (referred to as a block-encoding of the Hamiltonian). Second, this block-encoding is applied in a controlled fashion to a maximally entangled state ($\ket{\phi_d}_{SA}$) and an ancilla qubit. This generates a relative phase between a term that applies the block-encoding to the input state and a term that acts as the identity on it. Finally, the pseudo-Choi state is produced probabilistically by measuring the ancilla qubit in the computational basis. The outcome of this measurement indicates whether or not the state was successfully generated.

An important issue that arises here is that we need to assume that the time evolution can be implemented controllably.  This is often the case when we are considering learning a Hamiltonian applied within a simulator, but may not necessarily be straight forward to implement when learning a physical Hamiltonian that cannot be so manipulated.  In such cases, we can still perform the control using controlled swap operations if a $+1$ eigenstate of the unitary is known.  Further, it is worth noting that in practice we also need the inverse of the evolution using our approach.  If these assumptions are met, however, we are able to show that we can efficiently generate the learning resource. Additionally, there exist classes of Hamiltonians (``time-reversible Hamiltonians'') for which the backward time evolution can be performed efficiently even if one only has access to a black-box that performs the forward evolution. For example, certain Heisenberg models that can be represented as bipartite 
- the simplest being a 1-dimensional Ising model - can be easily time-reversed given access to the time evolution black-box. Therefore, if a Hamiltonian can be efficiently broken up into relatively few such Hamiltonians, our approach remains feasible as one can learn the time-reversible Hamiltonians one by one until all the coefficients of the original Hamiltonian are recovered.


\subsection{Generating pseudo-Choi States}\label{section: GenerateChoiState}

As mentioned above, the process of producing the pseudo-Choi state begins by using the time evolution unitary to create a block-encoding of the Hamiltonian. Recall that the pseudo-Choi state consists of three subsystems. The map containing the Hamiltonian acts on the subsystem $S$ (denoted by the subscript $S$), which resides in a $d$ dimensional Hilbert space $\mathcal{H}_S$. Subsystem $A$ is an ancilla of the same size as $S$. Lastly, subsystem $C$ is a single qubit used to retain information about the overall sign and norm of the Hamiltonian. In addition to these three, we will now require an additional subsystem, $B$, which is a single qubit used for block-encoding purposes, and will be referred to as the ``block-encoding qubit''.  We formalize the notion of block-encoding below.



\begin{definition}[Block-Encoding]
Let $U_{block}$ be a unitary operator acting on the Hilbert space $\mathcal{H}_B\otimes \mathcal{H}_S$. We then say that this unitary is a block-encoding (or more specifically a $(\Delta, {\rm dim}(\mathcal{H}_B),0)$-block-encoding as defined in Section 4 in~\cite{Gily_n_2019}) of a Hamiltonian $H$ if
\begin{align}
                        \frac{H}{\Delta}=( \bra{0}_B \otimes \Id_S) U_{block} (\ket{0}_B \otimes \Id_S)\label{eq:blockencoding}
\end{align}
where 
$\Delta$ is a normalization factor.
\end{definition}

In our context, this unitary block-encoding plays the same role as a unitary that prepares a purified Gibbs state in other approaches to quantum Hamiltonian learning. Rather than a Gibbs state, $U_{block}$ is used to prepare a pseudo-Choi state, from which we can then recover the Hamiltonian coefficients as in Algorithms~\ref{algorithm: FindCoeffClifford} and~\ref{algorithm: ComputeExpectation} in Section~\ref{subsection:LearnCoeffs}.

\textit{Gilyén et al.} \cite{Gily_n_2019} developed a method for generating a block-encoding of the logarithm of an operator, which is precisely what we seek to do with the time evolution operator. Their results, modified to fit our problem, are summarized in Lemma~\ref{lemma:ProduceBlockEncoding} below.

\begin{lemma} [Producing a Block-Encoded Hamiltonian from a Time Evolution Unitary~
\cite{Gily_n_2019} \label{lemma:ProduceBlockEncoding}]{
Let $U=e^{-iHt}$ where the H is Hermitian and $\norm{Ht} \leq \frac{1}{2}$. Let $\epsilon_b \in (0, \frac{1}{2}]$ be the block-encoding error. Using 
\begin{align*}
    \mathcal{O}\left(\log(1/\epsilon_b)\right)
\end{align*}
controlled queries to $U$ and $U^{-1}$, as well as $\mathcal{O}\left(\log(1/\epsilon_b)\right)$ 2-qubit gates and one ancilla qubit, a block-encoding of the Hamiltonian of the form
\begin{align}
    U_{block} & = \begin{bmatrix}
                        \frac{2\widetilde{H}t}{\pi} & I\\
                        J & K
                        \end{bmatrix}
\end{align}

can be produced such that
\begin{align}
    \norm{Ht - \widetilde{H}t} &\leq \epsilon_b
\end{align}
}
\end{lemma}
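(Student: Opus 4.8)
The plan is to \emph{adapt} the ``logarithm of a unitary'' construction of Gilyén, Su, Low and Wiebe~\cite{Gily_n_2019}, exploiting that $\norm{Ht}\le\tfrac12$ puts us in a regime where the function we must apply is analytic away from its branch points, hence has a low-degree polynomial approximant. Since $H$ is Hermitian with $\norm{Ht}\le\tfrac12$, every eigenvalue $\lambda t$ of $Ht$ lies in $[-\tfrac12,\tfrac12]$, so $\sin(\lambda t)\in[-\sin(1/2),\sin(1/2)]\subset(-1,1)$ is bounded away from $\pm1$. I would first produce an \emph{exact} block-encoding of $\sin(Ht)$ from $U$ and $U^{-1}$: writing $\sin(Ht)=\frac{U^{\dagger}-U}{2i}=\tfrac12\big(iU+(-i)U^{\dagger}\big)$ (valid in the eigenbasis of $H$), a one-ancilla linear-combination-of-unitaries circuit on the block-encoding qubit $B$ --- Hadamard, a $\mathrm{SELECT}$ assembled from one controlled-$U$, one controlled-$U^{\dagger}$ and $\mathcal{O}(1)$ single-qubit phase gates, then a second Hadamard --- yields a unitary $W$ on $\mathcal{H}_B\otimes\mathcal{H}_S$ with $(\bra{0}_B\otimes\Id_S)\,W\,(\ket{0}_B\otimes\Id_S)=\sin(Ht)$, at cost $\mathcal{O}(1)$ queries to $U,U^{-1}$ and $\mathcal{O}(1)$ extra two-qubit gates.

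Next I would approximate $f(y):=\tfrac{2}{\pi}\arcsin(y)$ --- odd, analytic on $(-1,1)$, with its only singularities at $y=\pm1$ --- by an odd polynomial $p$ on $[-\sin(1/2),\sin(1/2)]$. Because this interval is compactly contained in the region of analyticity, $f$ extends to a Bernstein ellipse of some fixed radius $\rho>1$ about it, so a degree-$d$ Chebyshev truncation errs by $\mathcal{O}(\rho^{-d})$; thus $d=\mathcal{O}(\log(1/\epsilon_b))$ suffices to get $\norm{p-f}_{\infty,[-\sin(1/2),\sin(1/2)]}\le\tfrac{2}{\pi}\epsilon_b$, and one uses the standard rescaling trick of~\cite{Gily_n_2019} to additionally guarantee $\norm{p}_{\infty,[-1,1]}\le1$, which is what makes $p$ realizable by quantum signal processing.

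Then I would apply the QSVT/QSP sequence of~\cite{Gily_n_2019} implementing $p$ to the block-encoding $W$. The output $U_{block}$ satisfies $(\bra{0}_B\otimes\Id_S)\,U_{block}\,(\ket{0}_B\otimes\Id_S)=p(\sin(Ht))$; since the block projector is $\ket{0}\bra{0}_B$ on a single qubit, the interleaved signal-processing rotations are just $Z$-rotations on $B$, so no further ancilla is needed, while the cost is $d=\mathcal{O}(\log(1/\epsilon_b))$ (controlled) uses of $W,W^{\dagger}$ --- hence $\mathcal{O}(\log(1/\epsilon_b))$ queries to $U,U^{-1}$ --- plus $\mathcal{O}(\log(1/\epsilon_b))$ further two-qubit gates. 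Since $f(\sin(\lambda t))=\tfrac{2}{\pi}\arcsin(\sin(\lambda t))=\tfrac{2}{\pi}\lambda t$ for $|\lambda t|\le\tfrac12$, setting $\widetilde H t:=\tfrac{\pi}{2}\,p(\sin(Ht))$ (Hermitian and commuting with $H$) makes the top-left block exactly $\tfrac{2\widetilde H t}{\pi}$, and the spectral bound together with the previous paragraph gives $\norm{Ht-\widetilde H t}=\tfrac{\pi}{2}\norm{p(\sin(Ht))-f(\sin(Ht))}\le\epsilon_b$. The remaining blocks $J,K$ (and the top-right block displayed in the statement) are whatever the QSVT circuit outputs and play no role downstream, where only the top-left block enters the pseudo-Choi state construction.

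The one genuinely delicate point will be the polynomial step: pushing the degree down to $\mathcal{O}(\log(1/\epsilon_b))$ \emph{while} simultaneously controlling $\norm{p}_\infty$ on the whole interval $[-1,1]$, not merely on the spectrum $[-\sin(1/2),\sin(1/2)]$ --- precisely the normalization QSP demands --- which is where one must invoke the bounded polynomial-approximation lemmas of~\cite{Gily_n_2019} rather than a naive Chebyshev truncation. Everything else --- the LCU identity for $\sin(Ht)$, tracking the $2/\pi$ normalization, and the query/gate/ancilla counts --- is routine bookkeeping.
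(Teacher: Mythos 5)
Your proposal is correct and follows exactly the route the paper takes: the paper simply defers to Corollary 71 of Gilyén et al.\ and sketches the same two-step construction (an LCU block-encoding of $\sin(Ht)=\frac{e^{iHt}-e^{-iHt}}{2i}$ followed by a QSVT implementation of a degree-$\mathcal{O}(\log(1/\epsilon_b))$ polynomial approximation to $\frac{2}{\pi}\arcsin$, valid because $\norm{Ht}\le\frac12$ keeps the spectrum of $\sin(Ht)$ away from $\pm1$). Your additional care about normalizing the polynomial on all of $[-1,1]$ is precisely the technical content of the bounded-approximation lemmas the cited corollary relies on, so nothing is missing.
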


The proof of Lemma~\ref{lemma:ProduceBlockEncoding} can be found in \cite{Gily_n_2019} (Lemma~\ref{lemma:ProduceBlockEncoding} is identical to Corollary 71 in \cite{Gily_n_2019}, only with $H$ replaced by $\widetilde{H}t$), but the general idea is that $\sin(Ht) = \frac{e^{iHt}- e^{-iHt}}{2i}$ can be implemented via Linear Combination of Unitaries (LCU) techniques, resulting in a block-encoding of $\sin(Ht)$. The Quantum Singular Value Transform (QSVT) can then be used to construct a polynomial that approximates the $\arcsin$ of this quantity, which results in an approximate block-encoding of $\frac{2Ht}{\pi}$.
This can be achieved using either quantum singular value transformation methods or linear combinations of unitaries~\cite{Gily_n_2019,childs2012hamiltonian,van2020quantum}



Using the block-encoding of the Hamiltonian from Lemma~\ref{lemma:ProduceBlockEncoding}, we can now generate pseudo-Choi states, as per Lemma~\ref{lemma: ProduceChoiState}. Because the Hamiltonian in the block-encoding contains a factor of $\frac{2t}{\pi}$, the pseudo-Choi state produced will be slightly different than the one in Section~\ref{section:ChoiStateLearning} (Equation \eqref{eq:choistate}). That is, rather than a pseudo-Choi state of $\widetilde{H}$, the state produced is a pseudo-Choi state of $\frac{2\widetilde{H}t}{\pi}$:
\begin{align}
    \ket{\psi_c'} &=  \frac{\ket{0}_C(\frac{\widetilde{H}}{\Delta} \otimes \Id_A) \ket{\Phi_d}_{SA} + \ket{1}_C\ket{\Phi_d}_{SA}}{\gamma}\label{eq:ChoiState2} ,
\end{align}
where
\begin{align}
    \gamma &= \sqrt{\frac{\norm{\widetilde{\textbf{c}}}_2^2}{\Delta^2} + 1}\label{eq:gamma},
\end{align}
and
\begin{align}
    \Delta = \frac{\pi}{2t} \label{eq: Delta}.
\end{align}

The vector $\widetilde{\textbf{c}}$ above is the vector of Hamiltonian coefficients that correspond to the block-encoded Hamiltonian $\widetilde{H}$ (see Definition~\ref{def:Hamdefs} in the following section).

\begin{figure}[tb]
\centering
    \begin{quantikz}
        \lstick{$\ket{0}_B$}                  & \qw          & \qw & \gate[wires=2]{U_{block}} & \qw & \meter{} \rstick{$\rm{Pr}(0) \geq  \frac{1}{2}$}\\
        \lstick[wires=2]{$\ket{\phi_d}_{SA}$} & \qwbundle{n} & \qw &                           & \qw & \qw \rstick[wires=3]{$\ket{\psi_c'} = \frac{\ket{0}_C                                                                                                                            (\frac{\widetilde{H}}{\Delta} \otimes \Id_A)\ket{\Phi_d}_{SA} +                                                                                                                         \ket{1}_C\ket{\Phi_d}_{SA}}{\gamma}$}\\
                                              & \qwbundle{n} & \qw & \qw                       & \qw & \qw \\
        \lstick{$\ket{0}_C$}                  &\gate{H}      & \qw & \octrl{-2}                & \qw & \qw
    \end{quantikz}
    \caption{A quantum circuit that produces the pseudo-Choi state in Lemma~\ref{lemma: ProduceChoiState}. If the qubit in register $B$ is found to be in the state $\ket{0}$ after measuring it in the computational basis, the output of the remaining registers is the pseudo-Choi state $\ket{\psi_c'}$. This outcome occurs with probability equal to $\frac{\norm{\textbf{c}}^2_2}{2\Delta^2} + \frac{1}{2}$, where $\norm{\textbf{c}}_2$ is the 2-norm of the vector of Hamiltonian coefficients, and $\Delta = \frac{\pi}{2t}$.\label{figure: ProduceChoiState}}
\end{figure}

\begin{lemma}[Generating pseudo-Choi States from the Hamiltonian Block-Encoding\label{lemma: ProduceChoiState}]
    Let $t$ be a positive number in $(0, 1/2\|H\|]$.  There exists an algorithm that yields a pseudo-Choi state of the form $\ket{\psi_c'} = \frac{\ket{0}_C (\frac{\widetilde{H}}{\Delta} \otimes \Id_A)\ket{\Phi_d}_{SA} + \ket{1}_C \ket{\Phi_d}_{SA}}{\gamma}$, where $\Delta = \frac{\pi}{2t}$ and $\gamma = \sqrt{\frac{\norm{\widetilde{\textbf{c}}}_2^2}{\Delta^2} + 1}$, using a single controlled application of $U_{block}$ whose success is heralded by a measurement that occurs with probability strictly greater than $\frac{1}{2}$. 
\end{lemma}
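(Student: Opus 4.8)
The plan is to realize the circuit of Figure~\ref{figure: ProduceChoiState} explicitly and propagate the global state register by register. First I would invoke Lemma~\ref{lemma:ProduceBlockEncoding}: since $t \in (0, 1/(2\norm{H})]$ we have $\norm{Ht} \le \tfrac12$, so there exists a unitary $U_{block}$ on $\mathcal{H}_B \otimes \mathcal{H}_S$ whose top-left block is $\frac{2\widetilde{H}t}{\pi} = \widetilde{H}/\Delta$ with $\Delta = \pi/(2t)$; equivalently $(\bra{0}_B \otimes \Id_S)\,U_{block}\,(\ket{0}_B \otimes \Id_S) = \widetilde{H}/\Delta$. This is the only place the hypothesis on $t$ enters, and it fixes $\Delta$ as in \eqref{eq: Delta}.

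Next I would run the circuit. Starting from $\ket{0}_B \otimes \ket{\Phi_d}_{SA} \otimes \ket{0}_C$, the Hadamard on $C$ produces $\ket{0}_B \ket{\Phi_d}_{SA}\,\tfrac{1}{\sqrt2}(\ket{0}_C + \ket{1}_C)$. Applying $U_{block}$ on $BS$ conditioned on $C$ being in $\ket{0}$ (the open control), and using the block-encoding identity to write $(U_{block}\otimes\Id_A)(\ket{0}_B\ket{\Phi_d}_{SA}) = \ket{0}_B (\tfrac{\widetilde H}{\Delta}\otimes\Id_A)\ket{\Phi_d}_{SA} + \ket{1}_B\ket{\chi}_{SA}$ for some subnormalized $\ket{\chi}$, the state becomes $\tfrac{1}{\sqrt2}\bigl[\ket{0}_C\bigl(\ket{0}_B(\tfrac{\widetilde H}{\Delta}\otimes\Id_A)\ket{\Phi_d}_{SA} + \ket{1}_B\ket{\chi}_{SA}\bigr) + \ket{1}_C \ket{0}_B\ket{\Phi_d}_{SA}\bigr]$. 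Measuring $B$ in the computational basis and conditioning on outcome $\ket{0}$ discards the junk branch $\ket{1}_B\ket{\chi}$ and leaves, on $SAC$, the unnormalized vector $\tfrac{1}{\sqrt2}\bigl(\ket{0}_C(\tfrac{\widetilde H}{\Delta}\otimes\Id_A)\ket{\Phi_d}_{SA} + \ket{1}_C\ket{\Phi_d}_{SA}\bigr)$.

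Finally I would compute the heralding probability as the squared norm of this vector. The cross term vanishes by orthogonality of $\ket{0}_C$ and $\ket{1}_C$, so it equals $\tfrac12\bigl(\bra{\Phi_d}_{SA}(\tfrac{\widetilde H^2}{\Delta^2}\otimes\Id_A)\ket{\Phi_d}_{SA} + 1\bigr)$. Using $\bra{\Phi_d}_{SA}(\widetilde H^2\otimes\Id_A)\ket{\Phi_d}_{SA} = d^{-1}\Tr(\widetilde H^2) = \norm{\widetilde{\textbf{c}}}_2^2$ (the same orthonormal-basis computation that yields \eqref{eq:alpha}), this becomes $\tfrac12(\norm{\widetilde{\textbf{c}}}_2^2/\Delta^2 + 1) = \tfrac{\norm{\widetilde{\textbf{c}}}_2^2}{2\Delta^2} + \tfrac12 = \gamma^2/2$, which is strictly greater than $\tfrac12$ whenever $\widetilde H \neq 0$. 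Renormalizing the heralded branch by $\sqrt{\gamma^2/2} = \gamma/\sqrt2$ cancels the $1/\sqrt2$ and yields exactly $\ket{\psi_c'} = \gamma^{-1}\bigl(\ket{0}_C(\tfrac{\widetilde H}{\Delta}\otimes\Id_A)\ket{\Phi_d}_{SA} + \ket{1}_C\ket{\Phi_d}_{SA}\bigr)$, with $\gamma$ as in \eqref{eq:gamma}; the procedure used a single controlled application of $U_{block}$.

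There is no deep obstacle here — the argument is essentially bookkeeping — but the points that need care are: (i) tracking which registers $U_{block}$ acts on (it touches $B$ and $S$, acting as the identity on $A$ and $C$), so that the maximal entanglement with $A$ is preserved and the block-encoding identity is applied only on the $S$ factor; (ii) confirming the heralding probability is genuinely $\ge \tfrac12$, with equality only in the degenerate case $\widetilde H = 0$; and (iii) matching the normalization $\gamma$ of \eqref{eq:gamma} to the squared norm of the heralded branch, which all reduces to the identity $\bra{\Phi_d}(\widetilde H^2\otimes\Id)\ket{\Phi_d} = \Tr(\widetilde H^2)/d$.
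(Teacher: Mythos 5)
Your proposal is correct and follows essentially the same route as the paper's own proof in Appendix~\ref{App:ProduceChoiState}: prepare $\ket{0}_B\ket{\Phi_d}_{SA}\ket{0}_C$, Hadamard on $C$, zero-controlled $U_{block}$ on $BS$, then herald on the $\ket{0}_B$ outcome and compute the success probability via $\bra{\Phi_d}(\widetilde{H}^\dagger\widetilde{H}\otimes\Id)\ket{\Phi_d}=\norm{\widetilde{\textbf{c}}}_2^2$. Your explicit caveat that strict inequality $\mathbf{Pr}(0)>\tfrac12$ requires $\widetilde{H}\neq 0$ is a fair observation that the paper handles by the lower bound $\norm{\widetilde{\textbf{c}}}_2^2\ge\norm{\widetilde{H}}_2^2/d$.
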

Figure~\ref{figure: ProduceChoiState} gives an explicit circuit for generating the pseudo-Choi state from the Hamiltonian block-encoding $U_{block}$, and the details of the proof of Lemma~\ref{lemma: ProduceChoiState} can be found in Appendix~\ref{App:ProduceChoiState}.  Note that the restriction on the evolution time considered above is inherited from Lemma~\ref{lemma:ProduceBlockEncoding}.

\subsection{Learning via Shadow Tomography}


Since the block-encoding $U_{block}$ is used to produce our resource states, the learning procedure will generate an approximation of the Hamiltonian within the block-encoding. This means that there will be additional error introduced due to the fact that the Hamiltonian in the block-encoding will not be exactly equal to the true Hamiltonian.

We now rephrase the Hamiltonian learning problem in terms of the true Hamiltonian, its block-encoded approximation, and the Hamiltonian recovered by the learning procedure, as well as their corresponding coefficient vectors, as follows:

\begin{definition}{\label{def:Hamdefs}
Let us take the following three definitions for the true Hamiltonian $H$, the block-encoded Hamiltonian $\tilde{H}$, and the approximation $\hat{H}$ that results from the learning process:
\begin{enumerate}
\item Let $H := \sum_{m=0}^{M-1} c_m H_m$ be the Hamiltonian we seek to learn, and let $\textbf{c}$ be the vector of its Hamiltonian coefficients (that is, the coefficients of its representation in some orthonormal basis $\{H_m \mid m \in \mathbb{Z}_{M}\}$).

\item Let $\widetilde{H} := \sum_{m=0}^{M-1} \widetilde{c}_m H_m$ be the approximation of $H$ that results from the block-encoding procedure, and let $\widetilde{\textbf{c}}$ be the vector of its Hamiltonian coefficients.

\item Let $\hat{H} := \sum_{m=0}^{M-1} \hat{c}_m H_m$ be the approximation of $\widetilde{H}$ that results from the learning procedure, and let $\hat{\textbf{c}}$ be the vector of its Hamiltonian coefficients.
\end{enumerate}
}
\end{definition}

These above definitions naturally lead us to the following definition for the Hamiltonian learning problem.

\begin{definition}[Unitary Hamiltonian Learning Problem]
\label{def:HamLearningProblemShort}
The Hamiltonian learning problem is to compute, for any $\epsilon>0$ and $\delta>0$, a vector $\hat{\textbf{c}}$ such that with probability at least $1-\delta$,
\begin{align}
    \norm{\hat{\textbf{c}} -  \textbf{c}}_2 \leq \epsilon,
\end{align}
by querying $U = e^{-iHt}$ and $U^\dagger$.
\end{definition}

We also define the task of learning the approximate Hamiltonian $\widetilde{H}$ which appears in the block-encoding as follows:
\begin{definition} [Block-Encoded Hamiltonian Learning Problem\label{def:BELearningProblem}]{

The block-encoded Hamiltonian learning problem is to compute a vector $\hat{\textbf{c}}$ such that with probability at least $1-\delta$,
\begin{align}
    \norm{\hat{\textbf{c}} -  \widetilde{\textbf{c}}}_2 \leq \epsilon_c,
\end{align}
by querying $U_{block}$

}
\end{definition}

Recall that Lemma~\ref{lemma:ProduceBlockEncoding} also gives us $\epsilon_b$, the upper bound on the error of the block-encoded Hamiltonian.


From the pseudo-Choi state \eqref{eq:ChoiState2}, we can learn the Hamiltonian coefficients in similar fashion to the PC HLP considered above.  Specifically, recall that
        \begin{align}
            O_l &= \ket{0}_C\bra{1}_C \otimes \left(H_l \otimes \Id\right) \ket{\Phi_d}_{SA} \bra{\Phi_d}_{SA} 
        \end{align}
        where $H_l$ is a Hamiltonian term, and let 
        \begin{align}
            O_{\gamma} &= \ket{1}_C\bra{1}_C \otimes \ket{\Phi_d}_{SA} \bra{\Phi_d}_{SA}
        \end{align}
 Define the Pseudo-Choi state for the density operator $\rho_c'$ according to the following
        \begin{align}
            \rho_c' := \ket{\psi_c'}\bra{\psi_c'},
        \end{align}
        so that 
        \begin{align}
            {\rm Tr}(\rho_c' O_{l}) = \frac{c_l}{\Delta \gamma^2},\label{eq:TrrhoOl}
        \end{align}
        where $c_l$ is the $l^{th}$ Hamiltonian coefficient and
        \begin{align}
            {\rm Tr}(\rho_c' O_{\gamma}) = \frac{1}{\gamma^2}\label{eq:gamma2}
        \end{align}
        where $\gamma$ is defined in~\eqref{eq:gamma}
The entire learning process, outlined in Algorithm~\ref{algorithm: FindCoeffUnitary}, therefore consists of first generating a sufficient amount of pseudo-Choi states \eqref{eq:ChoiState2} from the Hamiltonian, running Algorithm~\ref{algorithm: FindCoeffClifford}, and then scaling the result by $\Delta$.

\begin{algorithm}[t]
\caption{$\mathtt{FindCoeffUnitary}(U, U^\dagger, \Delta, \boldsymbol{H}, n, N_s)$: Learn the Hamiltonian Coefficients by Querying $U = e^{-iHt}$ and $U^\dagger$}\label{algorithm: FindCoeffUnitary}
\textbf{Input:} 

$U$: Time evolution operator $e^{-iHt}$ acting on Hilbert space $\mathcal{H}_S$

$\Delta:$ Normalization constant equal to $\frac{\pi}{2t}$ where $t \in \left(0, \frac{1}{2 \norm{H}}\right]$ is the evolution time used in $U$, which dictates the probability of success for projection onto the pseudo-Choi state measurement.

$\boldsymbol{H}$: Set of M k-local Hamiltonian terms $H_l$ whose coefficients $c_m$ are desired

\Comment{Note: $\boldsymbol{H}$ is a set of classical operators}

$n$: Number of qubits in the system on Hilbert space $\mathcal{H}_S$ 

$N_s$: Number pseudo-Choi states desired

\begin{enumerate}
    \item \textbf{Generate pseudo-Choi states}
    
    Run the circuit from Figure~\ref{figure: ProduceChoiState} on $N \in \mathcal{\widetilde{\mathcal{O}}}\left(N_s\right)$ independent input states to produce $N_s$ pseudo-Choi states of the form $\rho_c' = \ket{\psi_c'}\bra{\psi_c'}$, where $\ket{\psi_c'} = \frac{\ket{0}_C(\frac{\widetilde{H}}{\Delta} \otimes \Id_A)\ket{\Phi_d}_{SA} + \ket{1}_C\ket{\Phi_d}_{SA}}{\gamma}$ with high probability.
    
    \Comment{An explicit upper bound on $N$ is given in Theorem~\ref{thm:querycomplexityUB}}

    \Comment{Recall that $U_{block}$ can be produced by querying $U$ and $U^\dagger$ as per Lemma~\ref{lemma:ProduceBlockEncoding}, using the methods in \cite{Gily_n_2019}}

    \item \textbf{Run Algorithm~\ref{algorithm: FindCoeffClifford} to estimate $\frac{\hat{\mathbf{c}}}{\Delta}$}
    
    $\mathtt{Set}$ $\hat{\mathbf{c}} \gets \mathtt{FindCoeffClifford}(\rho_c'^{\otimes N_s}, \boldsymbol{H}, n, N_s)$ 
    
    \Comment{Algorithm~\ref{algorithm: FindCoeffPauli} can be used as well}
    
    \item \textbf{Rescale the result by $\Delta$ as per equation \eqref{eq:TrrhoOl}}
    
    $\mathtt{Return}$ $\Delta\hat{\mathbf{c}}$
\end{enumerate}
\end{algorithm}


\subsection{Query Complexity for Hamiltonian Learning from Time Evolution Unitaries}\label{section: blackboxQueryComplexity}
Rather than sample complexity, it is more meaningful to look at the query complexity for this model - that is, the number of (controlled) queries to $U = e^{-iHt}$. The main result of this section is Theorem~\ref{thm:querycomplexityUB}, which gives an upper bound on the number of times the time-evolution unitary must be queried in order to solve the Unitary HLP~\ref{def:HamLearningProblemShort}. We begin by finding a similar upper bound for an easier task, in particular solving the Block-Encoded HLP~\ref{def:BELearningProblem}.

\begin{theorem} [Query Complexity Upper Bound for Solving the Block-Encoded HLP \label{thm:samplecomplexityUB2}]{

The number of queries to the block-encoding \eqref{eq:blockencoding} required to solve the Block-Encoded Hamiltonian Learning Problem~\ref{def:BELearningProblem} for a Hamiltonian whose $M$ Hamiltonian terms are known, is at most
\begin{align}
    \widetilde{N} & \in \mathcal{O}\left( \frac{M \gamma^2 ({\widetilde{c}_{max}}^2 + \frac{1}{t^2}) \log{(M/\delta)}}{\epsilon_c^2} \right)
\end{align}

}
\end{theorem}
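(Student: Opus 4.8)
The plan is to treat the Block-Encoded HLP as an instance of the shadow-tomography analysis already carried out for the PC HLP in Theorem~\ref{thm:samplecomplexityUB}, with two changes: the resource state is now the pseudo-Choi state $\rho_c'$ of the \emph{rescaled} Hamiltonian $\widetilde H/\Delta$ — so every coefficient extracted carries an extra factor $\Delta=\pi/(2t)$ that must be undone — and the states $\rho_c'$ are themselves produced by heralded calls to $U_{block}$. For the latter, Lemma~\ref{lemma: ProduceChoiState} says that one controlled application of $U_{block}$ in the circuit of Figure~\ref{figure: ProduceChoiState} yields a copy of $\rho_c'=\ketbra{\psi_c'}{\psi_c'}$ whenever the block-encoding qubit is measured to be $\ket 0_B$, an event of probability at least $1/2$; a Chernoff bound then shows that $\widetilde O(N_s)$ queries to $U_{block}$ produce $N_s$ independent copies of $\rho_c'$ except with probability $\le\delta/2$. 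It therefore suffices to bound the number $N_s$ of copies of $\rho_c'$ consumed by the shadow-tomography stage of Algorithm~\ref{algorithm: FindCoeffUnitary}.

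By \eqref{eq:TrrhoOl} and \eqref{eq:gamma2}, that stage must estimate the $2M+1$ expectation values $\{{\rm Tr}(\rho_c' O_l^{+}),{\rm Tr}(\rho_c' O_l^{-})\}_{l}\cup\{{\rm Tr}(\rho_c' O_\gamma)\}$, and since $O_\gamma$ has the same operator form as $O_\alpha$, Lemma~\ref{lem:shadownormUB} still gives $\max_i\norm{O_i}^2_{shadow}\le 6$; hence the classical-shadows guarantee of~\cite{Huang_2020} delivers all these estimates within additive error $\epsilon_s$, with failure probability $\le\delta/2$, from $N_s=O(\log(M/\delta)/\epsilon_s^2)$ copies. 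I would next record an \emph{a priori} bound $\gamma^2\le 2$: by Lemma~\ref{lemma:ProduceBlockEncoding}, $\norm{\widetilde H t}\le\norm{Ht}+\epsilon_b\le 1$, so $\norm{\widetilde H}\le 1/t<\Delta$, and since $\{H_m\}$ is orthonormal for $\langle A,B\rangle=d^{-1}{\rm Tr}(AB)$ one has $\norm{\widetilde{\mathbf c}}_2^2=d^{-1}{\rm Tr}(\widetilde H^2)\le\norm{\widetilde H}^2<\Delta^2$, whence $\gamma^2=\norm{\widetilde{\mathbf c}}_2^2/\Delta^2+1<2$.

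Now I would propagate the error. Writing the estimates as $\widehat o_l=\widetilde c_l/(\Delta\gamma^2)+e_l$ and $\widehat o_\gamma=1/\gamma^2+e_\gamma$ with $|e_l|,|e_\gamma|\le\epsilon_s$, the reconstructed coefficient is $\widehat c_l=\Delta\,\widehat o_l/\widehat o_\gamma$; since $1/\gamma^2\ge 1/2$ the denominator $\widehat o_\gamma\ge 1/2-\epsilon_s$ stays bounded away from $0$, so a first-order expansion of the quotient gives $|\widehat c_l-\widetilde c_l|=O\big(\gamma^2(\Delta+\widetilde c_{\max})\epsilon_s\big)$ for every $l$, and summing squares yields $\norm{\widehat{\mathbf c}-\widetilde{\mathbf c}}_2=O\big(\sqrt M\,\gamma^2\sqrt{\Delta^2+\widetilde c_{\max}^2}\,\epsilon_s\big)$. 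Taking $\epsilon_s=\Theta\big(\epsilon_c/(\sqrt M\,\gamma^2\sqrt{\Delta^2+\widetilde c_{\max}^2})\big)$ makes this $\le\epsilon_c$; substituting into $N_s=O(\log(M/\delta)/\epsilon_s^2)$, using $\Delta^2=\Theta(1/t^2)$ and absorbing one factor of $\gamma^2=O(1)$ into the constant, gives $\widetilde N=\widetilde O(N_s)=O\big(M\gamma^2(\widetilde c_{\max}^2+1/t^2)\log(M/\delta)/\epsilon_c^2\big)$, as claimed; a union bound over the two failure events closes the argument.

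The one genuinely delicate step is the error propagation: one must verify that $\widehat o_\gamma$ concentrates away from $0$ so the ratio $\widehat o_l/\widehat o_\gamma$ is stable — which is precisely where the hypothesis $t\le 1/(2\norm{H})$ enters, through $\gamma^2\le 2$ — and track the constants through the linearized quotient carefully enough to see that only a single net factor of $\gamma^2$ survives, rather than the $\gamma^4$ one would get by blindly copying the $\alpha^4$ scaling of Theorem~\ref{thm:samplecomplexityUB}. Everything else is routine bookkeeping.
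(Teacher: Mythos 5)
Your proposal is correct and follows essentially the same route as the paper: the same $2M+1$ decoding operators and shadow-norm bound, the same choice $\epsilon_s=\Theta\bigl(\epsilon_c/(\sqrt M\,\gamma^2\sqrt{\widetilde c_{\max}^2+\Delta^2})\bigr)$, and the same Chernoff argument for converting heralded preparations into queries to $U_{block}$. The only (immaterial) difference is in the final bookkeeping of the normalization constant: the paper obtains the net $\gamma^2$ by dividing $N_s\propto\gamma^4$ by the heralding success probability $\gamma^2/2$, whereas you bound the success probability below by $1/2$ and then absorb one factor of $\gamma^2=O(1)$ using $\gamma^2\le 2$ — both are valid since $\gamma^2\in[1,2]$.
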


\begin{proof}[Proof of Theorem~\ref{thm:samplecomplexityUB2}]
        
    The sample complexity of predicting the expectation values of the $2M$ operators in Algorithm~\ref{algorithm: FindCoeffUnitary} using classical shadows is
    \begin{align}
        N_s & \in \mathcal{O}\left( \frac{\log(M/\delta_s)}{\epsilon_s^2} \right) \label{eq:basicShadowSampleComplexity}
    \end{align}
    This is the number of pseudo-Choi states \eqref{eq:ChoiState2} required to predict $\frac{1}{\gamma^2}$ and $\hat{o}_m = \frac{\widetilde{c}_m}{\gamma^2 \Delta}$  to within error $\epsilon_s$ for all $m \in \mathbb{Z}_M$ with probability at least $1-\delta_s$. Multiplying each $\hat{o}_m$ by $\gamma^2 \Delta$ gives the coefficients, $\widetilde{c}_m$, of the block-encoded Hamiltonian within error
    \begin{align}
        \epsilon_m \equiv \gamma^2 \epsilon_s \sqrt{{\widetilde{c}_m}^2 + \Delta^2}.
    \end{align}
    Recall that learning these coefficients is the goal of the Block-Encoded HLP in Definition~\ref{def:BELearningProblem}.
    The $l_2$ error between the vector of Hamiltonian coefficients returned by Algorithm~\ref{algorithm: FindCoeffUnitary} and that of the block-encoded Hamiltonian is
    \begin{align}
        \norm{\widetilde{\textbf{c}} - \hat{\textbf{c}}}_2 &\leq \sqrt{M} \max_{m} \epsilon_m \nonumber \\
        & = \sqrt{M}\gamma^2 \epsilon_s \sqrt{{\widetilde{c}_{max}}^2 + \Delta^2},
    \end{align}
    where $\widetilde{c}_{max}$ is the coefficient with the largest absolute value. To satisfy the Block-Encoded HLP~\ref{def:BELearningProblem}, we let the upper bound on this error be equal to $\epsilon_c$. It follows that the following value of $\epsilon_s$ is sufficient to solve the Block-Encoded HLP:
    \begin{align}
        \epsilon_s &= \frac{\epsilon_c}{\sqrt{M}\gamma^2 \sqrt{{\widetilde{c}_{max}}^2 + \Delta^2}}.\label{eq:epsc}
    \end{align}

    Substituting this value for $\epsilon_s$ into Equation~\eqref{eq:basicShadowSampleComplexity} means that the total number of pseudo-Choi states we need to solve the Hamiltonian Learning Problem is
    \begin{align}
        N_s & \in \mathcal{O}\left( \frac{M \gamma^4 ({\widetilde{c}_{max}}^2 + \Delta^2) \log{(M/\delta_s)}}{\epsilon_c^2} \right)
    \end{align}

    Now that we have an upper bound on the number of pseudo-Choi states \eqref{eq:ChoiState2} required to solve the Block-Encoded HLP~\ref{def:BELearningProblem} with probability at least $1 - \delta_s$, we would like to know how many queries to $U_{block}$ are needed to solve the Block-Encoded HLP with probability at least $1-\delta$. We must therefore consider the failure probability of the learning algorithm.

    To overestimate the failure probability of the entire learning procedure, we assume that it will always fail if less than $N_s$ pseudo-Choi states are produced, and also that the success probability of classical shadows is the same as long as the number of pseudo-Choi states generated is at least $N_s$. The probability of our learning algorithm failing is then
    \begin{align}
        \textbf{P}_{fail} & \leq \delta_s + \delta_{N_s} \nonumber \\
        & \equiv \delta,
    \end{align}
    where $\delta_s$ is the probability of classical shadows failing if we have $N_s$ pseudo-Choi states, $\delta_{N_s}$ is the probability of producing less than $N_s$ pseudo-Choi states after $\widetilde{N}$ queries to $U_{block}$, and the third line comes from the requirement of the Block-Encoded HLP~\ref{def:BELearningProblem} which says that the failure probability must be at most $\delta$. We then let $\delta_s \equiv \delta_{N_s} \equiv \frac{\delta}{2}$. This choice of $\delta_s$ gives
    \begin{align}
        N_s & \in \mathcal{O}\left( \frac{M \gamma^4 ({\widetilde{c}_{max}}^2 + \Delta^2) \log{(M/\delta)}}{\epsilon_c^2}\right), 
    \end{align}
    
    Next, it can be shown via Chernoff bound that
    \begin{align}
        \widetilde{N} & \in \mathcal{O}\left(\frac{N_s}{\gamma^2}\right) \label{eq: chernoff}
    \end{align}
    queries to the the block-encoding $U_{block}$ are needed to ensure $N_s$ pseudo-Choi states are produced with probability at least $1-\delta$. Note that the asymptotic notation may be misleading here since $\gamma^2$ is larger than 1, but $\widetilde{N}$ is in fact larger than $N_s$ (see Appendix~\ref{App: Hoeffding}). Combining this with the upper bound on $N_s$ above, and using $\Delta = \frac{\pi}{2t}$, we get
    \begin{align}
        \widetilde{N} & \in \mathcal{O}\left( \frac{M \gamma^2 ({\widetilde{c}_{max}}^2 + \frac{1}{t^2}) \log{(M/\delta)}}{\epsilon_c^2}\right),
    \end{align}
    proving the result in Theorem~\ref{thm:samplecomplexityUB2}.
    
\end{proof}

\subsubsection{Choosing Sufficient Values for the Block-Encoding Error and Block-Encoding Learning Error}

Theorem~\ref{thm:samplecomplexityUB2} gives an upper bound on the number of Hamiltonian block-encodings needed to solve the Block-Encoded HLP~\ref{def:BELearningProblem}. Note that solving the Block-Encoded HLP is just a sub-task of solving the Unitary HLP~\ref{def:HamLearningProblemShort}, with the missing step being the generation of the Hamiltonian block-encodings from the time evolution unitary $e^{-iHt}$. This means that to determine the total query complexity, we have to factor in the number of queries required to produce the desired number of block-encodings of the Hamiltonian. In addition, since the block-encoding generated by the procedure described in Lemma~\ref{lemma:ProduceBlockEncoding} has an error of $\epsilon_b$, the learning algorithm will not approximate the coefficients of the Hamiltonian terms of $H$, but those of the approximation of $H$ in the block-encoding. Thus, the block-encoding error $\epsilon_b$ will have a further impact on the total query complexity.

We now seek values for the block-encoding error $\epsilon_b$ and the block-encoding learning error $\epsilon_c$, such that the total error in learning the Hamiltonian is at most $\epsilon$, as required by the Unitary HLP~\ref{def:HamLearningProblemShort}. We begin with the following proposition:
\begin{proposition}\label{prop: coeffErrUB}
    The error in the vector, $\mathbf{\widetilde{c}}$, of Hamiltonian coefficients of the block-encoded Hamiltonian $\widetilde{H}$ is upper bounded by 
    \begin{align*}
        \norm{\widetilde{\textbf{c}} - \textbf{c}}_2 & \leq \frac{\sqrt{M} \epsilon_b}{t},
    \end{align*}
    where $M$ is the number of Hamiltonian terms, $t$ is the evolution time used to generate the block-encoding, and $\epsilon$ is the block-encoding error, as in Lemma~\ref{lemma:ProduceBlockEncoding}.
\end{proposition}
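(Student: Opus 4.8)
The plan is to reduce the claimed $\ell_2$ bound to a per-coefficient operator-norm estimate. Since $\{H_m\}$ is an orthonormal operator basis with respect to $\langle H_j, H_k\rangle = d^{-1}\Tr(H_j H_k)$, and $H = \sum_{m} c_m H_m$, $\widetilde{H} = \sum_{m} \widetilde{c}_m H_m$ as in Definition~\ref{def:Hamdefs}, taking the inner product of $H - \widetilde{H}$ against $H_m$ recovers the coefficient differences: $c_m - \widetilde{c}_m = d^{-1}\Tr\!\big(H_m(H - \widetilde{H})\big)$. Consequently $\norm{\widetilde{\mathbf{c}} - \mathbf{c}}_2^2 = \sum_{m=0}^{M-1}\big|d^{-1}\Tr(H_m(H-\widetilde{H}))\big|^2$, so it suffices to bound each term in the sum.

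First I would show that $\big|d^{-1}\Tr(H_m(H-\widetilde{H}))\big| \le \norm{H - \widetilde{H}}$ (operator norm). Because each $H_m$ is both Hermitian and unitary, it has a spectral decomposition $H_m = \sum_i \lambda_i \ketbra{v_i}{v_i}$ with $\lambda_i \in \{+1,-1\}$; hence $d^{-1}\Tr(H_m(H-\widetilde{H})) = d^{-1}\sum_i \lambda_i \bra{v_i}(H-\widetilde{H})\ket{v_i}$, whose modulus is at most $d^{-1}\sum_i \big|\bra{v_i}(H-\widetilde{H})\ket{v_i}\big| \le \norm{H-\widetilde{H}}$, since there are $d$ summands each bounded by the operator norm. (Equivalently, Cauchy--Schwarz in the Hilbert--Schmidt inner product gives $|c_m - \widetilde{c}_m| \le \sqrt{d^{-1}\Tr((H-\widetilde{H})^2)} \le \norm{H-\widetilde{H}}$, using $\Tr(X^2) \le d\norm{X}^2$ for Hermitian $X$.)

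Next, Lemma~\ref{lemma:ProduceBlockEncoding} guarantees $\norm{Ht - \widetilde{H}t} \le \epsilon_b$, so dividing by $t > 0$ yields $\norm{H - \widetilde{H}} \le \epsilon_b/t$, and therefore $|c_m - \widetilde{c}_m| \le \epsilon_b/t$ for every $m$. Summing the $M$ squared coefficient errors gives $\norm{\widetilde{\mathbf{c}} - \mathbf{c}}_2 \le \sqrt{M}\,\epsilon_b/t$, the claimed bound.

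I do not anticipate a genuine obstacle here; the one step that deserves care is the per-coefficient estimate, where one must avoid picking up a spurious factor of $d$ from the trace --- the cleanest route is to use that $H_m$ is a Hermitian unitary, so that its normalized-trace functional acts as a signed average of diagonal matrix elements and is therefore operator-norm contractive. I would also note in passing that replacing the final triangle inequality by Parseval's identity would actually give the sharper bound $\norm{\widetilde{\mathbf{c}} - \mathbf{c}}_2 \le \epsilon_b/t$, but the $\sqrt{M}$ form stated in the proposition is all that is subsequently needed when it is combined with Theorem~\ref{thm:samplecomplexityUB2}.
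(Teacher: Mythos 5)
Your proof is correct and follows essentially the same route as the paper's: both extract $c_m - \widetilde{c}_m = d^{-1}\Tr\big(H_m(H-\widetilde{H})\big)$, bound each coefficient difference by the operator norm $\norm{H-\widetilde{H}} \le \epsilon_b/t$ (the paper via von Neumann's trace inequality, you via the spectral decomposition of the Hermitian unitary $H_m$), and then pay a factor of $\sqrt{M}$ to pass to the vector $2$-norm. Your closing remark that Parseval would remove the $\sqrt{M}$ entirely is also correct, though, as you note, the stated bound is all that is needed downstream.
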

\begin{proof}[Proof of Proposition~\ref{prop: coeffErrUB}]
\begin{align}
    \abs{\widetilde{c}_m - c_m} &= \abs{\frac{1}{d}\text{Tr}(\widetilde{H} H_m) - \frac{1}{d}\text{Tr}(H H_m)} \nonumber \\
    & = \frac{1}{d}\abs{\text{Tr}((\widetilde{H}-H) H_m)} \nonumber \\
    & \leq \frac{1}{d}\sum_{i=1}^k\sigma_i(\widetilde{H}-H)\sigma_i(H_m) \nonumber \\
    & \leq \frac{1}{d}\sigma_{max}(\widetilde{H}-H)\sum_{i=1}^k\sigma_i(H_m) \nonumber \\
    & = \frac{1}{d}\sigma_{max}(\widetilde{H}-H) \sum_{i=1}^k\abs{\lambda_i(H_m)} \nonumber  \\
    & = \frac{1}{d}\abs{\sigma_{max}(\widetilde{H}-H)}\text{Tr}\left(\sqrt{H_m^\dagger H_m}\right) \nonumber \\
    & = \abs{\sigma_{max}(\widetilde{H}-H)} \nonumber \\
    & = \norm{\widetilde{H}-H}_2
\end{align}
In the first inequality we have used the Von Neumann trace inequality, where $\sigma_i(A)$ is the $i^{th}$ of the $k$ singular values of A when sorted in descending order. Likewise, $\sigma_{max}(\widetilde{H}-H)$ is the singular value of $\widetilde{H}-H$ with the largest magnitude. 

Noting that this result does not depend on $m$, we use it to finish the proof as follows:
\begin{align}
    \norm{\widetilde{\textbf{c}} - \textbf{c}}_2 & \leq \sqrt{M} \norm{\widetilde{\textbf{c}} - \textbf{c}}_\infty \nonumber  \\
    &= \sqrt{M} \max_m \abs{\widetilde{c}_m - c_m} \nonumber \\
    & \leq \sqrt{M} \norm{\widetilde{H} - H}_2 \nonumber \\
    &\leq \frac{\sqrt{M} \epsilon_b}{t}
\end{align}
where we use the result of Lemma~\ref{lemma:ProduceBlockEncoding} in the last line.
\end{proof}

The next step in choosing sufficient values for $\epsilon_b$ and $\epsilon_c$ is to upper bound the total learning error from solving the Unitary HLP in terms of $\epsilon_b$ and $\epsilon_c$, as in Lemma~\ref{lem:LearningErrorUB}.

\begin{lemma} [Upper Bound on Error for the Unitary HLP\label{lem:LearningErrorUB}]{
The total error between the vector of the true Hamiltonian coefficients and the vector recovered by using Algorithm~\ref{algorithm: FindCoeffUnitary} to solve the Unitary HLP is upper bounded by
\begin{align*}
    \norm{\hat{\textbf{c}} - \textbf{c}}_2 & \leq \epsilon_c + \frac{\sqrt{M} \epsilon_b}{t}\label{totalerrbound2}
\end{align*}
where $\epsilon_c$ is the error in learning the 
block-encoded Hamiltonian (see Definition~\ref{def:BELearningProblem}) and $\epsilon_b$ is the error in the block-encoding itself (see Lemma~\ref{lemma:ProduceBlockEncoding}).

}
\end{lemma}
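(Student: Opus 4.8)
The plan is to prove this by a single application of the triangle inequality, bridging the three coefficient vectors $\textbf{c}$, $\widetilde{\textbf{c}}$, and $\hat{\textbf{c}}$ introduced in Definition~\ref{def:Hamdefs} using the two error bounds already established. The key observation is that Algorithm~\ref{algorithm: FindCoeffUnitary} does not directly attack the true Hamiltonian $H$: it queries $U_{block}$, which by Lemma~\ref{lemma:ProduceBlockEncoding} encodes the approximation $\widetilde{H}$ rather than $H$. Hence the output $\hat{\textbf{c}}$ is, by construction, a solution to the Block-Encoded HLP (Definition~\ref{def:BELearningProblem}) for the target $\widetilde{\textbf{c}}$, so with probability at least $1-\delta$ we have $\norm{\hat{\textbf{c}} - \widetilde{\textbf{c}}}_2 \leq \epsilon_c$. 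This is the ``learning'' half of the error.

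Next I would invoke Proposition~\ref{prop: coeffErrUB}, which controls the ``modelling'' half of the error: the discrepancy between the coefficient vector of the block-encoded Hamiltonian and that of the true Hamiltonian is at most $\sqrt{M}\,\epsilon_b/t$, a deterministic bound that follows from the block-encoding guarantee $\norm{Ht - \widetilde H t}\le \epsilon_b$ of Lemma~\ref{lemma:ProduceBlockEncoding} together with the orthonormality of the operator basis. Combining the two bounds via the triangle inequality for the vector $2$-norm gives
\begin{align*}
    \norm{\hat{\textbf{c}} - \textbf{c}}_2 \;\le\; \norm{\hat{\textbf{c}} - \widetilde{\textbf{c}}}_2 + \norm{\widetilde{\textbf{c}} - \textbf{c}}_2 \;\le\; \epsilon_c + \frac{\sqrt{M}\,\epsilon_b}{t},
\end{align*}
which is exactly the claimed bound, and it holds with probability at least $1-\delta$ since only the first term carries any randomness.

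There is essentially no serious obstacle here: the lemma is a bookkeeping step that isolates the two independent sources of error (finite sampling in shadow tomography versus the polynomial-approximation error in the QSVT-based block-encoding) so that later sections can budget $\epsilon_c$ and $\epsilon_b$ separately. The only point requiring mild care is to state clearly that $\norm{\hat{\textbf{c}}-\widetilde{\textbf{c}}}_2\le\epsilon_c$ is the \emph{defining} success condition of Algorithm~\ref{algorithm: FindCoeffUnitary} (inherited from Algorithm~\ref{algorithm: FindCoeffClifford} after rescaling by $\Delta$, using \eqref{eq:TrrhoOl}), rather than something that needs re-deriving, and to note that the failure probability $\delta$ already accounts for both the shadow-tomography failure and the probabilistic heralding of the pseudo-Choi state, as handled in the proof of Theorem~\ref{thm:samplecomplexityUB2}.
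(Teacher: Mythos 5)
Your proposal is correct and follows exactly the same route as the paper: a triangle inequality through $\widetilde{\textbf{c}}$, bounding $\norm{\hat{\textbf{c}} - \widetilde{\textbf{c}}}_2 \leq \epsilon_c$ by the guarantee of the Block-Encoded HLP (Theorem~\ref{thm:samplecomplexityUB2}) and $\norm{\widetilde{\textbf{c}} - \textbf{c}}_2 \leq \sqrt{M}\,\epsilon_b/t$ by Proposition~\ref{prop: coeffErrUB}. Your additional remarks on the probabilistic nature of the first term are a sensible clarification but do not change the argument.
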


\begin{proof}[Proof of Lemma~\ref{lem:LearningErrorUB}]
\begin{align}
    \norm{\hat{\textbf{c}} - \textbf{c}}_2 & = \norm{\hat{\textbf{c}} - \widetilde{\textbf{c}} + \widetilde{\textbf{c}} - \textbf{c}}_2 \nonumber \\
    & \leq \norm{\hat{\textbf{c}} - \widetilde{\textbf{c}}}_2 +  \norm{\widetilde{\textbf{c}} - \textbf{c}}_2 \nonumber \\
    & \leq \epsilon_c +  \norm{\widetilde{\textbf{c}} - \textbf{c}}_2
\end{align}
Note that the second inequality is guaranteed by querying the Hamiltonian block-encoding $U_{block}$ the sufficient amount of times to solve the Block-Encoded HLP \eqref{def:BELearningProblem}. Theorem~\ref{thm:samplecomplexityUB2} gives an upper bound on this number of queries.

Combining this with the result of Proposition~\ref{prop: coeffErrUB} gives the result in Lemma~\ref{lem:LearningErrorUB}.
\end{proof}

Finally, to achieve the desired upper bound of $\norm{\hat{\textbf{c}} - \textbf{c}}_2 \leq \epsilon$ for the total error of the Hamiltonian learning problem \eqref{def:HamLearningProblemShort}, we simply choose
$\epsilon_b$ and $\epsilon_c$ such that the right side of the inequality in Lemma~\ref{lem:LearningErrorUB} is less than $\epsilon$. Since the learning process is much more expensive than the block-encoding process, it makes sense to allocate most of the error budget to $\epsilon_c$, as will be discussed in the following section.

\subsubsection{Query Complexity Upper Bound}\label{section: QueryComplexity}
With the previous results in place, we can now state our bounds on the query complexity for the Unitary HLP~\ref{def:HamLearningProblemShort}. Unlike in the Block-Encoded HLP, and the PC HLP, we now consider unitary preparation of the pseudo-Choi states via the process described in Section~\ref{section: GenerateChoiState}.  Note that in the following we do not make use of amplitude amplification as the success probability for this learning problem is at least $1/2$, which implies that amplitude amplification will not lead to asymptotic advantages.



\begin{theorem} [Query Complexity Upper Bound for Solving the Unitary HLP \label{thm:querycomplexityUB}]{

The number of queries to the time evolution unitary $e^{-iHt}$ and its inverse required to solve the Unitary Hamiltonian Learning Problem~\ref{def:HamLearningProblemShort} for a Hamiltonian with $M$ Hamiltonian terms, is at most
\begin{align}
    N & \in \mathcal{O}\left( \frac{M \log{(M/\delta)}}{t^2 \epsilon^2} \log{\left( \frac{M}{t \epsilon} \right)}\right)
\end{align}
where $t \in \mathcal{O}(1/\norm{H})$
}
\end{theorem}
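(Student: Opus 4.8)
The plan is to assemble Theorem~\ref{thm:querycomplexityUB} from three ingredients already in hand: the error-decomposition Lemma~\ref{lem:LearningErrorUB}, the block-encoded query bound of Theorem~\ref{thm:samplecomplexityUB2}, and the synthesis cost of Lemma~\ref{lemma:ProduceBlockEncoding}. First I would partition the error budget: by Lemma~\ref{lem:LearningErrorUB} it suffices to take $\epsilon_c = \epsilon/2$ for the block-encoded learning error and $\epsilon_b = t\epsilon/(2\sqrt{M})$ for the block-encoding error, so that $\epsilon_c + \sqrt{M}\epsilon_b/t \leq \epsilon$; for $\epsilon$ small enough this $\epsilon_b$ lies in $(0,\tfrac12]$ as Lemma~\ref{lemma:ProduceBlockEncoding} requires. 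Theorem~\ref{thm:samplecomplexityUB2} then says $\widetilde{N} \in \mathcal{O}\!\left(M\gamma^2(\widetilde{c}_{\max}^2 + t^{-2})\log(M/\delta)/\epsilon_c^2\right)$ copies of $U_{block}$ suffice to solve the Block-Encoded HLP, and Lemma~\ref{lemma:ProduceBlockEncoding} produces each $U_{block}$ with $\mathcal{O}(\log(1/\epsilon_b))$ queries to $U$ and $U^\dagger$, so the total query count is $N = \widetilde{N}\cdot\mathcal{O}(\log(1/\epsilon_b))$.

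The key remaining step is to show that $\gamma^2 = \mathcal{O}(1)$ and that $\widetilde{c}_{\max}^2$ is dominated by $t^{-2}$. Since $\{H_m\}$ is orthonormal under $\langle H_j,H_k\rangle = d^{-1}\mathrm{Tr}(H_jH_k)$ and $H = \sum_m c_m H_m$, the coefficients satisfy $\norm{\mathbf{c}}_2^2 = d^{-1}\mathrm{Tr}(H^2) \leq \norm{H}^2$, and likewise $\norm{\widetilde{\mathbf{c}}}_2^2 = d^{-1}\mathrm{Tr}(\widetilde{H}^2) \leq \norm{\widetilde{H}}^2$. Using $t \leq 1/(2\norm{H})$ together with $\norm{\widetilde{H}-H}_2 \leq \epsilon_b/t \leq 1/(2t)$ from Lemma~\ref{lemma:ProduceBlockEncoding} (valid since $\epsilon_b\leq\tfrac12$), we obtain $\norm{\widetilde{H}} \leq 1/t$, hence $\widetilde{c}_{\max}^2 \leq \norm{\widetilde{\mathbf{c}}}_2^2 \leq t^{-2}$, and with $\Delta = \pi/(2t)$ from~\eqref{eq: Delta}, $\gamma^2 = \norm{\widetilde{\mathbf{c}}}_2^2/\Delta^2 + 1 \leq 4/\pi^2 + 1 < 2$. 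Substituting these bounds gives $\widetilde{N} \in \mathcal{O}(M\log(M/\delta)/(t^2\epsilon^2))$, and since $\log(1/\epsilon_b) = \log(2\sqrt{M}/(t\epsilon)) = \mathcal{O}(\log(M/(t\epsilon)))$, multiplying yields the stated bound $N \in \mathcal{O}\!\left(M\log(M/\delta)\,\log(M/(t\epsilon))/(t^2\epsilon^2)\right)$ with $t \in \mathcal{O}(1/\norm{H})$.

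I expect the main obstacle to be bookkeeping rather than depth: one must track carefully which quantity dominates inside Theorem~\ref{thm:samplecomplexityUB2}, namely that $\widetilde{c}_{\max}^2 + t^{-2} = \mathcal{O}(t^{-2})$ and $\gamma^2 = \mathcal{O}(1)$, so the final scaling is $1/t^2$ rather than the weaker $\norm{H}^2$ one would obtain from $\alpha$ directly — this is the quadratic improvement in the normalization constant promised in the introduction, and it hinges precisely on the fact that the block-encoding route prepares the pseudo-Choi state of the normalized Hamiltonian $2\widetilde{H}t/\pi$. It is also worth noting that Theorem~\ref{thm:samplecomplexityUB2} has already absorbed the $\Theta(\gamma^2)$ overhead from the probabilistic heralding of the pseudo-Choi state (the Chernoff step~\eqref{eq: chernoff}), and since that heralding succeeds with probability exceeding $1/2$ no amplitude amplification is needed; hence no further oracle-cost analysis is required beyond the three ingredients above. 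Finally one should confirm the parameter-range caveat that $\epsilon_b = t\epsilon/(2\sqrt{M}) \leq \tfrac12$, which holds whenever $t\epsilon \leq \sqrt{M}$, and in particular throughout the regime of interest.
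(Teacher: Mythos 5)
Your proposal is correct and follows essentially the same route as the paper: combine Theorem~\ref{thm:samplecomplexityUB2} with the $\mathcal{O}(\log(1/\epsilon_b))$ synthesis cost from Lemma~\ref{lemma:ProduceBlockEncoding}, split the error budget via Lemma~\ref{lem:LearningErrorUB}, and observe that $\gamma^2 \in \mathcal{O}(1)$ so the $1/t^2$ term dominates. The one small divergence is that the paper simply \emph{assumes} $|c_m|\le 1$ (arguing this can be arranged by rescaling $t$) to dispose of the $\widetilde{c}_{\max}^2$ term, whereas you derive $\widetilde{c}_{\max}^2 \le \norm{\widetilde{H}}^2 \le t^{-2}$ directly from $t\le 1/(2\norm{H})$ and the block-encoding error bound — a slightly more self-contained justification that reaches the same asymptotic bound.
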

\begin{proof}[Proof of Theorem~\ref{thm:querycomplexityUB}]
From Theorem~\ref{thm:samplecomplexityUB2} we have that the number of block-encodings needed to solve the Block-Encoded HLP~\ref{def:BELearningProblem} is
\begin{align}
    \widetilde{N} & \in \mathcal{O}\left( \frac{M \gamma^2 ({\widetilde{c}_{max}}^2 + \frac{1}{t^2}) \log{(M/\delta)}}{\epsilon_c^2} \right)
\end{align}
Lemma~\ref{lemma:ProduceBlockEncoding} states that $\mathcal{O}\left( \log\left(\frac{1}{\epsilon_b}\right)  \right)$ queries to the time evolution blackbox $U = e^{-iHt}$ and its inverse are required to generate each block-encoding. Since we apply Lemma~\ref{lemma:ProduceBlockEncoding} to produce the block-encoding, it is important to note the preconditions on $\epsilon$ and $\epsilon_b \le 1/2$.  The latter of which is guaranteed by our assumption on $\epsilon$ if $\epsilon_b \le \epsilon$. Next, recall that $\frac{\gamma^2}{2}$ is the probability of projecting onto a pseudo-Choi state upon measuring the block-encoding qubit.  Lemma~\ref{lemma: ProduceChoiState} then implies that $\gamma^2 \in \mathcal{O}(1) $ under the assumption that $t\in \mathcal{O}(1/\norm{H})$. Furthermore, we make the assumption that each Hamiltonian coefficient satisfies $\abs{c_m} \leq 1$.  As the norm of the Hamiltonian coefficients can be rescaled by choosing the evolution time to be larger or smaller, we can make such a choice for our state learning without restricting the class of Hamiltonians that is learnable within our approach.

Therefore, the upper bound on the total query complexity for the Hamiltonian learning problem \eqref{def:HamLearningProblemShort} is
\begin{align}
    N &\in \mathcal{O}\left(\frac{M\log\left(M/\delta\right) }{t^2 \epsilon_c^2}   \log\left(\frac{1}{\epsilon_b}\right)  \right)
\end{align}

To satisfy Problem~\ref{def:HamLearningProblemShort}, the total error must be $\norm{\hat{\textbf{c}} - \textbf{c}}_2  \leq \epsilon$. From Lemma~\ref{lem:LearningErrorUB} we see that this can be accomplished by choosing $\epsilon_c$ and $\epsilon_b$ such that $\epsilon_c + \frac{M\epsilon_b}{t} \leq \epsilon$. If we assign an error budget of $\frac{\epsilon}{2}$ to each of these terms, that is choose
\begin{align}
    \epsilon_c & := \epsilon/2, \label{eq:epsiloncValue}
\end{align}
and
\begin{align}
    \epsilon_b := \frac{\epsilon t}{2 M},
\end{align}
we arrive at the query complexity given in Theorem~\ref{thm:querycomplexityUB}.
\end{proof}

Note that the choice of error budget is somewhat arbitrary for the purposes of determining the asymptotic scaling, as it only affects the query complexity by a constant factor. However, since the query complexity is quadratic in $\epsilon_c^{-1}$ but only logarithmic in $\epsilon_b^{-1}$, it is logical to assign most of the error budget to $\epsilon_c$. For example, choosing
\begin{align}
    \epsilon_c \leq \left( 1 - \frac{1}{M} \right)\epsilon \label{eq:ecUB}
\end{align}    
and
\begin{align}
    \epsilon_b \leq \frac{t \epsilon }{M^2} \label{eq:ebUB}
\end{align}
keeps the total error to at most $\epsilon$ and limits the increase in query complexity to a constant factor of approximately 2 for large values of M.  Our assumption that $\epsilon_b \le 1/2$ then immediately follows from $\epsilon \le M^2/(2t)$.



\section{Hamiltonian Learning via Quantum Mean-Value Estimation}\label{section: QME}

The approach considered previously has great strengths.  Notably, the quantum operations required are simple Pauli or Clifford operations (depending on the flavour of classical shadows used), and after measurements are complete, all expectation values can be calculated efficiently on a classical computer. It should be noted that it also allows us to learn the Hamiltonian without having an explicit dependence on the number of terms in the Hamiltonian if we look at the $l_\infty$ error in the Hamiltonian coefficients (though we have focused on the $l_2$ error metric throughout this paper, as it better highlights the differences between the query complexities of our approach and previous ones).  However, the most significant drawback of this approach is that the method scales poorly with $1/\epsilon$.  This can be ameliorated by dropping the use of shadows and using a quantum computer to directly estimate all of the terms directly from the PC state by computing the expectation values of the Hamiltonian using mean-value estimation algorithms.

In \cite{Huggins_nearly_optimal_estimating} an algorithm is proposed for estimating the expectation value of a vector of observables.  The central idea behind this approach is to take the problem of mean estimation and reduce it to a gradient estimation procedure on the exponential of the sum of Hamiltonian terms. The gradient estimation procedure can then be executed optimally using the technique of~\cite{gilyen2019optimizing}, which then allows the vector of mean values to be estimated using a number of queries that also scales optimally with the parameters involved.  The key result is that for some state $\psi$, the mean values of a set of $M$ operators with norm at most $\nu_{max}$ can be computed within $l_\infty$ error $\epsilon'$ and probability of failure at most $\delta$ using $\widetilde{\mathcal{O}}( \nu_{max}\sqrt{M} / \epsilon')$ queries to the unitary that prepares $\psi$ as well as to oracles that perform the exponentials of each of the operators. In Lemma~\ref{lem: gradEstQuery} we show that the number of queries of the same form required to learn a Hamiltonian using this technique is $\widetilde{\mathcal{O}}( \alpha^2 M / \epsilon)$, where $\epsilon$ is the $l_2$ error in the Hamiltonian coefficients and $\alpha^2$ is the normalization factor in the pseudo-Choi state.


The main result of this section is given later in Theorem~\ref{thm: gradEstFullQuery}, which gives an upper bound on the number of queries to the time-evolution unitary that are needed to solve the Hamiltonian learning problem. However, we begin in Section~\ref{section:OracularQME} by proving Lemma~\ref{lem: gradEstQuery}, which solves a simpler problem. In particular, it assumes we have a unitary oracle that prepares our resource state, and gives an upper bound on the number of queries to the oracle needed to solve the Hamiltonian learning problem. Next, in Section~\ref{section:UnitaryConstruction} we show how to generate the resource state in unitary fashion by querying the time-evolution operator and its inverse and then prove Theorem~\ref{thm: gradEstFullQuery}.

\subsection{Learning the Hamiltonian Coefficients from a Unitary Preparation of the Resource State} \label{section:OracularQME}

As in the case of Pauli-based classical shadows, we shall start with the pseudo-Choi state and perform a partial trace over the ancillary system $A$, resulting in the following resource state:
\begin{align}
    \rho = & \frac{1}{d\alpha^2}\Big(H^2 \otimes \ket{0}_C\bra{0}_C  + H \otimes \ket{0}_C\bra{1}_C   +  H \otimes \ket{1}_C\bra{0}_C +  \Id \otimes \ket{1}_C\bra{1}_C \Big)\label{eq:PartialChoiDensityOp2},
\end{align}
where 
\begin{align}
    \alpha &= \sqrt{\norm{\textbf{c}}^2_2 + 1}
\end{align}
and $\textbf{c}$ is the vector of Hamiltonian coefficients.

Next, we define the decoding operators we will use to extract the Hamiltonian coefficients from our resource state.
\begin{definition}[Decoding Operators] \label{Def: QMEDecodingOperators}
    Let the set of decoding operators be defined as
    \begin{align}
        \boldsymbol{O} & \equiv \{ O_l | l \in \mathbb{Z}_{M} \} 
    \end{align}
    where
    \begin{align}
        O_l &= \frac{H_l  \otimes X_C}{2}
    \end{align}
    and $H_l$ is one of the $M$ Hamiltonian terms. 
    
    Furthermore, we define
    \begin{align}
        O_{\alpha} & \equiv \Id \otimes \ket{1}_C\bra{1}_C
    \end{align}
\end{definition}
 
Proposition~\ref{proposition: ComputeHamCoeffsPauli} shows that for our resource state $\rho$, the expectation values of the decoding operators can be used to recover the Hamiltonian coefficients, as well as the normalization factor of the resource state.

\begin{proposition}[Computing the Hamiltonian Coefficients\label{proposition: ComputeHamCoeffsPauli}]
If $\rho$ is the resource state \eqref{eq:PartialChoiDensityOp2}, then $\forall$ $l \in \mathbb{Z}_M$ we have
\begin{align}
    {\rm Tr}(\rho O_{l}) = \frac{c_l}{\alpha^2},
\end{align}
where $c_l$ is the $l^{th}$ Hamiltonian coefficient, and $\alpha$ is the normalization constant of the pseudo-Choi state \eqref{eq:alpha}. Furthermore,
\begin{align}
    {\rm Tr}(\rho O_{\alpha}) = \frac{1}{\alpha^2}
\end{align}

\end{proposition}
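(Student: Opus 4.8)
The plan is to mirror the computation in the proof of Proposition~\ref{proposition: ComputeHamCoeffs}, but now applied to the reduced state $\rho$ of equation~\eqref{eq:PartialChoiDensityOp2}, which already has the ancilla $A$ traced out. First I would simply write out $\mathrm{Tr}(\rho O_l)$ by plugging in the explicit four-term decomposition of $\rho$ together with $O_l = (H_l \otimes X_C)/2$. Since $X_C = \ket{0}_C\bra{1}_C + \ket{1}_C\bra{0}_C$, only the two cross terms $H \otimes \ket{0}_C\bra{1}_C$ and $H \otimes \ket{1}_C\bra{0}_C$ survive the trace over system $C$ (the diagonal terms $\ket{0}\bra{0}$ and $\ket{1}\bra{1}$ are orthogonal to $X_C$ in Hilbert--Schmidt inner product). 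Each of these contributes $\tfrac{1}{2}\cdot\tfrac{1}{d\alpha^2}\mathrm{Tr}(H H_l)$, so the total is $\tfrac{1}{d\alpha^2}\mathrm{Tr}(H H_l)$.

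The second step is to expand $H = \sum_m c_m H_m$ and use the orthonormality of the operator basis, exactly as in equation~\eqref{eq:randomlabel7}: $d^{-1}\mathrm{Tr}(H_m H_l) = \delta_{ml}$ by the inner-product convention of Definition~\ref{def:HamLearningProblem}. This collapses the sum to $c_l/\alpha^2$, giving the first claim. For the second claim, $O_\alpha = \Id \otimes \ket{1}_C\bra{1}_C$ picks out only the last term of $\rho$, namely $\tfrac{1}{d\alpha^2}\,\Id \otimes \ket{1}_C\bra{1}_C$; tracing this gives $\tfrac{1}{d\alpha^2}\mathrm{Tr}(\Id) = \tfrac{1}{d\alpha^2}\cdot d = 1/\alpha^2$, as claimed.

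I do not expect any genuine obstacle here — this is a routine trace calculation. The only point requiring a little care is bookkeeping the factors of $d$: the reduced state in~\eqref{eq:PartialChoiDensityOp2} carries an explicit $1/(d\alpha^2)$ prefactor (coming from the partial trace of $\ket{\Phi_d}\bra{\Phi_d}$ over $A$, since $\mathrm{Tr}_A(\ket{\Phi_d}\bra{\Phi_d}) = \Id/d$ and $\mathrm{Tr}_A((H\otimes\Id)\ket{\Phi_d}\bra{\Phi_d}) = H/d$, etc.), and this $1/d$ must cancel cleanly against the $d^{-1}\mathrm{Tr}(\cdot)$ inner-product normalization and against $\mathrm{Tr}(\Id_S) = d$. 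One should also double-check that $\alpha = \sqrt{\norm{\mathbf{c}}_2^2+1}$ is unchanged by the partial trace, which follows because the trace is basis-independent and the normalization was already established in equation~\eqref{eq:alpha}. With those factors tracked, both identities follow immediately.
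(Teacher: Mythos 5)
Your proposal is correct and follows essentially the same route as the paper's proof: reduce ${\rm Tr}(\rho O_l)$ to $\frac{1}{d\alpha^2}{\rm Tr}(HH_l)$ (a step the paper states as "easy to see" and you usefully spell out via the off-diagonal structure of $X_C$), then expand $H=\sum_m c_m H_m$ and invoke orthonormality of the basis, with the $O_\alpha$ claim read off from the last term of $\rho$. No gaps.
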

\begin{proof}[Proof of Proposition~\ref{proposition: ComputeHamCoeffsPauli}]

From the definitions of $O_l$ and the resource state $\rho$, it is easy to see that
\begin{align}
    {\rm Tr}(\rho O_l) & = \frac{{\rm Tr}(H H_l)}{d \alpha^2}.
\end{align}
Expanding the above using the representation of the Hamiltonian from Definition~\ref{def:HamLearningProblem} gives
\begin{align}
    {\rm Tr}(\rho O_l) & = \frac{\sum_{m} c_m{\rm Tr}( H_m  H_l)  }{d \alpha^2} \nonumber \\
    & = \frac{1}{d \alpha^2}  \sum_{m} c_m  \delta_{ml}d \nonumber \\
    & = \frac{c_l}{\alpha^2}
\end{align}

The second claim (${\rm Tr}(\rho O_{\alpha}) = \frac{1}{\alpha^2}$) follows immediately from Equation \eqref{eq:PartialChoiDensityOp} and the definition of $O_{\alpha}$.
\end{proof}

Next, we show that the Hamiltonian learning problem can be efficiently solved by querying a unitary oracle that prepares the resource state.

\begin{lemma}\label{lem: gradEstQuery}
    Let $U_\rho$ be a unitary oracle that prepares the state $\rho$ from equation~\eqref{eq:PartialChoiDensityOp2}. The Hamiltonian learning problem in Definition~\ref{def:HamLearningProblem} can be solved using
    \begin{align*}
        \widetilde{\mathcal{O}}\left( \frac{\alpha^2 M}{\epsilon} \right)
    \end{align*}
    queries to $U_\rho$ and $U_\rho^\dagger$, as well as to an oracle that implements controlled-$e^{-i\theta O_l}$ for each decoding operator $O_l$.
\end{lemma}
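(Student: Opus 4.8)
The plan is to invoke the mean-value estimation algorithm of Huggins et al.~\cite{Huggins_nearly_optimal_estimating} as a black box, applied to the state $\rho$ of equation~\eqref{eq:PartialChoiDensityOp2} and the set of decoding operators $\{O_l\}$ from Definition~\ref{Def: QMEDecodingOperators}, together with the normalization operator $O_\alpha$. First I would recall the precise guarantee of their result as stated in the text: for a state $\psi$ prepared by a unitary $U_\psi$, the expectation values of $M$ operators each of operator norm at most $\nu_{\max}$ can be estimated to $l_\infty$ error $\epsilon'$ with failure probability $\delta$ using $\widetilde{\mathcal{O}}(\nu_{\max}\sqrt{M}/\epsilon')$ queries to $U_\psi$, $U_\psi^\dagger$, and to controlled exponentials $e^{-i\theta O_l}$. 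Here each $O_l = (H_l\otimes X_C)/2$ has $\norm{O_l} = 1/2$, so $\nu_{\max}$ is a constant; we also include $O_\alpha$, which adds only one more operator and has unit norm, so the count stays $\widetilde{\mathcal{O}}(\sqrt{M}/\epsilon')$ for estimating all the quantities $\{{\rm Tr}(\rho O_l)\}_{l}$ and ${\rm Tr}(\rho O_\alpha)$ within $l_\infty$ error $\epsilon'$.

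Next I would translate the $l_\infty$ estimation error on the raw expectation values into the $l_2$ error on the recovered coefficient vector $\hat{\textbf{c}}$, paralleling the error-budget bookkeeping already done in Proposition~\ref{prop:errbd} and Theorem~\ref{thm:samplecomplexityUB2}. By Proposition~\ref{proposition: ComputeHamCoeffsPauli} we have ${\rm Tr}(\rho O_l) = c_l/\alpha^2$ and ${\rm Tr}(\rho O_\alpha) = 1/\alpha^2$, so after estimating $\hat{o}_l \approx c_l/\alpha^2$ and $\hat{o}_\alpha \approx 1/\alpha^2$ we set $\hat c_l = \hat o_l/\hat o_\alpha$. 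Each $\hat o_l$ carries error at most $\epsilon'$ and is of size $O(|c_l|/\alpha^2) = O(1/\alpha^2)$ (using $|c_l|\le 1$), while $\hat o_\alpha$ of size $1/\alpha^2$ carries error $\epsilon'$; propagating these through the division (using $1/\hat o_\alpha \approx \alpha^2$ up to relative error $O(\alpha^2\epsilon')$) gives $|\hat c_l - c_l| = O(\alpha^2\epsilon')$ per coordinate. Summing over $M$ coordinates in the $l_2$ norm yields $\norm{\hat{\textbf{c}}-\textbf{c}}_2 = O(\sqrt{M}\,\alpha^2\epsilon')$. Setting this to $\epsilon$ forces $\epsilon' = \Theta(\epsilon/(\sqrt{M}\alpha^2))$, and substituting into the query count $\widetilde{\mathcal{O}}(\sqrt{M}/\epsilon')$ gives $\widetilde{\mathcal{O}}(\sqrt{M}\cdot\sqrt{M}\alpha^2/\epsilon) = \widetilde{\mathcal{O}}(\alpha^2 M/\epsilon)$, which is the claimed bound. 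The failure probability $\delta$ is absorbed into the $\widetilde{\mathcal{O}}$ (logarithmic) factors, as usual.

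The main obstacle I anticipate is not the query-count arithmetic but verifying that the hypotheses of the Huggins et al. theorem are genuinely met in our setting — specifically, that we can supply the controlled exponentials $e^{-i\theta O_l}$ of the decoding operators as required oracles, and that $\rho$ (which is a mixed state obtained by a partial trace over system $A$) is acceptable as input, or equivalently that a purification of $\rho$ prepared by some $U_\rho$ suffices without inflating the norms. Since $O_l = (H_l\otimes X_C)/2$ is a (scaled) Pauli operator, $e^{-i\theta O_l}$ is a single easily-implemented rotation, so this should go through cleanly; the mean-estimation framework is also naturally stated for pure-state inputs, and a purification of $\rho$ prepared by $U_\rho$ is exactly what the lemma hypothesizes, so the observables $O_l\otimes\Id$ on the purified space have the same norms and expectation values. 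Care is also needed because the estimation is in $l_\infty$ while the learning problem is posed in $l_2$; the $\sqrt{M}$ conversion factor is the source of the linear-in-$M$ (rather than $\sqrt{M}$) scaling and must be tracked honestly, but it is routine once the reduction is set up. I would close by noting that the lemma as stated is oracular — it assumes $U_\rho$ is given — and that Section~\ref{section:UnitaryConstruction} will later supply the construction of $U_\rho$ from $e^{-iHt}$ and $e^{iHt}$, at which point Theorem~\ref{thm: gradEstFullQuery} follows by composing the two counts.
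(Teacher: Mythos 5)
Your proposal is correct and follows essentially the same route as the paper's proof: invoke the Huggins et al. mean-estimation guarantee with $\nu_{\max}\in\mathcal{O}(1)$ to get the expectation values ${\rm Tr}(\rho O_l)$ to $l_\infty$ error $\epsilon'$ in $\widetilde{\mathcal{O}}(\sqrt{M}/\epsilon')$ queries, propagate the error through the rescaling by $\alpha^2$ to get per-coordinate error $O(\alpha^2\epsilon')$, and convert to the $2$-norm via the $\sqrt{M}$ factor to fix $\epsilon'=\Theta(\epsilon/(\sqrt{M}\alpha^2))$. The only cosmetic difference is that you make the estimation of $O_\alpha$ and the division $\hat{c}_l=\hat{o}_l/\hat{o}_\alpha$ explicit, which the paper's proof subsumes into the bound $|\hat{c}_l-c_l|\le\alpha^2\epsilon'\sqrt{c_l^2+1}$.
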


\begin{proof}
    By Proposition~\ref{proposition: ComputeHamCoeffsPauli}, we have
    \begin{align}
        {\rm Tr}(\rho O_{l}) = \frac{c_l}{\alpha^2},
    \end{align}
    From this we can see that the Hamiltonian can be estimated by estimating a mean-value of a set of Hamiltonian operators.  Specifically, consider the vector $\mathbf{d}$
    \begin{equation}
        \mathbf{d} := \left[ {\rm Tr}(\rho O_0),\ldots,  {\rm Tr}(\rho O_{M-1})\right]
    \end{equation}
    from which the vector of Hamiltonian coefficients can be computed via
    \begin{equation}
        c_l = \alpha^2 d_l \label{eq: alphad}
    \end{equation}
    Since the Hamiltonian learning problem considers Hamiltonian terms to have a norm that is at most a constant (so $\nu_{max} \in \mathcal{O}(1)$), the mean value estimation algorithm of~\cite{Huggins_nearly_optimal_estimating} can be used to estimate $\mathbf{d}$ within infinity norm $\epsilon'$ using 
    \begin{align}
        N &\in \widetilde{\mathcal{O}}(\sqrt{M}/\epsilon')
    \end{align}
    queries to $U_\rho$ and $U_\rho^\dagger$, as well as to controlled-$e^{-i\theta O_l}$ $\forall l \in [0, M-1]$. All that is left now is to determine how small $\epsilon'$ should be to guarantee that the $l_2$ error in the estimate of the vector of Hamiltonian coefficients is at most $\epsilon$.
    
    If we define $\hat{c}_l$ to be our estimate of $c_l$, then
    \begin{align}
        |\hat{c}_l - c_l| & \leq \alpha^2 \epsilon' \sqrt{c_l^2 + 1} \nonumber \\
        & \in O\left(\alpha^2 \epsilon' \right),
    \end{align}
    since we need to scale the result of the mean estimation by $\alpha^2$ (as in equation~\eqref{eq: alphad}). 
    Therefore, if we want the $l_{\infty}$ error in estimating $c_l$ to be at most $\epsilon_{\infty}$, we require
    \begin{align}
        \epsilon' &\in \mathcal{O}(\epsilon_\infty /\alpha^2)
    \end{align}
    
    Finally, for the 2-norm of the error in the estimate of the vector of Hamiltonian coefficients to be at most $\epsilon$, we choose $\epsilon_\infty \in \Theta(\epsilon/\sqrt{M})$ (because $\|\cdot\|_2 \le \sqrt{M}\|\cdot\|_\infty$ from Cauchy Schwarz). This leads to a query complexity of 
    \begin{align}
        N & \in \widetilde{\mathcal{O}}(\alpha^2 M/\epsilon).
    \end{align}
\end{proof}

We now wish to use the result of Lemma~\ref{lem: gradEstQuery} to determine how many queries to the time evolution blackbox will be necessary to solve the Unitary HLP.

\subsection{Constructing a Unitary Preparation of the Resource State}\label{section:UnitaryConstruction}
In Lemma~\ref{lem: gradEstQuery} we showed that the Hamiltonian Learning Problem could be solved using $\mathcal{\widetilde{O}}\left( \frac{\alpha^2 M}{\epsilon} \right)$ queries to a unitary operation that prepares the state $\rho$ from equation~\eqref{eq:PartialChoiDensityOp2}. We would now like to show that this state can be prepared in unitary fashion by querying the time-evolution operator $U=e^{-iHt}$ and its inverse. The idea is that if we know the number of queries to $U$ and $U^\dagger$ it takes to prepare this state, we can use Lemma~\ref{lem: gradEstQuery} to find the number of queries to $U$ and $U^\dagger$ needed to solve the Unitary HLP. We begin by considering the circuit from Section~\ref{section: GenerateChoiState} (reproduced here in Figure~\ref{figure: ProduceChoiState2}) that was used to generate the pseudo-Choi state. Recall that $U_{block}$ contains a block-encoding, $\widetilde{H}$, of the Hamiltonian $H$, and Lemma~\ref{lemma:ProduceBlockEncoding} shows that $U_{block}$ can be efficiently generated by querying the time-evolution unitary $U=e^{-iHt}$ and its inverse. A full analysis of the circuit can be found in Appendix~\ref{App:ProduceChoiState}.

\begin{figure}[tb]
\centering
    \begin{quantikz}
        \lstick{$\ket{0}_B$}                  & \qw          & \qw & \gate[wires=2]{U_{block}} & \qw & \meter{} \rstick{$\rm{Pr}(0) \geq  \frac{1}{2}$}\\
        \lstick[wires=2]{$\ket{\phi_d}_{SA}$} & \qwbundle{n} & \qw &                           & \qw & \qw \rstick[wires=3]{$\ket{\psi_c'} = \frac{\ket{0}_C                                                                                                                            (\frac{\widetilde{H}}{\Delta} \otimes \Id_A)\ket{\Phi_d}_{SA} +                                                                                                                         \ket{1}_C\ket{\Phi_d}_{SA}}{\gamma}$}\\
                                              & \qwbundle{n} & \qw & \qw                       & \qw & \qw \\
        \lstick{$\ket{0}_C$}                  &\gate{H}      & \qw & \octrl{-2}                & \qw & \qw
    \end{quantikz}
    \caption{A quantum circuit that produces the pseudo-Choi state of $\widetilde{H}/\Delta$. If the qubit in register $B$ is found to be in the state $\ket{0}$ after measuring it in the computational basis, the output of the remaining registers is the pseudo-Choi state. This outcome occurs with probability equal to $\frac{\norm{\textbf{c}}^2_2}{2\Delta^2} + \frac{1}{2}$, where $\norm{\textbf{c}}_2$ is the 2-norm of the vector of Hamiltonian coefficients, and $\Delta = \frac{\pi}{2t}$.\label{figure: ProduceChoiState2}}
\end{figure}

The first step in producing the state in Equation~\ref{eq:PartialChoiDensityOp2} is to produce the pseudo-Choi state, in similar fashion to Figure~\ref{figure: ProduceChoiState2}. However, since we would like to use the result of Lemma~\ref{lem: gradEstQuery}, we must make note of two things. First, the pseudo-Choi state we get from querying the time-evolution operator is not of the Hamiltonian $H$, but of $\widetilde{H}/\Delta$ (where $\widetilde{H}$ is the block-encoding of the Hamiltonian and $\Delta = \frac{\pi}{2t}$), meaning that we will end up with a state that is slightly different from the one in equation~\ref{eq:PartialChoiDensityOp2}. We will address this later when looking at the query complexity. The second thing to take note of is that Lemma~\ref{lem: gradEstQuery} requires a unitary preparation of the resource state, but the circuit in Figure~\ref{figure: ProduceChoiState2} involves measuring the block-encoding qubit. We can get around this measurement by using fixed-point amplitude amplification to approximate the pseudo-Choi state as follows.

The fixed-point amplitude amplification (FPAA) process of Yoder et al.~\cite{yoder2014fixed} considers a starting state $\ket{s}$, a circuit $A$ that prepares the starting state, a target state $\ket{T}$, and an oracle $U_{AA}$ that flips an ancilla qubit when given the target state. It then applies a circuit $S_L$ (which makes $\mathcal{O}(L)$ queries $U$, $A$, and $A^\dagger$) to the initial state such that $\norm{\bra{T}S_L \ket{s}}^2 \geq 1 - \delta_{AA}$ for a given $\delta_{AA} \in [0,1]$. In our case, the preparation circuit, $A$, is simply the circuit from Figure~\ref{figure: ProduceChoiState2}, excluding the measurement of register $B$. Recall that this circuit queries the time evolution unitary, which is our learning resource, and prepares the state
\begin{align}
    \ket{s} & \equiv \frac{1}{\sqrt{2}}\Bigg(  \ket{0}_C   \ket{0}_B  \left(\frac{\widetilde{H}}{\Delta} \otimes \Id_A\right) \ket{\Phi_d}_{SA}  + \ket{0}_C   \ket{1}_B  ( J_S \otimes \Id_A) \ket{\Phi_d}_{SA}  + \ket{1}_C \ket{0}_B \ket{\Phi_d}_{SA} \Bigg), \label{eq: initialState}
\end{align}
which serves as the initial state for the FPAA circuit. Our target state is the pseudo-Choi state
\begin{align}
    \ket{T} & \equiv  \frac{\ket{0}_C \ket{0}_B (\frac{\widetilde{H}}{\Delta} \otimes \Id_A) \ket{\Phi_d}_{SA} + \ket{1}_C \ket{0}_B \ket{\Phi_d}_{SA}}{ \gamma }, \label{eq: targetState}
\end{align}
where $\gamma = \sqrt{\frac{\norm{\widetilde{\textbf{c}}}_2^2}{\Delta^2} + 1}$.

To perform FPAA, we need a unitary that flips an ancilla qubit when it acts on the target state. While we do not know the pseudo-Choi state (seeing as it contains all the unknown Hamiltonian coefficients), we know that if the block-encoding qubit is in the state $\ket{0}_B$, the remaining registers contain the pseudo-Choi state. Therefore, the unitary we desire is simply a zero-controlled $CNOT$ operation using register $B$ as the control. With this, we can apply FPAA to $\ket{s}$ to get the state
\begin{align}
    \ket{T_{approx}} & \equiv S_L \ket{s},
\end{align}
which is very close to the pseudo-Choi state, $\ket{T}$, of the block-encoded Hamiltonian $\widetilde{H}/\Delta$. Following this, we simply ignore all qubits in register $A$, resulting in
\begin{align}
    \chi & \equiv {\rm Tr}_A(\ketbra{T_{approx}}{T_{approx}}),
\end{align}
which is an approximation of the state
\begin{align}
    \widetilde{\rho} & \equiv {\rm Tr}_A(\ketbra{T}{T}) \nonumber \\
    & =\frac{1}{d\gamma^2}\Big(\frac{\widetilde{H}^\dagger \widetilde{H}}{\Delta^2} \otimes \ket{0}_C\bra{0}_C  + \frac{\widetilde{H}}{\Delta} \otimes \ket{0}_C\bra{1}_C   +  \frac{\widetilde{H}^\dagger}{\Delta} \otimes \ket{1}_C\bra{0}_C +  \Id \otimes \ket{1}_C\bra{1}_C \Big) \otimes \ketbra{0}{0}_B \label{eq:PartialApprox}.
\end{align}
In turn, $\widetilde{\rho}$ is essentially an approximation of the state $\rho$ from equation~\eqref{eq:PartialChoiDensityOp2}, with the block-encoding procedure being the source of error. We must also highlight the distinction that the Hamiltonian is normalized by $\Delta$ in this case, whereas the Hamiltonian in $\rho$ was not.

\begin{figure}[tb]\label{figure: ProducePartialChoiState}
\centering
    \begin{quantikz}
        \lstick{$\ket{0}_B$}                  & \qw           & \qw & \gate[wires=2]{U_{block}} & \qw & \gate[wires=5]{F.P.A.A.} & \qw \rstick[wires=4]{$\ket{T_{approx}} \equiv S_L \ket{s}$} \\
        \lstick[wires=2]{$\ket{\phi_d}_{SA}$} & \qwbundle{n}  & \qw &                           &  \qw   &                       & \qw   \\
                                              & \qwbundle{n}  & \qw & \qw                       &  \qw   &                       & \qw \\
        \lstick{$\ket{0}_C$}                  &\gate{H}       & \qw & \octrl{-2}                &  \qw   &                       & \qw \\
        \lstick{$\ket{ancilla}$}              & \qw  & \qw &\qw                        &  \qw   &                       & \qw  
    \end{quantikz}
    \caption{A quantum circuit that produces an approximation of the pseudo-Choi state of $\widetilde{H}/\Delta$, where $F.P.A.A.$ refers to the circuit performing fixed-point amplitude amplification due to Yoder et al.~\protect\cite{yoder2014fixed} . Performing a partial trace over subsystem $A$ results in an approximation of the state in equation~\ref{eq:PartialChoiDensityOp2}. Since this process is unitary, Lemma~\ref{lem: gradEstQuery} gives an upper bound on the number of queries to this circuit needed to solve the Unitary HLP.}
\end{figure}
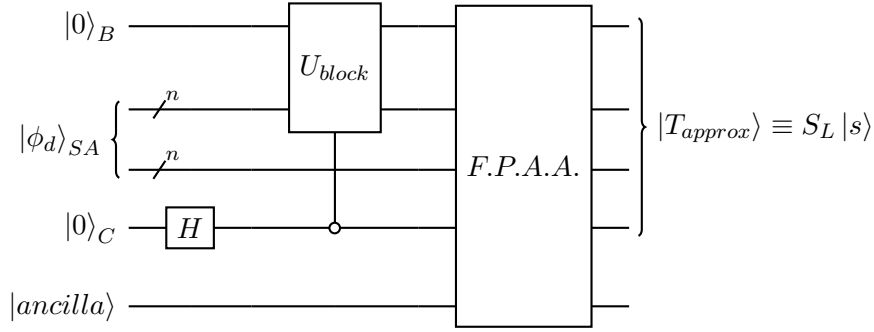

Now that we have a unitary preparation of this state, we need to determine how many iterations of FPAA are needed in order to ensure that after using the results of Lemma~\ref{lem: gradEstQuery}, the total error in predicting the Hamiltonian coefficients is small enough to satisfy the Unitary Hamiltonian Learning Problem from Definition~\ref{def:HamLearningProblemShort}. This will tell us the total number of queries to the time evolution operator $U$ and it's inverse, which is the total query complexity of the algorithm. An upper bound on this value is given in Theorem~\ref{thm: gradEstFullQuery}, which we prove at the end of this section. We begin by establishing a few lemmas that capture the query complexity of specific parts of the total algorithm.

Yoder et al.~\cite{yoder2014fixed} describe the query complexity of their fixed-point amplitude amplification algorithm in terms of the success probability of obtaining the target state after the FPAA process, which we summarize in Lemma~\ref{lem: FPAA}.
\begin{lemma}[Fixed Point Amplitude Amplification]\label{lem: FPAA}
    Let $\ket{s}$ be some initial state, $\ket{T}$ be the desired target state, with their overlap given by $\braket{T}{s} = \sqrt{\lambda}e^{i\xi}$. Let $A$ be a circuit that prepares $\ket{s}$ from the zero state, and $U_{AA}$ be a unitary oracle that flips an ancilla qubit when the input is $\ket{T}$. Additionally, let $S_L$ be the circuit that performs the fixed-point amplitude amplification process using $\mathcal{O}(L)$ queries to $A$, $A^\dagger$, and $U_{AA}$. The success probability of obtaining the target state is given by $P_L = \abs{\bra{T} S_L \ket{s}}^2$. For a given $\delta_{AA}$, the value of $L$ needed to extract the target state with probability $P_L \geq 1-\delta_{AA}$ is 
    \begin{align*}
        L &\in \mathcal{O}\left( \frac{\log (1/\delta_{AA})}{\sqrt{\lambda}}\right)
    \end{align*}
\end{lemma}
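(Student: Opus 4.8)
The plan is to derive Lemma~\ref{lem: FPAA} as a direct specialization of the main result of Yoder et al.~\cite{yoder2014fixed} to the oracle model at hand, followed by a short parameter substitution. Recall that their construction produces, for any odd integer $L$, a circuit $S_L$ that interleaves $O(L)$ ``generalized reflections'' about the initial state $\ket{s}$ with generalized reflections about the target subspace, using the prescribed Yoder--Low--Chuang phase schedule. Each reflection about $\ket{s}$ costs one call each to $A$ and $A^\dagger$ (to map $\ket{s}$ to and from the zero state, where the reflection is a controlled phase), and each phased reflection about the target is built from a single call to $U_{AA}$ together with a one-qubit rotation on the flag ancilla; hence $S_L$ makes $O(L)$ queries to $A$, $A^\dagger$, and $U_{AA}$ in total, matching the statement. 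Restricting attention to the two-dimensional invariant subspace spanned by $\ket{s}$ and the component of $\ket{T}$ along it, Yoder et al. show that the success amplitude after $S_L$ obeys a Chebyshev-polynomial identity in the single real parameter $\lambda = \abs{\braket{T}{s}}^2$ and a free tolerance $\delta\in(0,1)$; the overall overlap phase $e^{i\xi}$ is absorbed into the phases of the generalized reflections and does not enter the bound.

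Given that identity, I would simply set $\delta := \sqrt{\delta_{AA}}$, so that the guaranteed success probability becomes $P_L \ge 1 - \delta^2 = 1 - \delta_{AA}$ once $L$ exceeds their threshold, which has the form
\begin{align*}
    L \;\ge\; \frac{\log(2/\delta)}{\sqrt{\lambda}} \;=\; \frac{\log 2 + \tfrac{1}{2}\log(1/\delta_{AA})}{\sqrt{\lambda}}.
\end{align*}
Taking $L$ to be the least odd integer exceeding this threshold yields $L \in \mathcal{O}\!\big(\log(1/\delta_{AA})/\sqrt{\lambda}\big)$, which is exactly the claim. For the application to Theorem~\ref{thm: gradEstFullQuery} one may additionally invoke Lemma~\ref{lemma: ProduceChoiState}: with $\ket{s}$ and $\ket{T}$ as in~\eqref{eq: initialState} and~\eqref{eq: targetState} one computes $\braket{T}{s} = \gamma/\sqrt{2}$, so $\lambda = \gamma^2/2 > 1/2$ and the $1/\sqrt{\lambda}$ factor is itself a constant; but since the lemma is stated for a generic overlap we keep $\lambda$ explicit, noting only that Yoder et al.'s algorithm in fact only needs a known lower bound on $\lambda$.

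The one point that requires care, and which I regard as the main (if modest) obstacle, is reconciling our oracle conventions with theirs: we are handed a ``flag-flipping'' unitary $U_{AA}$ (in the application, the zero-controlled $CNOT$ on register $B$) and the state-preparation unitary $A$, whereas Yoder et al. phrase things in terms of the reflections $\Id - 2\ketbra{s}{s}$ and phase rotations $e^{i\phi}$ supported on the target subspace. I would include the standard reduction: conjugating a $Z$-rotation on the flag ancilla by $U_{AA}$ implements a phased reflection about the target subspace, conjugating an analogous gadget by $A$ (acting on the all-zeros state and then uncomputed) implements a phased reflection about $\ket{s}$, and composing these according to the angle sequence $\{\alpha_j,\beta_j\}$ of~\cite{yoder2014fixed} reproduces $S_L$. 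After that reduction everything is the parameter bookkeeping above; no genuinely new estimate is needed, since the role of this lemma is purely to hand a clean $O(L)$ query count to the query-complexity analysis that follows.
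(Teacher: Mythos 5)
Your proposal is correct and follows essentially the same route as the paper, which proves this lemma simply by citing Yoder et al.~\cite{yoder2014fixed}; you have merely unpacked that citation, supplying the oracle-convention reduction and the substitution $\delta:=\sqrt{\delta_{AA}}$ needed to turn their $P_L\ge 1-\delta^2$ guarantee at $L\ge \log(2/\delta)/\sqrt{\lambda}$ into the stated $P_L\ge 1-\delta_{AA}$ bound. Your side computation $\braket{T}{s}=\gamma/\sqrt{2}$ is also consistent with the paper's claim in the proof of the subsequent lemma that the overlap is bounded below by a constant.
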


\begin{proof}
    See Yoder et al.~\cite{yoder2014fixed}
\end{proof}
Since our preparation of the pseudo-Choi state should be unitary, and thus does not involve measurement, we restate this query complexity in Lemma~\ref{lem: FPAA2} in terms of the trace distance between the target state $\ket{T}$ and the output state $\ket{T_{approx}} = S_L \ket{s}$, rather than in terms of a success probability. Note that since our ultimate goal is to determine the number of queries to the time evolution unitary, we only care about the queries to the preparation circuit $A$, which queries the time evolution unitary, and not the number of queries to $U_{AA}$, which as we discussed previously is just a simple 2-qubit operation.
\begin{lemma}[Query Complexity of Fixed Point Amplitude Amplification]\label{lem: FPAA2}
    Let $\ket{s}$ be the initial state as defined in equation~\eqref{eq: initialState}, $\ket{T}$ be the target state as defined in equation~\eqref{eq: targetState}, and $A$ be a circuit that prepares $\ket{s}$ from the zero state. Additionally, let $S_L$ be the circuit that performs the fixed-point amplitude amplification process using $\mathcal{O}(L)$ queries to $A$ and $A^\dagger$. In order for the trace distance between the target state $\ket{T}$ and the output state $\ket{T_{approx}} = S_L \ket{s}$ to be at most $D$, the number of queries to $A$ and $A^\dagger$ needed is 
    \begin{align*}
        L &\in \mathcal{O}\left( \log (1/D)\right)
    \end{align*}
\end{lemma}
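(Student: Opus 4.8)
The plan is to bootstrap from Lemma~\ref{lem: FPAA}, which already delivers the iteration count $L\in\mathcal{O}\!\big(\log(1/\delta_{AA})/\sqrt{\lambda}\big)$ needed so that the FPAA output $\ket{T_{approx}}=S_L\ket{s}$ has fidelity $P_L=\abs{\bra{T}S_L\ket{s}}^2\geq 1-\delta_{AA}$ with the target $\ket{T}$, where $\lambda=\abs{\braket{T}{s}}^2$ is the overlap of the initial and target states. Only two small things remain: (i) translate the fidelity guarantee into the trace-distance guarantee we actually want, and (ii) show $\lambda$ is bounded below by an absolute constant so that the $1/\sqrt{\lambda}$ factor can be absorbed into the $\mathcal{O}(\cdot)$.

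For (i) I would use that $S_L\ket{s}$ is a pure state, so its trace distance to the pure target is $\sqrt{1-P_L}$. Hence requiring trace distance at most $D$ is the same as requiring $P_L\geq 1-D^2$, i.e.\ running FPAA with parameter $\delta_{AA}=D^2$ in Lemma~\ref{lem: FPAA}. Substituting and using $\log(1/D^2)=2\log(1/D)$ gives $L\in\mathcal{O}\!\big(\log(1/D)/\sqrt{\lambda}\big)$. (One should also note that Yoder et al.'s construction returns the FPAA ancilla to a fixed state, so treating $S_L\ket{s}$ as a pure state on the system registers with the stated fidelity is legitimate.)

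For (ii) I would compute $\braket{T}{s}$ directly from \eqref{eq: initialState} and \eqref{eq: targetState}. Write $\ket{s}=\tfrac{1}{\sqrt{2}}$ times the sum of its three summands; these are mutually orthogonal because they differ on the $B$ register and/or the $C$ register, and $\ket{T}=\tfrac{1}{\gamma}$ times exactly the two of them carrying $\ket{0}_B$. So all cross terms vanish and $\braket{T}{s}=\tfrac{1}{\gamma\sqrt{2}}\big(\bra{\Phi_d}((\widetilde H^\dagger\widetilde H/\Delta^2)\otimes\Id_A)\ket{\Phi_d}+1\big)$; using $d^{-1}\Tr(\widetilde H^\dagger\widetilde H)=\norm{\widetilde{\mathbf{c}}}_2^2$ together with $\gamma^2=\norm{\widetilde{\mathbf{c}}}_2^2/\Delta^2+1$ this equals $\gamma^2/(\gamma\sqrt{2})=\gamma/\sqrt{2}$. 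Therefore $\lambda=\gamma^2/2\geq 1/2$ unconditionally (and $\lambda=\Theta(1)$ once $t\in\mathcal{O}(1/\norm{H})$, by Lemma~\ref{lemma: ProduceChoiState}), so $1/\sqrt{\lambda}\in\mathcal{O}(1)$ and $L\in\mathcal{O}(\log(1/D))$. Since $S_L$ makes $\mathcal{O}(L)$ queries to each of $A$ and $A^\dagger$, this is also the bound on the number of queries to $A$ and $A^\dagger$, which is what the lemma claims.

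The only real obstacle is step (ii): pinning down the overlap $\braket{T}{s}$ exactly and checking it is bounded away from zero by a universal constant rather than something that degrades with $\Delta$, $t$, or $\norm{H}$. Everything else is the pure-state trace-distance identity, a choice of $\delta_{AA}$, and bookkeeping inside $\mathcal{O}(\cdot)$.
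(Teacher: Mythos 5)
Your proposal is correct and follows essentially the same route as the paper: invoke Lemma~\ref{lem: FPAA}, convert the fidelity guarantee to a trace-distance guarantee via the pure-state identity, and absorb the $1/\sqrt{\lambda}$ factor into the constant by showing the overlap $\braket{T}{s}$ is bounded below by a universal constant. Your explicit computation $\braket{T}{s}=\gamma/\sqrt{2}\geq 1/\sqrt{2}$ (using orthogonality of the three summands of $\ket{s}$ and $\gamma\geq 1$) is actually more detailed than the paper, which merely asserts $\braket{T}{s}\in[1/2,1]$ by reference to the appendix; the remaining differences (setting $\delta_{AA}=D^2$ versus inverting the relation, and the immaterial factor of $\tfrac{1}{2}$ in the trace-distance formula) do not affect the asymptotic bound.
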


\begin{proof}
    Lemma~\ref{lem: FPAA} guarantees that $\mathcal{O}\left( \frac{\log (1/\delta_{AA})}{\sqrt{\lambda}}\right)$ queries to $A$ and $A^\dagger$ are enough to ensure that $\abs{\bra{T} S_L \ket{s}}^2 \geq 1-\delta_{AA}^2$, which implies that
    \begin{align}
        \frac{1}{\delta_{AA}} & \leq \frac{1}{\sqrt{1 - \abs{\bra{T} S_L \ket{s}}^2}}.
    \end{align}

    Therefore, we can express the number of queries as
    \begin{align}
        L & \in \mathcal{O}\left(\frac{1}{\sqrt{\lambda}} \log \left(\frac{1}{\sqrt{1 - \abs{\bra{T} S_L \ket{s}}^2}} \right)\right)\label{eq: randomlabel8}
    \end{align}

    Using the definitions of $\ket{s}$ and $\ket{T}$ from equations~\eqref{eq: initialState} and~\eqref{eq: targetState}, it can be shown that $\braket{T}{s} \in [1/2, 1]$ (the calculation is identical to that of calculating the success probability of measuring the block-encoding qubit in the desired state, which can be found in Appendix~\ref{App:ProduceChoiState}). Therefore, since $\braket{T}{s} = \lambda e^{i \xi}$, we need not consider $\lambda$ in the query complexity, as in the worst case it only affects the number of queries by a constant factor.

    Since $S_L \ket{s}$ and $\ket{T}$ are pure states, the trace distance between them is given by $D = \frac{1}{2}\sqrt{1 - \abs{\bra{T} S_L \ket{s}}^2}$. With this, we can express equation~\eqref{eq: randomlabel8} as
    \begin{align}
        L & \in \mathcal{O}\left( \log \left(\frac{1}{D} \right)\right)
    \end{align}  
\end{proof}

Next, we prove Lemma~\ref{lem: QMEQueryComplexity}, which is simply a reformulation of Lemma~\ref{lem: gradEstQuery} using the state $\widetilde{\rho}$ from equation~\eqref{eq:PartialApprox} rather than the state $\rho$ from equation~\eqref{eq:PartialChoiDensityOp2}.

\begin{lemma}[Query Complexity of Quantum Mean Estimation]\label{lem: QMEQueryComplexity}
     Let $U_{\widetilde{\rho}}$ be a unitary oracle that prepares the state $\widetilde{\rho}$ from equation~\eqref{eq:PartialApprox}. The Hamiltonian learning problem in Definition~\ref{def:HamLearningProblem} can be solved using
    \begin{align*}
        \widetilde{\mathcal{O}}\left( \frac{\Delta M}{\epsilon} \right)
    \end{align*}  
    queries to $U_{\widetilde{\rho}}$ and $U_{\widetilde{\rho}}^\dagger$, as well as to an oracle that implements controlled-$e^{-i\theta O_l}$ for each decoding operator of the form $O_l = \frac{H_l  \otimes X_C}{2}$.
\end{lemma}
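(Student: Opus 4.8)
The plan is to follow the proof of Lemma~\ref{lem: gradEstQuery} almost verbatim, the only structural change being that the resource state $\widetilde{\rho}$ of equation~\eqref{eq:PartialApprox} encodes the normalized block-encoded Hamiltonian $\widetilde{H}/\Delta$ rather than $H$, so that the role played there by the normalization constant $\alpha^2$ is played here by $\gamma^2 \Delta$, with $\gamma$ as in~\eqref{eq:gamma}. First I would record the analogue of Proposition~\ref{proposition: ComputeHamCoeffsPauli} for $\widetilde{\rho}$. Writing $O_l = \tfrac12 (H_l \otimes X_C)$ and $O_\alpha = \Id \otimes \ketbra{1}{1}_C$ and using the explicit block form of $\widetilde{\rho}$ in~\eqref{eq:PartialApprox}, the operator $O_l$ picks out only the two off-diagonal (in the $C$ register) blocks $\tfrac{\widetilde{H}}{\Delta}\otimes\ketbra{0}{1}_C$ and $\tfrac{\widetilde{H}^\dagger}{\Delta}\otimes\ketbra{1}{0}_C$, and the orthonormality relation ${\rm Tr}(H_m H_l) = d\,\delta_{ml}$ then gives
\begin{align}
    {\rm Tr}(\widetilde{\rho}\, O_l) = \frac{\widetilde{c}_l}{\gamma^2 \Delta}, \qquad\qquad {\rm Tr}(\widetilde{\rho}\, O_\alpha) = \frac{1}{\gamma^2}.
\end{align}
This is exactly the short computation in the proof of Proposition~\ref{proposition: ComputeHamCoeffsPauli}, carrying the extra factor of $1/\Delta$ through.

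Next I would invoke the mean-value estimation algorithm of~\cite{Huggins_nearly_optimal_estimating}. Collect the $M+1$ expectation values into a vector $\mathbf{d} := [{\rm Tr}(\widetilde{\rho} O_0),\ldots,{\rm Tr}(\widetilde{\rho} O_{M-1}),{\rm Tr}(\widetilde{\rho} O_\alpha)]$, so that $\widetilde{c}_l = \gamma^2 \Delta\, d_l$ and $\gamma^2 = 1/d_M$. Each of these operators has spectral norm at most one ($\norm{O_l} = \tfrac12$ since $H_l$ and $X_C$ are unitary, and $\norm{O_\alpha} = 1$), so the parameter $\nu_{\max}$ appearing in~\cite{Huggins_nearly_optimal_estimating} is $\mathcal{O}(1)$; moreover each $e^{-i\theta O_l}$ is a single (controlled) Pauli rotation. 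Consequently the algorithm estimates $\mathbf{d}$ within $\ell_\infty$ error $\epsilon'$ and failure probability $\delta$ using $\widetilde{\mathcal{O}}(\sqrt{M}/\epsilon')$ queries to $U_{\widetilde{\rho}}$, $U_{\widetilde{\rho}}^\dagger$, and the exponential oracles.

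Finally I would propagate the error. Since $t \in \mathcal{O}(1/\norm{H})$, Lemma~\ref{lemma: ProduceChoiState} and its surrounding discussion give $\gamma^2 \in \mathcal{O}(1)$, and as in the proof of Theorem~\ref{thm:querycomplexityUB} we may assume $\abs{\widetilde{c}_l} \in \mathcal{O}(1)$ (rescaling via $t$ if necessary). Reconstructing $\hat{\widetilde{c}}_l = \Delta\,\hat{d}_l/\hat{o}_\alpha$ and a one-line perturbation estimate then yield $\abs{\hat{\widetilde{c}}_l - \widetilde{c}_l} \in \mathcal{O}(\Delta\epsilon')$, whence $\norm{\hat{\widetilde{\mathbf{c}}} - \widetilde{\mathbf{c}}}_2 \le \sqrt{M}\,\norm{\hat{\widetilde{\mathbf{c}}} - \widetilde{\mathbf{c}}}_\infty \in \mathcal{O}(\sqrt{M}\,\Delta\epsilon')$. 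Choosing $\epsilon' \in \Theta(\epsilon/(\sqrt{M}\Delta))$ makes the $\ell_2$ error at most $\epsilon$, and the query count becomes $\widetilde{\mathcal{O}}(\sqrt{M}\cdot\sqrt{M}\Delta/\epsilon) = \widetilde{\mathcal{O}}(\Delta M/\epsilon)$, as claimed.

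The delicate point here is bookkeeping rather than a genuine obstacle: one must check that $\gamma^2 \in \mathcal{O}(1)$ in the regime $t \in \mathcal{O}(1/\norm{H})$, that dividing the estimate $\hat{d}_l$ by the estimate $\hat{o}_\alpha$ of $1/\gamma^2$ does not amplify the error, and that one keeps straight that $\widetilde{\rho}$ encodes $\widetilde{H}/\Delta$, not $H$. In particular this lemma only recovers the coefficient vector $\widetilde{\mathbf{c}}$ of the block-encoded Hamiltonian; bridging the gap to the true $\mathbf{c}$ via the block-encoding error bound of Proposition~\ref{prop: coeffErrUB} is what is left for Theorem~\ref{thm: gradEstFullQuery}.
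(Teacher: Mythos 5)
Your proposal is correct and follows essentially the same route as the paper's proof: compute ${\rm Tr}(\widetilde{\rho}\,O_l)$, feed the resulting vector to the mean-estimation algorithm of Huggins et al., rescale by $\Delta\gamma^2$, and choose $\epsilon'\in\Theta(\epsilon/(\sqrt{M}\Delta))$ with $\gamma^2\in\mathcal{O}(1)$; your explicit estimation of $1/\gamma^2$ via $O_\alpha$ and the division step is in fact slightly more careful than the paper, which simply multiplies by $\Delta\gamma^2$. One small inaccuracy: since $\widetilde{H}$ produced by the block-encoding need not be Hermitian, its coefficients $\widetilde{c}_l$ may be complex, and the trace of the Hermitian operator $O_l$ against $\widetilde{\rho}$ yields ${\rm Re}[\widetilde{c}_l]/(\Delta\gamma^2)$ rather than $\widetilde{c}_l/(\Delta\gamma^2)$ as you wrote; the paper notes this and observes it is harmless because $\abs{{\rm Re}[\widetilde{c}_l]-c_l}\le\abs{\widetilde{c}_l-c_l}$, so the downstream error bounds are unaffected.
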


\begin{proof}
    Using the definition the block-encoded Hamiltonian $\widetilde{H}$ from Definition~\ref{def:Hamdefs}, a short calculation shows that
    \begin{align}
        {\rm Tr}(\widetilde{\rho} O_l) &= \frac{{\rm Re}[\widetilde{c}_l]}{\Delta \gamma^2}
    \end{align}
    Note that while the coefficients $c_m$ of the Hamiltonian $H$ are real, the coefficients $\widetilde{c}_m$ of the approximation in the block-encoding $\widetilde{H}$ may be complex, though we only learn their real component here. Fortunately, this is of no consequence; in Section~\ref{section: blackboxQueryComplexity} we showed that $\abs{\widetilde{c}_l - c_l} \leq \frac{\epsilon_b}{t}$, where $t$ is the evolution time and $\epsilon_b$ is the block-encoding error, and writing $\widetilde{c}_l$ as the sum of its real and imaginary parts it is easy to see that $\abs{{\rm Re}[\widetilde{c}_l] - c_l} \leq \abs{\widetilde{c}_l - c_l}$, which means that the upper bound still holds. We can now continue in similar fashion to Lemma~\ref{lem: gradEstQuery}, using the quantum mean estimation protocol introduced by Huggins et al.~\cite{Huggins_nearly_optimal_estimating} to estimate the vector
    \begin{align}
        \mathbf{d} & := [{\rm Tr}(\widetilde{\rho} O_0), ..., {\rm Tr}(\widetilde{\rho} O_{M-1})],
    \end{align}
    from which we approximate the Hamiltonian coefficients by
    \begin{align}
        \widetilde{c}_l &:= \Delta \gamma^2 d_l,
    \end{align}
    where, again, we ignore any possible imaginary component. By the same argument as in Lemma~\ref{lem: gradEstQuery}, we require
    \begin{align}
        \widetilde{\mathcal{O}}(\Delta \gamma^2 M/\epsilon)
    \end{align}
    queries to $U_{\widetilde{\rho}}$ and $U_{\widetilde{\rho}}^\dagger$ to predict the vector of coefficients $\mathbf{\widetilde{c}} = [\widetilde{c}_0, ..., \widetilde{c}_{M-1}]$ to within $l_2$ error $\epsilon$.

    To complete the proof, we refer to Appendix~\ref{App:ProduceChoiState}, where we showed that $\gamma \in [1, \sqrt{2}] \in \mathcal{O}(1)$, and therefore we state the query complexity as
    \begin{align}
        \widetilde{\mathcal{O}}(\Delta M/\epsilon)
    \end{align}
    
\end{proof}

We now have the sufficient resources to prove Theorem~\ref{thm: gradEstFullQuery}.


\begin{theorem}[Solving the Hamiltonian Learning Problem via Quantum Mean Estimation]\label{thm: gradEstFullQuery}
    Let $U=e^{-iHt}$ be the unitary operator that performs time evolution under the Hamiltonian $H$ for time $t$. The number of queries to $U$ and $U^\dagger$ needed to solve the Unitary Hamiltonian Learning Problem for the Hamiltonian $H$ is at most

    \begin{align*}
        N &\in \widetilde{\mathcal{O}}\left( \frac{M}{\epsilon t} \right)
    \end{align*}
\end{theorem}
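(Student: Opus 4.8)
The plan is to compose three results that are already in place: the quantum mean-estimation query bound of Lemma~\ref{lem: QMEQueryComplexity} (which solves the learning problem given a \emph{unitary} oracle for the idealized block-encoded resource state $\widetilde\rho$), the fixed-point amplitude amplification of Lemma~\ref{lem: FPAA2} (which converts the heralded circuit of Figure~\ref{figure: ProducePartialChoiState} into such a unitary oracle), and the block-encoding construction of Lemma~\ref{lemma:ProduceBlockEncoding} (which builds $U_{block}$ from $\mathcal{O}(\log(1/\epsilon_b))$ queries to $U$ and $U^\dagger$). Concretely, I would take the oracle $U_{\widetilde\rho}$ demanded by Lemma~\ref{lem: QMEQueryComplexity} to be the FPAA circuit $S_L$ built on the preparation circuit $A$ of Figure~\ref{figure: ProducePartialChoiState} (with register $A$ subsequently ignored). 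One call to $U_{\widetilde\rho}$ then costs $\mathcal{O}(L)$ calls to $A$, and one call to $A$ costs $\mathcal{O}(\log(1/\epsilon_b))$ queries to $U,U^\dagger$; the auxiliary oracles $\text{controlled-}e^{-i\theta O_l}$ that the mean-estimation routine needs are single multi-qubit Pauli rotations (since $O_l=H_l\otimes X_C/2$) and cost no queries to $U$. Hence the total query count to $U$ and $U^\dagger$ is $\widetilde{\mathcal{O}}(\Delta M/\epsilon)\cdot\mathcal{O}(L)\cdot\mathcal{O}(\log(1/\epsilon_b))$, and using $\Delta=\pi/(2t)$ together with $t\in\mathcal{O}(1/\norm H)$ this collapses to $\widetilde{\mathcal{O}}(M/(t\epsilon))$, provided $L$ and $\log(1/\epsilon_b)$ turn out to be polylogarithmic in $M,1/t,1/\epsilon,1/\delta$.

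Establishing that last proviso is the substance of the proof, and it amounts to an error-budget argument with three contributions to $\norm{\hat{\mathbf c}-\mathbf c}_2$: (i) the block-encoding mismatch $\norm{\widetilde{\mathbf c}-\mathbf c}_2\le\sqrt M\,\epsilon_b/t$ from Proposition~\ref{prop: coeffErrUB}; (ii) the error from running mean estimation on $\chi={\rm Tr}_A\ketbra{T_{approx}}{T_{approx}}$ rather than on the exact $\widetilde\rho$; and (iii) the intrinsic precision of mean estimation on the ideal state. I would split $\epsilon$ into three equal parts. For (i), $\epsilon_b=\Theta(\epsilon t/\sqrt M)$ suffices and gives $\log(1/\epsilon_b)=\mathcal{O}(\log(\sqrt M/(\epsilon t)))$, which is absorbed by the $\widetilde{\mathcal{O}}$. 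For (iii), Lemma~\ref{lem: QMEQueryComplexity} already certifies that $\widetilde{\mathcal{O}}(\Delta\gamma^2 M/\epsilon)=\widetilde{\mathcal{O}}(\Delta M/\epsilon)$ queries to $U_{\widetilde\rho}$ reach $l_2$ error $\epsilon/3$, using $\gamma\in[1,\sqrt2]$ from Appendix~\ref{App:ProduceChoiState}.

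Contribution (ii) is where the real care is needed. The Huggins et al.\ procedure makes $N_{\rm QME}\in\widetilde{\mathcal{O}}(\Delta M/\epsilon)$ coherent invocations of the state-preparation unitary, so if $\ket{T_{approx}}$ lies within trace distance $D$ of $\ket{T}$, then — by monotonicity of trace distance under channels and its sub-additivity over sequential oracle calls — the algorithm's output distribution is within $\mathcal{O}(N_{\rm QME}D)$ of its ideal output; choosing $D$ to be a suitable inverse polynomial in $M,1/t,1/\epsilon,1/\delta$ drives this contribution below $\epsilon/3$. By Lemma~\ref{lem: FPAA2} this costs only $L=\mathcal{O}(\log(1/D))=\mathcal{O}(\log(M/(\epsilon t\delta)))$ iterations of $A$, i.e.\ a polylogarithmic overhead, so the $\widetilde{\mathcal{O}}$ is unaffected. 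Putting the three pieces together yields $\norm{\hat{\mathbf c}-\mathbf c}_2\le\epsilon$ with probability $1-\delta$ at the claimed query complexity. I expect the main obstacle to be precisely justifying step (ii): one must track exactly how many times the preparation oracle is queried inside the mean-estimation algorithm and argue — via a hybrid argument over those calls, or directly from the operational meaning of trace distance — that replacing $\ket T$ by a trace-distance-$D$ neighbour perturbs the final estimate by at most $\mathcal{O}(N_{\rm QME}D)$ (any fixed polynomial in $N_{\rm QME}$ would in fact suffice), which is exactly what makes the logarithmic scaling of $L$ in $1/D$ harmless.
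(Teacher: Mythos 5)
Your proposal is correct and follows essentially the same route as the paper's proof: the same composition of Lemma~\ref{lem: QMEQueryComplexity}, Lemma~\ref{lem: FPAA2}, and Lemma~\ref{lemma:ProduceBlockEncoding}, with the same three-way error budget over $\epsilon_b$, the FPAA distance $D$, and the mean-estimation precision $\epsilon'$. The one difference is in contribution (ii): the paper avoids the hybrid argument you flag as the main obstacle by instead running the mean-estimation guarantee directly on the state $\chi$ that the FPAA oracle actually prepares and bounding $\abs{{\rm Tr}((\chi-\widetilde{\rho})O_l)} \leq 2\norm{O_l}_2\, T(\chi,\widetilde{\rho})$ via Von Neumann's trace inequality, which yields the milder requirement $D=\Theta\!\left(\epsilon t/(\gamma^2\sqrt{M})\right)$ with no factor of $N_{\rm QME}$ — though your more conservative $\mathcal{O}(N_{\rm QME}D)$ bound is also absorbed by the $\log(1/D)$ cost of FPAA and lands at the same $\widetilde{\mathcal{O}}(M/(\epsilon t))$.
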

\begin{proof}
    The total number of queries to the time evolution operator and its inverse can be calculated by looking at the procedure as 4 nested queries.
    
    By Lemma~\ref{lem: QMEQueryComplexity}, the quantum mean estimation process makes
    \begin{align}
        \widetilde{\mathcal{O}}(\Delta M/\epsilon')
    \end{align}
    queries to the unitary $U_{\widetilde{\rho}}$ that prepares the state $\widetilde{\rho}$ (equation~\eqref{eq:PartialApprox}), where $\Delta = \frac{\pi}{2t}$. $U_{\widetilde{\rho}}$ makes use of the fixed-point amplitude amplification circuit, which makes
    \begin{align}
        \mathcal{O}\left( \log (1/D)\right)
    \end{align}
    queries the circuit $A$ that prepares state $\ket{s}$ (equation~\eqref{eq: initialState}), as well as to $A^\dagger$, where $D$ is the trace distance between the target state and the approximation of the target state that results from the FPAA process. Finally, the circuit $A$ makes a single query to the block-encoding of the Hamiltonian, which is prepared using
    \begin{align}
        \mathcal{O}\left(\log(1/\epsilon_b)\right)
    \end{align}
    queries to the time evolution operator $U=e^{-iHt}$ and its inverse, where $\epsilon_b$ is the block-encoding error. Altogether, this results in a query complexity of
    \begin{align}
        \mathcal{O}\left(\frac{M \log (1/D) \log(1/\epsilon_b)}{\epsilon' t}\right).
    \end{align}

    Since each step of the process involves approximation, we must choose sufficiently small values for the learning error $\epsilon'$, the block-encoding error $\epsilon_b$, and the FPAA error $\epsilon_{AA} := D$ such that the total $l_2$ error in the estimate of the Hamiltonian coefficients is at most $\epsilon$, as required by the Hamiltonian learning problem.

    Recall from Definition~\ref{def:Hamdefs} that $\mathbf{c}$ is the vector of Hamiltonian coefficients we wish to learn, $\mathbf{\widetilde{c}}$ is the vector of coefficients for the block-encoded Hamiltonian, and $\mathbf{\hat{c}}$ is the vector of coefficients that is estimated by the learning procedure. The total error in learning the coefficients is then
    \begin{align}
        \norm{\mathbf{\hat{c}} - \mathbf{c}}_2 & = \norm{\mathbf{\hat{c}} - \mathbf{\widetilde{c}} + \mathbf{\widetilde{c}} - \mathbf{c}}_2 \nonumber \\
        & \leq \norm{\mathbf{\hat{c}} - \mathbf{\widetilde{c}} }_2 + \norm{\mathbf{\widetilde{c}} - \mathbf{c}}_2 \nonumber \\
        & \leq \norm{\mathbf{\hat{c}} - \mathbf{\widetilde{c}} }_2 + \frac{M\epsilon_b}{t},
    \end{align}
    where we have used the result of Proposition~\ref{prop: coeffErrUB} in the last line. Conceptually, what we have done is separate the total error into the sum of the error in learning the block-encoded coefficients and the error in the block-encoded coefficients themselves. Next, we separate $\norm{\mathbf{\hat{c}} - \mathbf{\widetilde{c}} }_2$ into the error that comes from the quantum mean estimation, and the error resulting from using the state $\chi$ to approximate the QME resource state $\widetilde{\rho}$:
    \begin{align}
        \norm{\mathbf{\hat{c}} - \mathbf{\widetilde{c}} }_2 & \leq \norm{\mathbf{\hat{c}} - \mathbf{c'} }_2 + \norm{\mathbf{c'} - \mathbf{\widetilde{c}} }_2,
    \end{align}
    where $\mathbf{c'}$ is the vector of Hamiltonian coefficients that would be recovered if the the state $\widetilde{\rho}$ was used in the QME process. We upper bound $\norm{\mathbf{\hat{c}} - \mathbf{c'} }_2$ by considering the error in estimating a single coefficient:
    \begin{align}
        \abs{\hat{c}_l - c'_l} &= \Delta \gamma^2 \abs{{\rm Tr}(\chi O_l) - {\rm Tr}(\widetilde{\rho} O_l)} \nonumber \\
        & = \Delta \gamma^2 \abs{{\rm Tr}((\chi -\widetilde{\rho}) O_l)} \nonumber \\
        & \leq \Delta \gamma^2 \sum_{i=1}^k \sigma_i(\chi -\widetilde{\rho}) \sigma_i(O_l) \nonumber \\
        & \leq \Delta \gamma^2 \sigma_{max}(O_l) \sum_{i=1}^k \sigma_i(\chi -\widetilde{\rho})  \nonumber \\
        & = \Delta \gamma^2 \norm{O_l}_2 \sum_{i=1}^k \abs{\lambda_i(\chi -\widetilde{\rho})}  \nonumber \\
        & = \frac{\Delta \gamma^2}{2}  {\rm Tr}\left( \sqrt{(\chi -\widetilde{\rho})^\dagger (\chi -\widetilde{\rho})} \right)  \nonumber \\
        & = \Delta \gamma^2 T(\chi, \widetilde{\rho})
    \end{align}
    where $T(\chi, \widetilde{\rho})$ is the trace distance between $\chi$ and $\widetilde{\rho}$. We have made use of Von Neumann's trace inequality in the third line, where $\sigma_i(\cdot)$ represents the $i^{th}$ singular value of the corresponding operator. In the fifth line we used the fact that $\norm{O_l}_2 = 1/2$ ($O_l$ is simply a Pauli operator multiplied by a factor of 1/2), as well as expressed the singular values of $\chi -\widetilde{\rho}$ as the absolute values of its eigenvalues (this is valid since $\chi -\widetilde{\rho}$ is Hermitian, and therefore normal).

    By the Cauchy-Schwarz inequality, we then have
    \begin{align}
        \norm{\mathbf{\hat{c}} - \mathbf{c'} }_2 & \leq \sqrt{M} \max_l \abs{\hat{c}_l - c'_l} \nonumber \\
        & \leq \sqrt{M}\Delta \gamma^2 T(\chi, \widetilde{\rho})
    \end{align}

    Recall that $\chi$ is the state we get after tracing out system $A$ for the state $\ket{T_{approx}}$ that results from the FPAA process, and similarly $\widetilde{\rho}$ is the state resulting from tracing out system $A$ for the target state $\ket{T}$ of the FPAA. Therefore, the trace distance between $\chi$ and $\widetilde{\rho}$ is less than or equal to the trace distance between $\ket{T_{approx}}$ and $\ket{T}$~\cite{Rastegin_2012}.

    As per Lemma~\ref{lem: FPAA2}, performing $\mathcal{O}(\log(1/D))$ iterations of FPAA guarantees that the trace distance between $\ket{T_{approx}}$ and $\ket{T}$ is at most $D$. Therefore, we arrive at the following expression for the total error in predicting the Hamiltonian coefficients:
    \begin{align}
        \norm{\mathbf{\hat{c}} - \mathbf{c}}_2 & \leq \sqrt{M} \Delta \gamma^2 D + \epsilon' + \frac{M \epsilon_b}{t}\nonumber \\
        & = \frac{\pi \gamma^2 \sqrt{M} D}{2t} + \epsilon' + \frac{M \epsilon_b}{t}.
    \end{align}
    Assigning an error budget of $\epsilon/3$ to each of the three terms (i.e. choosing $D := \frac{2 \epsilon t}{3 \pi \gamma^2 \sqrt{M}}$, $\epsilon' := \epsilon/3 $, and $\epsilon_b := \frac{\epsilon t}{3 M}$) ensures the total error is at most $\epsilon$, and gives a final query complexity of
    \begin{align}
        N & \in \mathcal{O}\left( \frac{M  \log^2\left(\frac{M}{\epsilon t}\right)}{\epsilon t} \right).
    \end{align}
    Using $\widetilde{\mathcal{O}}$ which suppresses subdominant logarithmic terms, we get
    \begin{align}
        N & \in \widetilde{\mathcal{O}}\left( \frac{M}{\epsilon t} \right),
    \end{align}
    proving the result in Theorem~\ref{thm: gradEstFullQuery}.  

    Note that the failure probability does not lead to an increase in the number of queries (in $\widetilde{O}$ notation) because unlike in the classical shadows approach, we replace the measurement in the preparation circuit with fixed-point amplitude amplification so that instead of getting our desired resource state with some probability, we are guaranteed to have an approximation of the resource state. This means the only failure probability comes from the quantum mean estimation process, which guarantees the failure probability is smaller than $1/3$~\cite{Huggins_nearly_optimal_estimating}. By Chernoff arguments this probability can be made less than $\epsilon$  using a logarithmic number of extra queries, thus the $\widetilde{O}$ query complexity is unaffected.

\end{proof}

\section{Applications}\label{section:Applications}

Since in general a Hamiltonian can have exponentially many terms, one of the focuses of research on Hamiltonian learning has been to find suitable classes of Hamiltonians for which learning is more tractable. One way to approach this problem is to limit the scope to Hamiltonians with some known structure. Along this vein, several recent approaches to Hamiltonian learning have considered low-intersection Hamiltonians whose Hamiltonian terms have limited overlap in their support \cite{haah2021optimal, PhysRevLett.130.200403, gu2022practical}. This class of Hamiltonians contains all spatially local Hamiltonians, and as such contains many physically relevant Hamiltonians. 

While the low-intersection class of Hamiltonians is undoubtedly rich, it does not capture all physically relevant Hamiltonians. Our approach is well suited to more general Hamiltonians, such as those that are k-local but not low-intersection (for example, if you have pairwise interactions between all pairs of qubits). This is the first regime where our methods outperform previous approaches, under the assumption that we can query $U = e^{-iHt}$ as well as $U^\dagger$. However, the learning algorithms we develop are suitable even for Hamiltonians beyond the k-local class. Furthermore, in addition to physically notable Hamiltonians like the examples given in this section, several quantum algorithms use Hamiltonian simulation as a subroutine. Notable examples include algorithms for solving the quantum linear systems problem~\cite{HHL, FasterLinSys}, phase estimation~\cite{KitaevPhaseEstimation}, semi-definite programming~\cite{brando_SDP}, linear PDEs~\cite{Childs_PDEs}, and recent work on simulating systems of coupled oscillators~\cite{Nathan_Oscillators}. Therefore, it is of interest to develop learning techniques for different and more general classes of Hamiltonians, and not just the low-intersection class.

In this section we present a few examples of Hamiltonians that are $k$-local, but not low-intersection. We also include a discussion about the settings in which previous Hamiltonian techniques work well, and in which settings our approach outperforms them - in several cases reducing the query complexity polynomially, or even exponentially.

\subsection{Fermionic and Hard-core Boson Models}
A particularly relevant application of Hamiltonian learning is the learning problem for systems of particles.  These models are naturally represented in terms of creation and annihilation operators.  The nature of these operators vary, however the common thread between them is that the number operator (which measures the number of particles located in a particular mode) can be expressed as 
\begin{align}
n=a^\dagger a
\end{align}
where $a$ is an annihilation operator and $a^\dagger$ is a creation operator.  The form of these operators depends on the symmetries of the underlying particles; however, popular examples include bosonic case where $(a^\dagger)^{p}\ket{0} = \sqrt{p!} \ket{p}$ where $\ket{p}$ is a state that has $p$ particles in it and the fermionic or hard-core bosonic cases where $(a^\dagger)^p \ket{0}= \delta_{p\le 1} \ket{p}$.  The bosonic cases have commuting operators; whereas the fermionic case uses anti-commuting creation and annihilation operators.

As a particular example, let us consider the following model which describes the hopping of particles on a graph $G=(V,E)$ with interactions between nearest neighbors in the graph.  For simplicity, let us assume that the creation and annihilation operators are either fermionic or hard-core bosonic.  Under these assumptions, the Hamiltonian can be expressed as
\begin{equation}
H = \sum_{j\in V} c_j a^\dagger_j a_j + \sum_{(i,j) \in E} d_{i,j} (a^\dagger_i a_j + a^\dagger_j a_i) + e_{i,j} a^\dagger_i a_i a^\dagger_j a_j
\end{equation}
These Hamiltonian terms can be simplified by realizing that $\hat{n} = a^\dagger a = (I +Z)/2$
\begin{equation}
H = \sum_{j\in V} c_j' Z_j + \sum_{(i,j) \in E} d_{i,j} (a^\dagger_i a_j + a^\dagger_j a_i) + e_{i,j} Z_i Z_j/4
\end{equation}
In this case the Hamiltonian satisfies the above assumptions and further a specific qubit representation of the creation and annihilation operators are not needed for this case provided ${\rm Tr}( a^\dagger_i a_j a^\dagger_k a_\ell) =0$ for distinct $i,j$ and $k,\ell$ unless $i=\ell$ and $j=k$.  To see this, note that $a_i$ is an off-diagonal matrix as $a_i \ket{1} = \ket{0}$.  Thus the only way we can build a diagonal operator is by pairing $a_i$ with $a_i^\dagger$.  This can only occur, under the assumption of distinctness if $i=\ell$ and similarly $j=k$.  Thus this conforms the the Hamiltonian Learning Problem from Definition~\ref{def:HamLearningProblem} while not needing a specific Pauli decomposition of the creation and annihilation operators.  In contrast, many existing approaches to Hamiltonian learning require a low-intersection Hamiltonian and such a Hamiltonian need not be low intersection.  Specifically, if one considers the case of a fermionic Hamiltonian then the Jordan-Wigner representation does not yield a low-intersection Hamiltonian except for the case of one-dimensional graph $G$.  Thus our method differs from many approaches because such Hamiltonians can be learned directly rather than going first through a representation of for their creation and annihilation operators.

\subsection{Spin Glass Models}
Other models that lend themselves well to our approach to Hamiltonian learning are spin glasses. One particular example on the complete graph is the Sherrington-Kirkpatrick (SK) model \cite{SKModel}, an Ising-like model in which any pair of spins on a lattice can be coupled, regardless of the distance between them. The SK model was introduced to study the magnetic properties of spin glasses, and is also of interest in combinatorial optimization and the study of neural networks \cite{sherrington2012physics}.

In the presence of a transverse field, the Sherrington-Kirkpatrick Hamiltonian takes the form 
\begin{align}
    H_{SK} &= \frac{1}{\sqrt{n}} \sum_{1 \leq i < j \leq n} J_{ij} Z_i Z_j + g \sum_{i=1}^n X_i,
\end{align}
where the $J_{ij}$ are i.i.d. random variables sampled from the standard Gaussian distribution, and $g>0$ is the strength of the transverse field.

Since couplings are between pairs of spins, the Hamiltonian is 2-local and thus contains $M \in \mathcal{O}(n^2)$ terms. However, the Hamiltonian is very much not low-intersection, as each term has overlapping support with $\mathcal{O}(M)$ other terms. Note that while Heisenberg models on the complete graph contain only pairwise interactions, our method applies to more general spin glass models which allow for interactions between more than two spins. These models, often referred to as p-spin models, are p-local Hamiltonians with infinite range p-body interactions and thus the query complexity of the learning process increases, as there are $M \in \mathcal{O}(n^p)$ Hamiltonian coefficients to learn. Additionally, while these spin glass models only contain Pauli $Z$ interactions, our methods also apply to models with more general interactions such as Heisenberg models on dense graphs.

\subsection{Prior Work}\label{section: comparison}

In this section we consider several recent works on Hamiltonian learning. Note that the varying cost and error metrics used by different authors makes comparing approaches a difficult endeavour, so our goal here is mainly to highlight the scenarios where each approach may be particularly useful and mention how our approach fares in each case. Regarding error metrics, we would like to note that the infinity-norm, where one would like to simultaneously estimate every individual Hamiltonian coefficient to within some error ($\max_i \abs{\hat{c}_i - c_i} \leq \epsilon$), is popular as a metric. However, we feel that the 2-norm ($\norm{\hat{\textbf{c}} - \textbf{c}  }_2  \leq \epsilon$) better highlights the cost of learning, as the infinity-error obscures some of the dependence on the number of Hamiltonian terms. Thus, unless otherwise stated, references to query complexity will be for learning the Hamiltonian coefficients to error $\epsilon$ in the 2-norm. While all the approaches we will discuss use the same basic learning resource (black-box access to $U=e^{iHt}$), we separate them into two classes: those that require quantum control, and those that do not. The distinguishing factor between these two cases is that experiments without quantum control use only a single application of $U$.

First, we consider approaches that have the best scaling with respect to the error $\epsilon$ in learning the Hamiltonian coefficients. Namely, the results by \textit{Huang et al.}~\cite{PhysRevLett.130.200403} and \textit{Dutkiewicz et al.}~\cite{dutkiewicz2023advantage}, which are able to achieve Heisenberg-limited scaling ($\mathcal{O}(\epsilon^{-1})$) in the evolution time. As shown in~\cite{dutkiewicz2023advantage}, this is possible only through the use of quantum control, which makes their approach well suited for learning low-intersection Hamiltonians when accuracy is important. Our QME-based approach also makes use of quantum control to achieve Heisenberg-limited scaling, and has the added advantage that we are able to learn more general Hamiltonians, such as k-local Hamiltonians. Our shadows-based approach does not achieve this error scaling, but is robust to unexpected Hamiltonian terms, as we will discuss in the following section. A direct comparison between our approach and the one of~\cite{PhysRevLett.130.200403} is difficult because of differences in error metrics, complexity metrics, and input models. Importantly, the input models differ in that we require backwards time evolution, but only require one controlled application of $U$ and $U^\dagger$ for a short time. Meanwhile, in~\cite{PhysRevLett.130.200403} they require the ability to rapidly interleave many short-time evolutions. One case where our approach excels is when the Hamiltonian contains high-weight Pauli terms (i.e. terms that act non-trivially on $k$ qubits, for large $k$). Here, the techniques of~\cite{PhysRevLett.130.200403} require exponential (in $k$) queries and total evolution time. This is particularly bad if $k \in \mathcal{O}(n)$. On the other hand, our approach is able to handle this case with $\mathbf{poly}(n)$ queries to $U$ and $U^\dagger$.

There are also several approaches that do not beat the standard quantum limit(~\cite{haah2021optimal, gu2022practical, frança2022efficient, yu2022practical, caro2023learning}) but have other notable advantages - perhaps the most significant being that aside from~\cite{yu2022practical} they do not require quantum control. Several of these approaches also aim to look beyond low-intersection Hamiltonians. The results of \textit{Franca et al.}~\cite{frança2022efficient} scale particularly well with then number of Hamiltonian terms, requiring $\widetilde{\mathcal{O}}\left(\frac{M}{\epsilon^2} \right)$ queries to $U$ to learn a Hamiltonian with $M$ terms. However, their constant factor is exponentially large in $k$, and so their approach is unfavourable if the Hamiltonian contains high-weight Pauli terms. Conversely, the approach of \textit{Gu et al.}~\cite{gu2022practical} does not scale as favourably for Hamiltonians with many terms (requiring $\mathcal{O}(M^5)$ queries), but is able to handle high-weight Pauli terms without incurring an exponential (in $k$) overhead. These approaches are very well-suited for leaning low-intersection Hamiltonians, and even some (but not all) more general types of Hamiltonians, but are not ideal when there are high-weight terms or the number of terms is large.  
    
The techniques of \textit{Yu et al.}~\cite{yu2022practical} and \textit{Caro}~\cite{caro2023learning} seek to remedy this and consider even broader classes of Hamiltonians. The approach of~\cite{caro2023learning}, allows for any Hamiltonian to be learned (in the 2-norm) using $\mathbf{poly}(n)$ queries to $U$, so long as the Hamiltonian contains at most $\mathbf{poly}(n)$ non-zero terms, has some known structure (for example, k-locality), and $\norm{H} \in \mathbf{poly}(n)$. In~\cite{yu2022practical}, the authors dropped the need for an underlying structure to be known a priori, and instead only require that the Hamiltonians are sparse. The main drawback is that the resource complexity of these learning procedures does not scale well with the error in learning the coefficients, as both require $\mathcal{O}(1/\epsilon^4)$ queries to $U$. In addition, the effects of the evolution time on the query complexity of~\cite{yu2022practical} are not clear.

The result due to \textit{Caro}~\cite{caro2023learning} is particularly relevant to our work. Firstly, the procedure makes use of Choi states of the time evolution unitary, whereas we use pseudo-Choi states of the Hamiltonian itself. Our improved error scaling appears to indicate that pseudo-Choi states are more powerful in some ways than regular Choi states, and we believe this comes from the use of quantum control in preparing pseudo-Choi states. As mentioned above, ~\cite{dutkiewicz2023advantage} demonstrated the power of quantum control in achieving better error scaling, so it is perhaps not unexpected that our error scaling is not matched by an approach that uses only time-evolved states. Interestingly, both approaches also share similar bounds on evolution times. In addition to this, our results and those of~\cite{caro2023learning} have a key advantage that to the best of our knowledge no other approach shares, which is that our shadows-based approach, along with that of~\cite{caro2023learning}, can be used to learn \textit{any} Hamiltonian query efficiently if we consider the error in the infinity norm. As mentioned earlier, this error metric perhaps blurs the dependence on the number of Hamiltonian terms, but in situations where one really only cares about this metric, these two approaches can be used to learn Hamiltonians with even an exponential number of terms using a polynomial number of queries as long as the norm of the Hamiltonian is polynomially bounded in the number of qubits (though our algorithm remains more efficient). Of course, the computational complexity would be exponential as we would need to output estimates for an exponential number of coefficients, but the coefficients could be computed in parallel on classical computers, which could drastically reduce the wall time needed.


\section{Robustness}\label{section: robustness}
A last point that we wish to raise about our protocol involves the question of robustness of the learning protocol.  Specifically, when learning the Hamiltonian there is always a possibility that the Hamiltonian that we wish to learn contains terms that are absent in the model we wish to describe it with.  For example, we could assume that the Hamiltonian is of the form $H = x_1 P_1 + x_2 P_2$ but the actual Hamiltonian used to generate the pseudo-Choi state is of the form $H_{\rm true} = x_1P_1 + x_2 P_2 + x_3 P_3$ for orthogonal Pauli operators $P_1,P_2,P_3$.  In this case, it is clear that our protocols will never be able to learn the true Hamiltonian model for the pseudo-Choi state without including the $P_3$ term in our model.  Nonetheless, we could ask ourselves if the protocol is robust to small errors in the Hamiltonian and if such errors are evident in the reconstruction.  Interestingly, we will see that our protocols are both robust and also error evident meaning that we can tell when terms are present in the true Hamiltonian that are not reflected in the model.

\begin{definition}[Under Specified Hamiltonian Learning Problem]
Let $H = \sum_{j=1}^M c_j H_j +\chi E$ where $E$ is an unknown Hermitian operator such that $\max_j |{\rm Tr}(E H_j)|=0$ and  $\frac{1}{d} {\rm Tr}(E^2)=1$ with $\chi$ an unknown multiplicative constant.  Our aim is to learn $\hat{{\bf c}}$ such that $\|{\bf c - \hat{c}}\|_2 \le \epsilon$ and an estimate $\hat{\chi}$ such that $|\chi - \hat{\chi}| \le \epsilon_\chi$ with probability greater than $1-\delta$
\end{definition}

Corollary~\ref{corollary: underspecified} considers the error in learning the coefficient $\chi$ of the extra terms $E$. In particular, it shows that if we allow the error in learning $\chi$ to be larger than some lower bound, the Under Specified HLP can be solved using a number of queries on the same order as for the Unitary HLP. Thus, it is possible to check if any extra terms are present in the Hamiltonian by estimating the value of $\chi$. Note that although the value of $\epsilon$ remains the same as for the Unitary HLP, the evolution time $t$ must be small enough to normalize the Hamiltonian with the extra terms in order for the protocol to function.

\begin{corollary}\label{corollary: underspecified}
There exists an algorithm that can solve the under specified Hamiltonian learning problem using $\mathcal{O}\left( \frac{M \log{(M/\delta)}}{t^2 \epsilon^2} \log{\left( \frac{M}{t \epsilon} \right)}\right)$ controlled applications of $e^{-iHt}$ and its inverse under the assumptions of~Theorem~\ref{thm:querycomplexityUB} if $\epsilon_\chi = \Omega(\frac{\epsilon \norm{\mathbf{\widetilde{c}}}_1}{|\chi|})$ and $\epsilon_s \gamma^2 \le 1$.
\end{corollary}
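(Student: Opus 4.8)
The plan is to run Algorithm~\ref{algorithm: FindCoeffUnitary} unchanged on the $M$ candidate terms $H_1,\dots,H_M$ and then read off $\hat{\chi}$ from the estimate of the normalization constant. First I would observe that, because ${\rm Tr}(E H_l)=0$ for every $l$, the argument of Proposition~\ref{proposition: ComputeHamCoeffs} goes through verbatim: expanding ${\rm Tr}(\rho_c' O_l)$ and using ${\rm Tr}\big(\big(\sum_m c_m H_m + \chi E\big)H_l\big)=c_l\,d$, the decoding operators still return $c_l/(\Delta\gamma^2)$, so the vector $\hat{\mathbf{c}}$ produced by the algorithm satisfies $\norm{\hat{\mathbf{c}}-\mathbf{c}}_2\le\epsilon$ with exactly the sample and query budget of Theorem~\ref{thm:querycomplexityUB}. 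The one change is in the normalization: since $E$ is orthogonal to every $H_m$ and $\tfrac1d{\rm Tr}(E^2)=1$, one gets $\gamma^2 = \big(\norm{\widetilde{\mathbf{c}}}_2^2+\widetilde{\chi}^2\big)/\Delta^2 + 1$, where $\widetilde{\chi} := \tfrac1d{\rm Tr}(\widetilde H E)$ is the block-encoded analogue of $\chi$. This forces the evolution time to satisfy $t\in\mathcal{O}(1/\norm{H})$ with the $\chi E$ contribution included --- precisely the caveat in the statement --- and under this choice $\gamma^2\in\mathcal{O}(1)$ just as before.

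The new ingredient is that $\chi$ lives inside the observable $O_\gamma$. Inverting the normalization relation gives $\widetilde{\chi}^2 = \Delta^2(\gamma^2-1) - \norm{\widetilde{\mathbf{c}}}_2^2$, so from the classical-shadows estimate $\hat{o}_\gamma$ of ${\rm Tr}(\rho_c' O_\gamma)=1/\gamma^2$ and the coefficient estimates $\hat{c}_l$ that Algorithm~\ref{algorithm: FindCoeffUnitary} already outputs, I would set $\hat{\chi} := \big(\max\{0,\ \Delta^2(1/\hat{o}_\gamma-1)-\norm{\hat{\mathbf{c}}}_2^2\}\big)^{1/2}$. No extra copies of the resource state are consumed by this step, so the query complexity stays at $\mathcal{O}\!\big(M\log(M/\delta)\,t^{-2}\epsilon^{-2}\log(M/(t\epsilon))\big)$, as claimed, and the bound $\norm{\hat{\mathbf{c}}-\mathbf{c}}_2\le\epsilon$ is inherited directly from Theorem~\ref{thm:querycomplexityUB}.

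The step I expect to be the main obstacle is the error propagation into $\hat{\chi}$: it is a square root of a difference of two separately estimated, potentially large quantities ($\Delta^2\gamma^2\sim t^{-2}$ and $\norm{\widetilde{\mathbf{c}}}_2^2$), so a crude bound degrades badly when $|\chi|$ is small. I would control it in three moves. (i) The hypothesis $\epsilon_s\gamma^2\le1$ keeps $\hat{o}_\gamma$ bounded away from $0$ (since $\hat{o}_\gamma\ge 1/\gamma^2-\epsilon_s$), giving $|\Delta^2/\hat{o}_\gamma-\Delta^2\gamma^2|=\mathcal{O}(\Delta^2\gamma^2\epsilon_s)$. (ii) For the other term I would use the identity $\norm{\hat{\mathbf{c}}}_2^2-\norm{\widetilde{\mathbf{c}}}_2^2=\sum_l(\hat{c}_l-\widetilde{c}_l)(\hat{c}_l+\widetilde{c}_l)$ together with the per-coordinate error $\norm{\hat{\mathbf{c}}-\widetilde{\mathbf{c}}}_\infty=\mathcal{O}(\Delta\gamma^2\epsilon_s)$ coming from \eqref{eq:epsc}, which yields $\big|\norm{\hat{\mathbf{c}}}_2^2-\norm{\widetilde{\mathbf{c}}}_2^2\big|=\mathcal{O}(\Delta\gamma^2\epsilon_s\norm{\widetilde{\mathbf{c}}}_1)$; this is exactly where $\norm{\widetilde{\mathbf{c}}}_1$ (rather than $\sqrt{M}\,\norm{\widetilde{\mathbf{c}}}_2$) enters, and it is essential to use the $\ell_\infty$ bound here and not the $\ell_2$ bound. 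Adding the block-encoding contribution $|\widetilde{\chi}^2-\chi^2|=\mathcal{O}\!\big(\epsilon_b(|\chi|+\epsilon_b/t)/t\big)$ --- bounded through Cauchy--Schwarz in the Hilbert--Schmidt inner product, so no $k$-locality of $E$ is needed --- and substituting $\epsilon_s=\mathcal{O}(\epsilon_c/(\sqrt{M}\gamma^2\Delta))$, $\epsilon_c=\epsilon/2$, $\Delta=\pi/(2t)$, $\gamma^2\in\mathcal{O}(1)$, gives $|\hat{\chi}^2-\chi^2|=\mathcal{O}(\epsilon\norm{\widetilde{\mathbf{c}}}_1)$ up to a subdominant $\mathcal{O}(\epsilon/t)$ piece.

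(iii) Finally I would convert this to an error on $\hat{\chi}$ itself via $|\hat{\chi}-\chi|=|\hat{\chi}^2-\chi^2|/(\hat{\chi}+\chi)\le|\hat{\chi}^2-\chi^2|/|\chi|$ when the argument of the square root is nonnegative; and if it got clipped to $0$, then $|\chi|^2\le|\hat{\chi}^2-\chi^2|$, so the bound $|\hat{\chi}-\chi|\le|\hat{\chi}^2-\chi^2|/|\chi|$ still holds. Either way $|\hat{\chi}-\chi|=\mathcal{O}(\epsilon\norm{\widetilde{\mathbf{c}}}_1/|\chi|)$, so the assumed lower bound $\epsilon_\chi=\Omega(\epsilon\norm{\widetilde{\mathbf{c}}}_1/|\chi|)$ is exactly what is needed to absorb the $1/(\hat{\chi}+\chi)$ factor; note also that only $|\chi|$ is identifiable, since $(E,\chi)\mapsto(-E,-\chi)$ leaves $H$ unchanged. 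This establishes both parts of the under specified HLP with the stated query count.
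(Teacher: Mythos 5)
Your proposal is correct and follows essentially the same route as the paper's proof: run the learning algorithm unchanged (the orthogonality ${\rm Tr}(EH_l)=0$ preserves the coefficient estimates), recover $\hat{\chi}^2$ as the residual $\Delta^2(\hat{\gamma}^2-1)-\norm{\hat{\mathbf{c}}}_2^2$ of the normalization constant, and convert to an error on $\chi$ by dividing by $|\chi|$, with the block-encoding error folded in at the end. The only cosmetic difference is how the $\norm{\widetilde{\mathbf{c}}}_1$ factor arises --- you obtain it from the factorization $\norm{\hat{\mathbf{c}}}_2^2-\norm{\widetilde{\mathbf{c}}}_2^2=\sum_l(\hat{c}_l-\widetilde{c}_l)(\hat{c}_l+\widetilde{c}_l)$ with the $\ell_\infty$ per-coordinate bound, while the paper routes it through $\Delta\in\mathcal{O}(\norm{\widetilde{\mathbf{c}}}_1)$ --- and your explicit handling of the clipped-to-zero case and the sign ambiguity of $\chi$ are welcome refinements.
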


\begin{proof}   
The underspecified Hamiltonian learning problem can be solved in exactly the same manner as the fully specified learning problem.  Specifically, let us assume that we apply the algorithm $\mathtt{FindCoeffUnitary}(U, U^\dagger, \Delta, \boldsymbol{H}, n, N_s)$ where $U$ here is $e^{-iHt}$ and $\Delta$ is chosen to be $\pi/2t$ for $t\in \left( 0, \frac{1}{2\|H\|}\right)$.  Note that here we need to assume that this constant is chosen to be sufficiently large so that the erroneous Hamiltonian is also normalized given that $\chi\ne 0$.

Let us assume for the moment that we knew the identity of $E$ and performed the full learning protocol on the state. The only differences between the existing protocol and this one are that $M+1$ terms are present, and $t$ must be small enough to normalize the true Hamiltonian. It follows that the total number of queries to the unitary dynamics needed to learn the Hamiltonian scales from Theorem~\ref{thm:querycomplexityUB} as
\begin{align}
    N & \in \mathcal{O}\left( \frac{M \log{(M/\delta)}}{t^2 \epsilon^2} \log{\left( \frac{M}{t \epsilon} \right)}\right)
\end{align}
where the constants here are taken to be the true constants of the Hamiltonian with $E$ present.

Next assume that we construct the same Clifford shadow of the state as we would for solving the Unitary HLP. We then wish to reconstruct only $\mathbf{c}$ from the data.  We can achieve this by following the procedure $\mathtt{FindCoeffClifford}$ after noting that the values of the coefficients of $\mathbf{c}$ are computed independently of the coefficient of $E$.  Thus we can reconstruct $\mathbf{c}$ without reconstructing $E$ simply by skipping the $E$ reconstruction in the state (which wouldn't be possible anyways since the identity of $E$ is not known). Then from Theorem~\ref{thm:querycomplexityUB}, we are guaranteed that the error in the reconstructed $\hat{\mathbf{c}}$ is at most $\epsilon$ in this case and we learn $1/\gamma^2$ within precision $\epsilon_s$.  Thus we can learn the vector $\bf c$ within the required error from the exact same procedure discussed above.

The coefficient of $E$ can the be inferred from the result given that
\begin{equation}
    \gamma^2 = \frac{\|\tilde{c}\|_2^2 + \chi^2}{\Delta^2} +1.
\end{equation}
Thus the value of $\chi^2$ can be inferred from the residual value of the normalization after the vector $\hat{\bf c}$ is subtracted from it
\begin{equation}
    \hat{\chi}^2 = (\hat{\gamma}^2 -1)\Delta^2 - \|\tilde{c}\|_2^2
\end{equation}
given that we compute the value of $\gamma^{-2}$ within error $\epsilon_s$ this implies that
\begin{equation}
    |(\gamma^{-2} \pm \epsilon_s)^{-1} - \gamma^2| \le \frac{\gamma^2}{1 - \epsilon_s \gamma^2}-\gamma^2 = \epsilon_s \gamma^4 + O(\epsilon_s^2 \gamma^6).
\end{equation}
Thus if $\epsilon_s \gamma^2 \le 1$ then the remainder is $O(\epsilon_s \gamma^4)$. Recall that $\gamma^2 \in \mathcal{O}(1)$, so this assumption is not restrictive. Under the assumption that the magnitudes of the Hamiltonian coefficients are bounded above by $1$, we find by substitution of~\eqref{eq:epsc} that the error in our estimate of $\chi^2$ is at most 
\begin{align}
    \epsilon_{\chi^2} & = \sqrt{\left( \frac{\Delta^4 \gamma^8 \epsilon^2}{M (1 + \Delta^2)} \right) + 2\norm{\mathbf{\widetilde{c}}}_2^2 \epsilon^2} \nonumber \\
    & \in \mathcal{O}\left( \epsilon \sqrt{\left( \frac{\Delta^2 }{M} \right) + \norm{\mathbf{\widetilde{c}}}_2^2 } \right)
\end{align}
where we use the fact that $\gamma \in \mathcal{O}(1)$  and assume $\Delta \gg 1$ (which is reasonable, as $\Delta$ needs to normalize the Hamiltonian).

In order to understand how the error in our estimate of $\chi$ scales, we need to consider how an estimate of $\chi$ is found from $\chi^2$.  The simplest way of performing this is to simply take the square root of our estimate of $\chi^2$, which under the assumption that $ \abs{\epsilon_{\chi^2}} \le \chi^2/2$ leads us to the conclusion that 
\begin{align}
    \epsilon_\chi & \in O\left(\frac{\epsilon_{\chi^2}}{\abs{\chi}} \right). 
\end{align}
This results in 
\begin{align}
    \epsilon_\chi \in \mathcal{O}\left( \frac{\epsilon \sqrt{\left( \frac{\Delta^2 }{M} \right) + \norm{\mathbf{\widetilde{c}}}_2^2 }}{\abs{\chi}} \right)
\end{align}
Recall that $\Delta \in \mathcal{O}(\norm{H}_2)$, and so $\Delta \in \mathcal{O}(\norm{\widetilde{\mathbf{{c}}}}_1)$ if we assume the extra terms are smaller in magnitude than the terms that were expected in our model (that is, $\norm{\widetilde{\mathbf{e}}}_1 \leq \norm{\widetilde{\mathbf{{c}}}}_1$). Therefore, we have
\begin{align}
    \epsilon_\chi & \in \mathcal{O}\left( \frac{\epsilon \norm{\mathbf{\widetilde{c}}}_1}{\abs{\chi}}  \right).
\end{align}

Note that in the calculations above, we have assumed there was no error in the block-encoding of the additional Hamiltonian terms. In reality, the steps above correspond to the error in learning $\widetilde{\chi}$, the magnitude of the extra Hamiltonian terms in the block-encoding. The difference between $\chi$ and $\widetilde{\chi}$ can be upper bounded using Proposition~\ref{prop: coeffErrUB}, which leads to
\begin{align}
    \abs{\chi - \hat{\chi}} & \leq \abs{\chi - \widetilde{\chi}} + \abs{\widetilde{\chi} - \hat{\chi}} \nonumber \\
    & \leq   \frac{\epsilon_b}{t} + \epsilon_\chi
\end{align}

Recalling that in Theorem~\ref{thm:querycomplexityUB} the value of $\epsilon_b$ was specifically chosen such that $\frac{\sqrt{M} \epsilon_b }{t}$ was at most $\epsilon/2$, we see that
\begin{align}
    \abs{\chi - \hat{\chi}} & \in \mathcal{O}\left(   \frac{\epsilon}{2\sqrt{M}} + \frac{\epsilon \norm{\mathbf{\widetilde{c}}}_1}{|\chi|}  \right)    ,
\end{align}
and assuming the magnitude of the extra terms is smaller, or at least not much larger, than the terms that were expected (enough so that $|\chi|$  is smaller than something on the order of $\sqrt{M}\norm{\widetilde{\textbf{c}}}_1$), we have
\begin{align}
    \abs{\chi - \hat{\chi}} & \in  \mathcal{O}\left(  \frac{\epsilon \norm{\mathbf{\widetilde{c}}}_1}{|\chi|}    \right).
\end{align}

Thus if we insist that the error tolerance in the estimation of $\chi$ is in $\Omega\left(\frac{\epsilon \norm{\mathbf{\widetilde{c}}}_1}{|\chi|} \right)$ then we see that our accuracy criteria is automatically satisfied, and we can solve the under specified Hamiltonian learning problem.
\end{proof}

This shows that the protocol is robust to errors in the underlying Hamiltonian.  Specifically, it shows that if there are unknown terms in the Hamiltonian then shadow tomography will learn the correct coefficients of the known terms but also the coefficient of the (normalized) sum of the unknown terms will also be visible from the results.  This allows us to identify the presence of such terms, but our protocol provides no direct guidance other than the fact that the terms in question must be orthogonal to the known Hamiltonian terms.

In theory, since all additional terms can be represented as a sum of Pauli operators, one could systematically check the coefficients for the (k+1)-local terms, the (k+2)-local terms, etc. until all the non-zero ones are found (that is, when the magnitude of the additional coefficients is consistent with the magnitude of $\chi$). Furthermore, this could all be done without performing any additional measurements, as this is all performed on a classical computer using the classical shadow. The downside is that if one has no knowledge about the locality (in the Pauli basis) of the extra terms, then in the worst case this would require $\mathcal{O}(4^{n})$ time to check all possible Hamiltonian terms (though with a large enough classical memory, theoretically all these checks could be done in parallel), and with a polynomial number of queries, the resulting exponentially large vector of coefficients would only be epsilon accurate in the infinity norm. However, if one has some intuition about the potential identities of the extra terms, it is straightforward to check if they exist and learn their coefficients. In these cases, Corollary~\ref{corollary: underspecified} gives the following upper bound on the error in learning all the coefficients, including the ones that were not initially considered:
\begin{align}
    \norm{\mathbf{c} + \mathbf{e} - (\hat{\mathbf{c}} + \hat{\mathbf{e}})}_2 & \leq \epsilon + \epsilon_\chi,
\end{align}
where $\epsilon$ is the $l_2$ error we wanted to achieve in to satisfy the original Hamiltonian learning problem, $\mathbf{e}$ is the vector of coefficients of the extra Hamiltonian terms, and $\hat{\mathbf{e}}$ is our estimate of it. Again, note that we still require $t$ to be small enough to account for the extra terms, so it may be useful to choose a value of $t$ that is slightly smaller than needed for a k-local Hamiltonian, in case there are additional Hamiltonian terms. 

The remaining question that we wish to discuss is how accurate the preparation of our pseudo-Choi state needs to be in order to ensure that our estimate of the classical shadow is correct.  Let us assume for this discussion that for parameter $\omega$ and state operator $\rho^\perp$
\begin{align}
\tilde{\rho}_{c} := (1-\omega) \ketbra{\psi_c'}{\psi_c'} + \omega \rho^\perp.
\end{align}
This setting can cover not only cases where an inexact pseudo-Choi state is prepared but also the case where an unknown number of qubits are acted on by the Hamiltonian.  Although at first glance an explicit analysis similar to the previous corollary would seem appropriate for this case, conceptual hurdles arise in the way in which we define the pseudo-Choi state in this case if we do not know the underlying qubits.  The more natural setting is one in which the Hamiltonian terms are coupled to an unknown reservoir for the system which naturally leads to open system dynamics and in turn to a density matrix similar to that described above after a partial trace over the environmental degrees of freedom.  For this reason, we focus on this model for the system.

The shadow tomography reconstruction will then use such states to prepare the distributions $\tilde{\hat{\rho}}_i$ based on the measurement statistics that we glean from measuring the state in the computational basis.  Specifically we have from~\eqref{eq:TrrhoOl} and~\eqref{eq:gamma2} that the error in reconstructing each of the Hamiltonian coefficients, under the assumption that the spectral norm of each operator obeys $\|O_l\|\le 1$, is
\begin{align}
|{\rm Tr}(\tilde{\rho_c} O_l) - {\rm Tr}({\rho_c}' O_l)|\le \|\tilde{\rho_c} - \rho_c'\|_{\rm Tr}\le 2\omega
\end{align}
Similarly, our estimate of $1/\gamma^2$ will err by at most $2\omega$. Note that in addition to this error, which is caused by our inaccurate preparation of the pseudo-Choi state, we also have an additional error, $\epsilon_s$, due to the use of classical shadows in estimating the expectation values above. 

Let $c_l$ be the true coefficient of the $l^{th}$ Hamiltonian term, and let $c_l'$ be the estimate we would get if the pseudo-Choi state was prepared accurately (that is, $\omega = 0$). Finally, let $\widetilde{c}_l$ be the estimate we get from an inaccurate preparation of the pseudo-Choi state. The error in estimating this coefficient is then
\begin{align}
    \abs{c_l - \widetilde{c}_l} & \leq \abs{c_l - c_l'} + \abs{c_l' - \widetilde{c}_l} \nonumber \\
    & \leq 2 \gamma^2 \Delta (\epsilon_s + 2 \omega),
\end{align}
where the factor of $2 \gamma^2 \Delta$ on each term results from having to multiply the expectation values by $\gamma^2 \Delta$, and we have used the fact that the magnitudes of the Hamiltonian coefficients are upper bounded by 1, as well as the assumption that $\Delta \geq 2$ (which is reasonable since $\Delta \in \mathcal{O}(\norm{H}_2)$).

The error due to the inaccurate preparation of the pseudo-Choi state will not dominate the error from shadow tomography if this error is subdominant to $\epsilon_s$, which from~\eqref{eq:epsc} will occur if 
\begin{equation}
\omega \in O\left(\frac{\epsilon_c }{\sqrt{M}\gamma^2 \sqrt{\tilde{c}_{\max}^2 +\Delta^2}} \right).
\end{equation}
From this we see that the result is robust in the sense that the value of $\omega$ required shrinks at most polynomially with the number of terms in the Hamiltonian.  However, as the number of terms increases polynomially with the number of qubits for many systems of interest, this means that in the worst case scenario a very accurate preparation of the pseudo-Choi state will be needed in order to achieve the requisite accuracy for the entire Hamiltonian.

\section{Conclusion and Outlook}\label{section:conclusion}
In this work we have provided a new approach to Hamiltonian learning that draws inspiration from the state to channel isomorphism to yield a state from which the Hamiltonian can be directly learned and can be prepared using a simple quantum circuit. We are able to show that such approaches can easily learn $k$-local Hamiltonians, which is a broader class of Hamiltonians than the low-intersection case considered in most other Hamiltonian learning works, and in some cases can learn even much more general Hamiltonians without requiring exponentially many queries.  We further consider the learning problem in a scenario where amplitude estimation can be used to further accelerate it using mean-value estimation algorithms and achieve the optimal $1/\epsilon$ scaling with the error tolerance.

There are a number of interesting problems revealed by this work.  While our approach extends the scope of recent work on Hamiltonian learning (\cite{haah2021optimal, gu2022practical, PhysRevLett.130.200403}) to include more general Hamiltonians, a caveat is that we require backwards time evolution to generate the pseudo-Choi state. One area of interest for future work would be to find methods of generating pseudo-Choi-like states efficiently without the need for backwards time evolution, making our techniques very appealing for Hamiltonian learning under a much wider range of scenarios. Additionally, recent work by \textit{Huang et al.} \cite{PhysRevLett.130.200403} introduced an interesting method for reshaping Hamiltonians into non-interacting patches, allowing different parts of the Hamiltonian to be learned in parallel. If these methods could be applied to our approach, it may be possible to create pseudo-Choi states for separate parts of the Hamiltonian instead of the entire Hamiltonian, which could allow for longer evolution times, potentially reducing the total number of queries.

More broadly, however, this form of a block-encoding of the Hamiltonian within a quantum state allows much more flexible learning than existing approaches permit. While preparing pseudo-Choi states requires a trusted quantum computer, making them less appealing for near-term experimental settings, they open up the potential of tackling many different learning problems by similarly encoding information about the system within states. Such problems include Lindbladian learning for open systems, time-dependent Hamiltonian learning, and even applications of Hamiltonian learning methods to learn the structure of quantum walks.  It is our belief that these results represent a step towards a broader understanding of learning dynamical models for quantum systems and in turn a characterization of the physical models that can or cannot be efficiently learned from experiment.

\subsection*{Acknowledgements}
JC acknowledges funding from NSERC and CIFAR.  NW would like to acknowledge funding for this work from Google Inc. This material is based upon work supported by the U.S. Department of Energy, Office of Science, National Quantum Information Science Research Centers, Co-design Center for Quantum Advantage (C2QA) under contract number DE-SC0012704 (PNNL FWP 76274). 

\section{References}

\printbibliography

\appendix
 
\section{Estimating the Hamiltonian Coefficients (Pauli Shadows)}\label{App: PauliShadows}

Previously we considered the use of Clifford-based shadows procedure for estimating the Hamiltonian coefficients, but it is also possible to use the Pauli version of classical shadows. Since the operators involved in the analysis of the Pauli-based shadows are all single-qubit operators, we introduce some notation to indicate the specific 2 dimensional subspace an operator acts on, as well as the qubit that a measurement result corresponds to. First, since the subscript $i$ already denotes the round of measurement each bit string $\ket{b_i}$ came from, we use the notation $\ket{b_i[j]}$ to refer to specific bits of $\ket{b_i}$. More specifically, 
\begin{align}
    \ket{b_i} &= \bigotimes_{j=0}^{\eta-1} \ket{b_i[j]}
\end{align}
Once again, recall that $\ket{b_i}$ is a classical bit string, and Dirac notation is only used for presentation purposes. Second, the random unitaries, $U_i$, applied during each round of measurement are tensor products of single-qubit Clifford operators. Therefore, we write them as
\begin{align}
    U_i = \bigotimes_{j=0}^{\eta-1} U_{(i,j)},
\end{align}
where $U_{(i,j)}$ is a single-qubit operator acting on the $j^{th}$ qubit. Similarly, we can write the Hamiltonian terms $H_l$ as
\begin{align}
    H_l &= \bigotimes_{j=0}^{\eta-1} H_{(l,j)}
\end{align}

We again consider the process (viewed as a channel) of taking the expectation value over all measurement outcomes and unitaries in the group $\text{Cl}(2)^{\otimes \eta}$. Rather than a single depolarizing channel (as in the case where Clifford operations from $\text{Cl}(2^\eta)$ were randomly sampled), this can be viewed as $\eta$ depolarizing channels, each acting on a single qubit \cite{Huang_2020}. Inverting these channels as before gives the classical shadow specified in Definition~\ref{Def:ClassicalShadowPauli}.
    
    \begin{definition}[Classical Shadow (From Pauli Measurements)]\label{Def:ClassicalShadowPauli}
        Given $N$ copies of an $\eta$-qubit quantum state $\rho$, the classical shadows procedure based on random Pauli measurements returns a classical shadow of $\rho$ which is of the form
        \begin{align}
            \hat{\rho} &= \{\hat{\rho}_i | i \in \mathbb{Z}_N\},
        \end{align}
        where
        \begin{align}
            \hat{\rho}_i = \bigotimes^{\eta-1}_{j=0} \Big(3U_{(i,j)}^\dagger \ket{b_i[j]}\bra{b_i[j]} U_{(i,j)} -\Id \Big)
        \end{align}
        
        Recall that $i$ denotes the round of measurement and $j$ denotes the qubit; the unitary applied to $\rho$ during the $i^{th}$ round of the classical shadows is $U_i = \bigotimes_{j=0}^{\eta-1} U_{(i,j)}$, and $\ket{b_i[j]}$ refers to the outcome of measuring the $j^{th}$ qubit in the computational basis during the $i^{th}$ round of the classical shadows procedure.
        
    \end{definition}

Just as in the case where the Clifford flavour of classical shadows was used, the goal of this section is to extract the Hamiltonian coefficients from the pseudo-Choi state - that is, to solve the Pseudo-Choi Hamiltonian Learning problem (PC HLP). However, instead of generating a classical shadow of the pseudo-Choi state, we begin by performing a partial trace over the ancillary system $A$ to simplify the learning process. This results in the following state which will serve as our learning resource:
\begin{align}
    \rho = & \frac{1}{d\alpha^2}\Big(H^2 \otimes \ket{0}_C\bra{0}_C  + H \otimes \ket{0}_C\bra{1}_C   +  H \otimes \ket{1}_C\bra{0}_C +  \Id \otimes \ket{1}_C\bra{1}_C \Big)\label{eq:PartialChoiDensityOp},
\end{align}
where $d=2^n$ is the dimension of the Hilbert space the Hamiltonian acts on, and $\alpha = \sqrt{\norm{\mathbf{c}}_2^2 + 1}$. The set of decoding operators whose expectation values correspond to the Hamiltonian coefficients is also modified as per Definition~\ref{Def: PauliDecodingOperators}:

\begin{definition}[Decoding Operators] \label{Def: PauliDecodingOperators}
    Let the set of decoding operators be defined as
    \begin{align}
        \boldsymbol{O} & \equiv \{ O_l | l \in \mathbb{Z}_{M} \} 
    \end{align}
    where
    \begin{align}
        O_l &= \frac{H_l  \otimes X_C}{2}
    \end{align}
    and $H_l$ is one of the $M$ Hamiltonian terms. 
    
    Furthermore, we define
    \begin{align}
        O_{\alpha} & \equiv \Id \otimes \ket{1}_C\bra{1}_C
    \end{align}
\end{definition}

By Proposition~\ref{proposition: ComputeHamCoeffsPauli}, we have that
\begin{align}
    {\rm Tr}(\rho O_{l}) = \frac{c_l}{\alpha^2},
\end{align}
where $c_l$ is the $l^{th}$ Hamiltonian coefficient, and $\alpha$ is the normalization constant of the pseudo-Choi state \eqref{eq:alpha}. Furthermore,
\begin{align}
    {\rm Tr}(\rho O_{\alpha}) = \frac{1}{\alpha^2}
\end{align}

Now that we have a set of operators whose expectation values correspond to the Hamiltonian coefficients when acting on the resource state $\rho$ \eqref{eq:PartialChoiDensityOp}, we seek to approximate these expectation values using a classical shadow of $\rho$. Recall that to obtain $\rho$ we traced out the ancillary system $A$ from the pseudo-Choi state \eqref{eq:ChoiDensityOp}. To generate a classical shadow of $\rho$, starting from the pseudo-Choi state, we must therefore ignore the ancilla system $A$ during the classical shadows procedure, meaning that all random (single-qubit) Clifford operations and measurements are only on $n + 1$ qubits ($n$ qubits for the system on Hilbert space $\mathcal{H}_S$ and one qubit for ancilla on Hilbert space $\mathcal{H}_C$). In terms of the definition of the classical shadow, we let $\eta = n+1$ so that a classical shadow is made up of $N$ components of the form
\begin{align}
    \hat{\rho}_i & = \bigotimes_{j=0}^{n} \Big( 3U_{(i,j)}^\dagger \ket{b_i[j]}\bra{b_i[j]}U_{(i,j)} - \Id\Big)
\end{align}
With this, and the definition of the decoding operators given in Definition~\ref{Def: PauliDecodingOperators}, we can write the expectation values of interest in the explicit forms given in Claim~\ref{claim: PauliExpectations}.
\begin{claim}\label{claim: PauliExpectations}
Let $X_C$ denote the single-qubit Pauli X operation acting on $\mathcal{H}_C$, and let $H_{l,j}$ be the single qubit Pauli operator from $H_l$ that acts on the $j^{th}$ qubit (recall that each $H_l$ is an $n$-qubit Pauli operator on $\mathcal{H}_S$).
\begin{align}
	{\rm Tr}(\hat{\rho}_i O_{l}) &= \frac{3}{2} \bra{b_i[n]} U_{(i,n)} X_C U_{(i,n)}^\dagger \ket{b_i[n]}\prod_{j=0}^{n-1} \Big(3 \bra{b_i[j]} U_{(i,j)} H_{(l,j)} U_{(i,j)}^\dagger \ket{b_i[j]} - {\rm Tr}(H_{(l,j)}) \Big)  \label{eq:EfficientTrace4},
\end{align}

Similarly,
\begin{align}
    {\rm Tr}(\hat{\rho}_i O_\alpha) & = 3 \abs{\bra{b_i[n]} U_{(i,n)}\ket{1}_C }^2 - 1 \label{eq:EfficientTrace5}
\end{align}
\end{claim}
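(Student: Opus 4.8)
The plan is to exploit the fact that every object inside the trace is fully factorized across the $n+1$ qubits on which the Pauli-flavoured shadow is built: qubits $0,\dots,n-1$ for the system $\mathcal{H}_S$ and qubit $n$ for the ancilla $\mathcal{H}_C$. Indeed $\hat{\rho}_i = \bigotimes_{j=0}^{n} \hat{\rho}_{(i,j)}$ with $\hat{\rho}_{(i,j)} := 3 U_{(i,j)}^\dagger \ket{b_i[j]}\bra{b_i[j]} U_{(i,j)} - \Id$, while $O_l = \tfrac12\big(\bigotimes_{j=0}^{n-1} H_{(l,j)}\big)\otimes X_C$ and $O_\alpha = \big(\bigotimes_{j=0}^{n-1}\Id\big)\otimes \ketbra{1}{1}_C$ are tensor products over the same $n+1$ qubits. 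Hence $\mathrm{Tr}(\hat{\rho}_i O) = \prod_{j=0}^{n} \mathrm{Tr}\big(\hat{\rho}_{(i,j)} O_{(j)}\big)$, and it suffices to evaluate each single-qubit trace and multiply. (One should first note, as in the Clifford case, that this is consistent with the resource state being $\rho$ of \eqref{eq:PartialChoiDensityOp}, obtained after the partial trace over $A$, which is why the shadow lives on $n+1$ rather than $2n+1$ qubits and the product below runs only to $n-1$.)

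First I would handle $O_l$. For a system qubit $j\in\{0,\dots,n-1\}$, linearity of the trace together with its cyclic property gives
\begin{align*}
\mathrm{Tr}\!\big(\hat{\rho}_{(i,j)}\, H_{(l,j)}\big) = 3\,\bra{b_i[j]} U_{(i,j)} H_{(l,j)} U_{(i,j)}^\dagger \ket{b_i[j]} - \mathrm{Tr}(H_{(l,j)}),
\end{align*}
which is exactly the $j$-th factor appearing in \eqref{eq:EfficientTrace4}. For the ancilla qubit $n$ I would absorb the global $\tfrac12$ into the $X_C$ factor and use $\mathrm{Tr}(X_C)=0$, obtaining
\begin{align*}
\mathrm{Tr}\!\big(\hat{\rho}_{(i,n)}\, \tfrac12 X_C\big) = \tfrac32\,\bra{b_i[n]} U_{(i,n)} X_C U_{(i,n)}^\dagger \ket{b_i[n]}.
\end{align*}
Multiplying these $n+1$ single-qubit values yields \eqref{eq:EfficientTrace4}.

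Next, for $O_\alpha$ the system-qubit factors become trivial: $\mathrm{Tr}\!\big(\hat{\rho}_{(i,j)}\,\Id\big) = 3 - \mathrm{Tr}(\Id) = 3 - 2 = 1$ for each $j<n$ (using that $\ket{b_i[j]}$ is a normalized computational-basis state), so the product over $j<n$ collapses to $1$. On the ancilla qubit the same one-line computation as above with $A = \ketbra{1}{1}_C$ gives $\mathrm{Tr}\!\big(\hat{\rho}_{(i,n)}\ketbra{1}{1}_C\big) = 3\,\abs{\bra{b_i[n]} U_{(i,n)}\ket{1}_C}^2 - 1$, which is \eqref{eq:EfficientTrace5}.

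The argument is essentially bookkeeping, so I do not expect a genuine analytic obstacle; the only points that require care are keeping the qubit labelling straight — in particular that the $C$ register is the single qubit indexed $j=n$ in both $\hat{\rho}_i$ and the decoding operators — and correctly tracking where the overall factor $1/2$ in $O_l$ lands. It is also worth noting in passing that, since each $H_{(l,j)}$ is a single-qubit Pauli, $\mathrm{Tr}(H_{(l,j)})$ vanishes unless $H_{(l,j)}=\Id$, which is what makes the right-hand side of \eqref{eq:EfficientTrace4} cheap to evaluate in practice.
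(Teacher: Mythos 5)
Your proposal is correct and follows essentially the same route as the paper's own proof: both factor the trace across the $n+1$ single-qubit tensor factors, evaluate $3\bra{b_i[j]}U_{(i,j)}H_{(l,j)}U_{(i,j)}^\dagger\ket{b_i[j]}-{\rm Tr}(H_{(l,j)})$ on each system qubit, use ${\rm Tr}(X_C)=0$ on the ancilla factor, and reduce the system factors to $3-2=1$ for the $O_\alpha$ case. The only difference is presentational — you state the single-qubit factorization up front while the paper carries the full tensor products through the computation — and your remarks on the qubit labelling and the placement of the $1/2$ match the paper's bookkeeping.
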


\begin{proof}[Proof of Claim~\ref{claim: PauliExpectations}]
    Recall that $\hat{\rho}_i$ is on $n+1$ qubits, $H_l$ is on $n$ qubits, and $X_C$ is the single qubit Pauli X operator. Then,
    \begin{align}
        \begin{split}
        	{\rm Tr}[\hat{\rho}_i O_{l}] =& {\rm Tr}\left[ \left(\bigotimes_{j=0}^{n} \Big( 3U_{(i,j)}^\dagger \ket{b_i[j]}\bra{b_i[j]}U_{(i,j)} - \Id\Big)\right) \left(\frac{H_l\otimes X_C}{2}\right)   \right]\\
        	=& \frac{1}{2}{\rm Tr}\left[ \left(\bigotimes_{j=0}^{n} \Big( 3U_{(i,j)}^\dagger \ket{b_i[j]}\bra{b_i[j]}U_{(i,j)} - \Id\Big)\right) \left(\left(\bigotimes_{j=0}^{n-1} H_{(l,j)}\right)\otimes X_C \right)  \right]\\
        	= & \frac{1}{2}{\rm Tr}\Bigg[ \Bigg(\bigotimes_{j=0}^{n-1} \Big( 3U_{(i,j)}^\dagger \ket{b_i[j]}\bra{b_i[j]}U_{(i,j)} H_{(l,j)} - H_{(l,j)} \Big)\Bigg) \\
            & \otimes \Bigg(3U_{(i,n)}^\dagger\ket{b_i[n]}\bra{b_i[n]}U_{(i,n)} X_C - X_C\Bigg) \Bigg]\\
        	=& \frac{3}{2}{\rm Tr}\Bigg[ U_{(i,n)}^\dagger\ket{b_i[n]}\bra{b_i[n]}U_{(i,n)} X_C  \Bigg] \prod_{j=0}^{n-1} {\rm Tr}\Bigg[ 3U_{(i,j)}^\dagger \ket{b_i[j]}\bra{b_i[j]}U_{(i,j)} H_{(l,j)} - H_{(l,j)} \Bigg]\\
        	=& \frac{3}{2} \bra{b_i[n]}U_{(i,n)} X_C U_{(i,n)}^\dagger\ket{b_i[n]}  \prod_{j=0}^{n-1}  \left(3 \bra{b_i[j]}U_{(i,j)} H_{(l,j)}U_{(i,j)}^\dagger \ket{b_i[j]} - {\rm Tr}\left[H_{(l,j)} \right]\right),
    	\end{split}
    \end{align}
    which gives the result of equation \eqref{eq:EfficientTrace4}. 
    
    Likewise,
    \begin{align}
        \begin{split}
            {\rm Tr}[\hat{\rho}_i O_\alpha]  =& {\rm Tr}\left[ \Bigg( \bigotimes_{j=0}^{n} \Big( 3U_{(i,j)}^\dagger \ket{b_i[j]}\bra{b_i[j]}U_{(i,j)} - \Id\Big)\Bigg)   \Bigg( \Id_S \otimes \ket{1}_C \bra{1}_C\Bigg) \right]\\
            =& {\rm Tr}\Bigg[ \Bigg( \bigotimes_{j=0}^{n-1} \Big( 3U_{(i,j)}^\dagger \ket{b_i[j]}\bra{b_i[j]}U_{(i,j)} - \Id\Big)\Bigg)   \\
            & \otimes\Bigg( \Big( 3U_{(i,n)}^\dagger \ket{b_i[n]}\bra{b_i[n]}U_{(i,n)} - \Id\Big) \ket{1}_C \bra{1}_C\Bigg) \Bigg]\\
            = &  \prod_{j=0}^{n-1} \Bigg({\rm Tr}\left[  3U_{(i,j)}^\dagger \ket{b_i[j]}\bra{b_i[j]}U_{(i,j)}\right] - {\rm Tr}\left[ \Id\right]\Bigg)   \\
            & \Bigg(  3\bra{b_i[n]}U_{(i,n)}\ket{1}_C \bra{1}_C U_{(i,n)}^\dagger \ket{b_i[n]} - 1 \Bigg)\\
            =&   3\bra{b_i[n]}U_{(i,n)}\ket{1}_C \bra{1}_C U_{(i,n)}^\dagger \ket{b_i[n]} - 1,
        \end{split}
    \end{align}
    which gives the result of equation \eqref{eq:EfficientTrace5}, completing the proof.
\end{proof}

Since only single qubit operations are required, computing each trace takes $\mathcal{O}(n)$ time, and requires $\mathcal{O}(n)$ memory to store all intermediate values. In fact, if we focus on $k$-local Hamiltonians, each Hamiltonian term $H_l$ only contains at most $k$ non-identity Paulis, so only $\mathcal{O}(k)$ operations are required to compute each value.

As was the case with the random Clifford measurements, it is unnecessary to store the entire classical shadow:

\begin{definition}\label{Def: compressedShadowPauli}
    If the random Pauli measurement version of the classical shadows procedure is performed, the resulting classical shadow of size $N$ can be stored in the following way:
    \begin{align}
        \ket{\hat{\rho}} &= \{\ket{\hat{\rho}_{i}} | i \in \mathbb{Z}_N \},
    \end{align}
    where
    \begin{align}
        \ket{\hat{\rho}_{i}} &= \bigotimes_{j=1}^{n+1} U_{(i,j)}^\dagger \ket{b[j]}_{i}
    \end{align}
    Here we use Dirac notation to describe the resultant bit-strings, however note that as $\ket{b[j]}_{i}$ is a classical bit, so $\ket{\hat{\rho}_i}$ is a computational basis vector which corresponds to a one-hot unary representation of a bit string.
\end{definition}

Algorithm~\ref{algorithm: FindCoeffPauli} describes the full procedure for estimating the Hamiltonian coefficients. Note that it uses the ``compressed'' Pauli shadows given by Definition~\ref{Def: compressedShadowPauli}. Also, there is no need to use the Tableau or SIP Algorithms of \cite{Aaronson_2004} to store intermediary values or compute the expectation values since all the operations involved are on single qubits.

\begin{algorithm}[t!]
\caption{$\mathtt{FindCoeffPauli}(\rho_c^{\otimes N}, \boldsymbol{H}, n, N)$: Determine the Vector of Hamiltonian Coefficients from $\rho_c$}\label{algorithm: FindCoeffPauli}
\textbf{Input:} 

$\rho_c^{\otimes N} $: Collection of $N$ pseudo-Choi states on $n$ qubits, each on Hilbert space $\mathcal{H}_S \otimes \mathcal{H}_A \otimes \mathcal{H}_C$

$\boldsymbol{H}$: Set of M k-local Hamiltonian terms $H_l$ whose coefficients $c_m$ are desired

\Comment{$\rho_c^{\otimes N} $ is a quantum state, whereas $\boldsymbol{H}$ is a classical set of operators}

\begin{enumerate}
    \item \textbf{Initialize the array of Hamiltonian coefficients}
    
    Let $\hat{\textbf{c}} \gets \begin{bmatrix} 0, 0, ..., 0 \end{bmatrix}_{1\times M}$ and let $\hat{\textbf{c}}_l$ denote the $l^{th}$ element of $\hat{\textbf{c}}$
    
    \item \textbf{Generate and store the classical shadow}
        
    Use copies of $\rho_c$ to generate $\hat{\rho}$, a classical shadow of size $N$ of ${\rm Tr}_A(\rho_c)$, by limiting the random Pauli measurement based classical shadows procedure to the space $\mathcal{H}_S \otimes \mathcal{H}_C$ to effectively perform the partial trace over $\mathcal{H}_A$.
    
    Store $\hat{\rho}$ in classical memory (i.e. store each vector $\ket{\hat{\rho}_i
    }$)

    \Comment{See Definition~\ref{Def: compressedShadowPauli} of a classical shadow}

    \item \textbf{Generate an estimate of ${\rm Tr}(\rho O_{l}) = \frac{c_l}{\alpha^2}$ for each corresponding Hamiltonian term}
    
    Let $\hat{\textbf{o}} \gets \begin{bmatrix} 0, 0, ..., 0 \end{bmatrix}_{1\times N}$ and let $\hat{\textbf{o}}_i$ denote the $i^{th}$ element of $\hat{\textbf{o}}$
    
    $\mathtt{For}$ $l$ $\mathtt{from}$ $0$ $\mathtt{to}$ $M-1$:
    
       \qquad $\mathtt{For}$ $i$ $\mathtt{from}$ $0$ to $N-1$:
        
        \qquad \qquad $\hat{o}_i \gets \frac{3}{2} \bra{b_i[n]} U_{(i,n)} X_C U_{(i,n)}^\dagger \ket{b_i[n]}\prod_{j=0}^{n-1} \Big(3 \bra{b_i[j]} U_{(i,j)} H_{(l,j)} U_{(i,j)}^\dagger \ket{b_i[j]} - {\rm Tr}(H_{(l,j)}) \Big)$
        
        \Comment{Equal to ${\rm Tr}\left(\hat{\rho_i} O_l\right)$ as per Equation \eqref{eq:EfficientTrace4}}

        \qquad $\hat{\textbf{c}}_l \gets \mathtt{MedianOfMeans}(\hat{\textbf{o}})$

    \item \textbf{Generate an estimate of ${\rm Tr}(\rho O_{\alpha}) = \frac{1}{\alpha^2}$}
    
    $\mathtt{For}$ $i$ $\mathtt{from}$ $0$ to $N-1$:
        
    \qquad $\hat{o}_i \gets 3 \abs{\bra{b_i[n]} U_{(i,n)}\ket{1}_C }^2 - 1$
    \Comment{Equal to ${\rm Tr}(\hat{\rho}_i O_\alpha)$ as per Equation \eqref{eq:EfficientTrace5}}

    $\hat{o}_{\alpha} \gets \mathtt{MedianOfMeans}(\hat{\textbf{o}})$
    
    \item \textbf{Remove the factor of $\frac{1}{\alpha^2}$ from the Hamiltonian coefficients}
    
    Let $\hat{\textbf{c}} \gets \frac{\hat{\textbf{c}}}{\hat{o}_{\alpha}}$ 
    
    \item \textbf{Output the vector of Hamiltonian coefficients}
    
    Return $\hat{\textbf{c}} = 
    \begin{bmatrix} \hat{c}_0, \hat{c}_1, ..., \hat{c}_{M-1} \end{bmatrix}$
\end{enumerate}
\end{algorithm}

Since all the operations involved in this process are single-qubit operations, computing the required inner products is more efficient than in the case of Clifford-based shadows. In particular, the number of computations required to do so is at most $\mathcal{O}(MN)$ where a sufficient value of $N$ is given in equation~\eqref{eq:NbdPauli}. Note that this value of $N$ includes a factor of $4^k$, where $k$ is the locality of the Hamiltonian. Meanwhile, the Clifford-based procedure requires $\mathcal{O}(MNn^3)$ computations to calculate the necessary inner products, where the sufficient value of $N$ is given in Theorem~\ref{thm:samplecomplexityUB} and does not contain the factor of $4^k$. Therefore, unless the value of $k$ is large, the Pauli-based procedure will have the better post-processing performance, at the expense of having a slightly higher sample complexity. Furthermore, as in the Clifford case, these operations can be done in parallel, greatly reducing the computation time.


\subsection{Sample Complexity Upper Bound}
The sample complexity of this method is very similar to that of the case where Clifford based classical shadows was used to learn the pseudo-Choi state. The key difference is that when the random Pauli measurement version of classical shadows is used, the upper bound on the shadow norm in Lemma~\ref{lem:shadownormUB} no longer applies. Instead, we have the following result:
\begin{lemma} [Shadow Norm Upper Bound (Pauli Shadows)\label{lem:shadownormUBPauli}]{
    Let the set \{$O_{\alpha}$\} $\cup$  $\{O_l \mid l \in \mathbb{Z}_{M}\}$ be denoted by $\textbf{O} \equiv \{O_i \mid i \in \mathbb{Z}_{2M}\}$. Note that these operators act non-trivially on at most $k+1$ qubits. If using the random Pauli measurement version of the classical shadows procedure to predict the expectation values of the operators in the set, the shadow norm of each operator is at most 
    \begin{align}
        \norm{O_{i}}^2_{shadow} \leq 3^{k+1}
    \end{align}
}
\end{lemma}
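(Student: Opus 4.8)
The plan is to deduce the bound from the known behaviour of the shadow norm under the random single-qubit Clifford (``Pauli'') measurement ensemble, rather than recomputing the maximization from scratch. The one ingredient I need is the fact, which falls out of the analysis of Huang et al.~\cite{Huang_2020}, that for this ensemble a Pauli string $P$ of weight $w$ (i.e.\ acting non-trivially on exactly $w$ qubits) satisfies $\|P\|^2_{shadow} = 3^{w}$, and more generally any observable supported on at most $j$ qubits satisfies $\|O\|^2_{shadow}\le 3^{j}\|O\|_\infty^2$. I would also record the elementary facts that the shadow norm is absolutely homogeneous, $\|cO\|_{shadow}=|c|\,\|O\|_{shadow}$ (immediate from linearity of $\mathcal{M}^{-1}$ and the quadratic-in-$O$ form of the defining expression), and that here $\mathcal{M}^{-1}$ and the maximization are taken over the $n+1$ qubits that survive the partial trace over $A$ in Algorithm~\ref{algorithm: FindCoeffPauli}, which is exactly the regime in which the cited bound applies.

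With that in place the argument splits into the two kinds of operators in $\textbf{O}$. For a decoding operator $O_l=(H_l\otimes X_C)/2$: since $H_l$ is a $k$-local Hermitian Pauli, $H_l\otimes X_C$ is a Pauli string of weight at most $k+1$, so $\|H_l\otimes X_C\|_{shadow}^2=3^{\mathrm{wt}(H_l)+1}\le 3^{k+1}$, and homogeneity gives $\|O_l\|_{shadow}^2=\tfrac14\|H_l\otimes X_C\|_{shadow}^2\le 3^{k+1}/4\le 3^{k+1}$. For $O_\alpha=\Id_S\otimes\ketbra{1}{1}_C$, which is not a Pauli string but acts non-trivially only on the single qubit $C$: I would either invoke the general-$j$ form of the bound with $j=1$ and $\|O_\alpha\|_\infty=1$, giving $\|O_\alpha\|_{shadow}^2\le 3\le 3^{k+1}$, or, to stay self-contained, write $\ketbra{1}{1}_C=(\Id-Z_C)/2$ so that $\mathcal{M}^{-1}(O_\alpha)=\Id_S\otimes(\Id-3Z_C)/2$ and evaluate the single-qubit maximization directly; this returns $\|O_\alpha\|^2_{shadow}=3/2$, again well below $3^{k+1}$ for every $k\ge0$. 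Taking the maximum over all operators in $\textbf{O}$ then yields $\max_i\|O_i\|_{shadow}^2\le 3^{k+1}$, as claimed.

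The mathematics here is light; the part to be careful about is the bookkeeping. One must use locality $k+1$ rather than $k$ (the extra qubit $C$ matters), keep track that $\|O_l\|_\infty=1/2$ --- so the stated $3^{k+1}$ is in fact loose by a factor of $4$ and could be sharpened --- and make sure the non-Pauli operator $O_\alpha$ is handled by the general bound or by the short explicit computation rather than by the Pauli-string identity. The other subtlety worth flagging is that the classical shadow used in this appendix is of $\mathrm{Tr}_A(\rho_c)$, a state on $n+1$ qubits, so throughout the shadow norm is with respect to the $(n+1)$-qubit random-Pauli ensemble, matching the hypotheses of the bound invoked above.
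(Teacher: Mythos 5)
Your proposal is correct and follows essentially the same route as the paper, which simply invokes the Huang et al.\ bound $\norm{O}^2_{shadow}\le 3^{j}$ for observables supported on $j$ qubits and applies it with $j=k+1$. You are in fact somewhat more careful than the paper's one-line proof: you explicitly handle the non-Pauli operator $O_\alpha$ via the general local-observable form of the bound (or a direct single-qubit computation), and you note that the factor $\|O_l\|_\infty=1/2$ makes the stated bound loose by a factor of $4$ --- both points the paper glosses over.
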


\begin{proof}[Proof of Lemma~\ref{lem:shadownormUBPauli}]
    This results directly from \cite{Huang_2020}, where Huang et al. showed that for classical shadows based on random Pauli measurements
    \begin{align}
        \norm{O}^2_{shadow} \leq 3^{k}
    \end{align}
    where $O$ is a single $k$-local Pauli operator.
    
\end{proof}

By the same argument as in the case of Clifford based classical shadows, this upper bound on the shadow norm leads to a sample complexity of
\begin{align}\label{eq:NbdPauli}
    N \in \mathcal{O}\left( \frac{3^k \alpha^4 (c_{\rm max} + 1) M\log\left(M/\delta\right) }{\epsilon^2}  \right)
\end{align}

\section{Proofs and technical results}
\subsection{Proof of Equation \eqref{eq:alpha}}\label{section:randomsection2}
\begin{proof}
    \begin{align}
        \bra{\Phi_d}_{SA}(H^2\otimes \Id_A) \ket{\Phi_d}_{SA}
        & = \frac{\sum_{i=0}^{d-1}\bra{i}_S \bra{i}_A (H^2\otimes \Id_A) \sum_{j=0}^{d-1}\ket{j}_S \ket{j}_A}{d} \nonumber \\
        & = \frac{\sum_{i=0}^{d-1}\sum_{j=0}^{d-1}\bra{i}_S H^2 \ket{j}_S \bra{i}_A\ket{j}_A}{d} \nonumber \\
        & = \frac{\sum_{i=0}^{d-1} \bra{i}_S H^2\ket{i}_S}{d} \nonumber \\
        & = \frac{\text{Tr}\left(H^2\right)}{d} \nonumber \\
        & = \frac{1}{d}\text{Tr}\left(\sum_{l=0}^{M-1}c_l  H_l \sum_{m=0}^{M-1} c_m H_m \right) \nonumber \\
        & = \frac{1}{d} \sum_{l=0}^{M-1}\sum_{m=0}^{M-1}c_l c_m\text{Tr}( H_l H_m) \nonumber \\
        & =  \sum_{m=0}^{M-1}c_m^2 \nonumber \\
        &= \norm{\textbf{c}}_2^2
    \end{align}
    
    Therefore
    \begin{align}
        \alpha & := \sqrt{\bra{\Phi_d}_{SA}(H^2\otimes \Id_A) \ket{\Phi_d}_{SA} + 1} \nonumber \\
        & = \sqrt{\norm{\textbf{c}}_2^2 + 1}
    \end{align}
\end{proof}

\subsection{Proof of Proposition~\ref{Prop: EfficientTrace}}\label{section:EfficientTrace}

Proposition~\ref{Prop: EfficientTrace} contains equations showing how to explicitly compute three expectation values of interest using a classical shadow. This section outlines the derivation of each of these three expressions. 

Recall that classical shadows generated by the random Clifford measurement version of the classical shadows procedure are of the form $\hat{\rho}_i = \left( 2^{2n+1} + 1\right) U^\dagger_i \ket{b_i}\bra{b_i} U_i - \Id$, and recall the definitions of the following operators:
\begin{align*}
    O_l^{+} & = O_l + \left(O_l\right)^\dagger\\
    O_l^{-} & = iO_l - i\left(O_l\right)^\dagger
\end{align*}
where $O_l = \left(H_l \otimes \Id\right) \ket{\Phi_d}_{SA} \bra{\Phi_d}_{SA} \otimes \ket{0}_C\bra{1}_C$ is a decoding operator as in Definition~\ref{Def: CliffDecodingOperators}.

Also, note that $U_i^\dagger \ket{b_i}$ is on the Hilbert space $\mathcal{H}_S \otimes\mathcal{H}_A \otimes\mathcal{H}_C$, and $H_l$ acts on the Hilbert space $\mathcal{H}_S$.

\begin{proof}[Proof of equation \eqref{eq:EfficientTrace1}]

\begin{align}
    \begin{split}
        {\rm Tr}\left(\hat{\rho_i} O_l^+\right)=& {\rm Tr}\Big(\left[ \left( 2^{2n+1} + 1 \right)U_i^\dagger \ket{b_i}\bra{b_i} U_i - \Id \right]   \Big[ \left(H_l \otimes \Id_A \right) \ket{\Phi_d}_{SA} \bra{\Phi_d}_{SA} \otimes \ket{0}_C\bra{1}_C  \\
    	&+ \ket{\Phi_d}_{SA} \bra{\Phi_d}_{SA}\left(H_l \otimes \Id_A \right)  \otimes \ket{1}_C\bra{0}_C \Big] \Big)\\
    	=&  \left( 2^{2n+1} + 1 \right){\rm Tr}\left( \left[U_i^\dagger \ket{b_i}\bra{b_i} U_i \right] \Big[\left(H_l \otimes \Id_A \right) \ket{\Phi_d}_{SA} \bra{\Phi_d}_{SA} \otimes \ket{0}_C\bra{1}_C \Big]    \right) \\
    	&- {\rm Tr}\Big(\left(H_l \otimes \Id_A \right) \ket{\Phi_d}_{SA} \bra{\Phi_d}_{SA} \otimes \ket{0}_C\bra{1}_C\Big)\\
    	&+  \left( 2^{2n+1} + 1 \right){\rm Tr}\left( \left[U_i^\dagger \ket{b_i}\bra{b_i} U_i\right] \Big[\ket{\Phi_d}_{SA} \bra{\Phi_d}_{SA}\left(H_l \otimes \Id_A \right)  \otimes \ket{1}_C\bra{0}_C \Big]   \right) \\
    	&- {\rm Tr}\Big(\ket{\Phi_d}_{SA} \bra{\Phi_d}_{SA}\left(H_l \otimes \Id_A \right)  \otimes \ket{1}_C\bra{0}_C\Big)\\
    	=&  \left( 2^{2n+1} + 1 \right){\rm Tr}\left( \left[U_i^\dagger \ket{b_i}\bra{b_i} U_i\right] \Big[ \left(H_l \otimes \Id_A \right) \ket{\Phi_d}_{SA} \bra{\Phi_d}_{SA} \otimes \ket{0}_C\bra{1}_C  \Big]  \right) \\
    	&- {\rm Tr}\Big(\left(H_l \otimes \Id_A \right) \ket{\Phi_d}_{SA} \bra{\Phi_d}_{SA} \Big) {\rm Tr}\left( \ket{0}_C\bra{1}_C\right)\\
    	&+  \left( 2^{2n+1} + 1 \right){\rm Tr}\left( \left[U_i^\dagger \ket{b_i}\bra{b_i} U_i\right] \Big[ \ket{\Phi_d}_{SA} \bra{\Phi_d}_{SA}\left(H_l \otimes \Id_A \right) \otimes \ket{1}_C\bra{0}_C \Big]   \right) \\
    	&- {\rm Tr}\Big(\ket{\Phi_d}_{SA} \bra{\Phi_d}_{SA}\left(H_l \otimes \Id_A \right)\Big){\rm Tr}\Big( \ket{1}_C\bra{0}_C\Big)\\
    	=&  \left( 2^{2n+1} + 1 \right){\rm Tr}\left( \left[U_i^\dagger \ket{b_i}\bra{b_i} U_i \right] \Big[ \left(H_l \otimes \Id_A \right) \ket{\Phi_d}_{SA} \bra{\Phi_d}_{SA} \otimes \ket{0}_C\bra{1}_C  \Big]  \right) \\
    	&+  \left( 2^{2n+1} + 1 \right){\rm Tr}\left( \left[U_i^\dagger \ket{b_i}\bra{b_i} U_i \right] \Big[ \ket{\Phi_d}_{SA} \bra{\Phi_d}_{SA}\left(H_l \otimes \Id_A \right) \otimes \ket{1}_C\bra{0}_C  \Big]  \right) \\
    	=&  \left( 2^{2n+1} + 1 \right) \bra{b_i} U_i \Big( \left(H_l \otimes \Id_A \right) \ket{\Phi_d}_{SA} \bra{\Phi_d}_{SA} \otimes \ket{0}_C\bra{1}_C    \Big)U_i^\dagger \ket{b_i} \\
    	&+  \left( 2^{2n+1} + 1 \right)\bra{b_i} U_i\Big(  \ket{\Phi_d}_{SA} \bra{\Phi_d}_{SA}\left(H_l \otimes \Id_A \right)  \otimes \ket{1}_C\bra{0}_C    \Big)U_i^\dagger \ket{b_i} \\
    	=&  \left( 2^{2n+1} + 1 \right) \Big(\bra{b_i} U_i  \left(H_l \otimes \Id_A \right) \ket{\Phi_d}_{SA}\otimes \ket{0}_C\Big) \Big( \bra{\Phi_d}_{SA} \otimes \bra{1}_C  U_i^\dagger \ket{b_i}\Big) \\
    	&+  \left( 2^{2n+1} + 1 \right)\Big( \bra{b_i} U_i  \ket{\Phi_d}_{SA}\otimes \ket{1}_C\Big) \Big( \bra{\Phi_d}_{SA}\left(H_l \otimes \Id_A \right)  \otimes \bra{0}_C  U_i^\dagger \ket{b_i}\Big)
    \end{split}
\end{align}

\end{proof}

\begin{proof}[Proof of equation \eqref{eq:EfficientTrace2}]
\begin{align}
	\begin{split}
		{\rm Tr}\left(\hat{\rho_i} O_l^-\right)=& {\rm Tr}\Big(\left[ \left( 2^{2n+1} + 1 \right)U_i^\dagger \ket{b_i}\bra{b_i} U_i - \Id \right]   \Big[ i \left(H_l \otimes \Id_A \right) \ket{\Phi_d}_{SA} \bra{\Phi_d}_{SA} \otimes \ket{0}_C\bra{1}_C \\
		& -i \ket{\Phi_d}_{SA} \bra{\Phi_d}_{SA}\left(H_l \otimes \Id_A \right)  \otimes \ket{1}_C\bra{0}_C \Big] \Big)\\
		=&  i\left( 2^{2n+1} + 1 \right){\rm Tr}\left( \Big[ U_i^\dagger \ket{b_i}\bra{b_i} U_i\Big] \Big[ \left(H_l \otimes \Id_A \right) \ket{\Phi_d}_{SA} \bra{\Phi_d}_{SA} \otimes \ket{0}_C\bra{1}_C  \Big]   \right) \\
		&- i{\rm Tr}\left(\left(H_l \otimes \Id_A \right) \ket{\Phi_d}_{SA} \bra{\Phi_d}_{SA} \otimes \ket{0}_C\bra{1}_C\right)\\
		&-i  \left( 2^{2n+1} + 1 \right){\rm Tr}\left( \Big[ U_i^\dagger \ket{b_i}\bra{b_i} U_i \Big] \Big[ \ket{\Phi_d}_{SA} \bra{\Phi_d}_{SA}\left(H_l \otimes \Id_A \right)  \otimes \ket{1}_C\bra{0}_C  \Big]  \right) \\
		&+i {\rm Tr}\left(\ket{\Phi_d}_{SA} \bra{\Phi_d}_{SA}\left(H_l \otimes \Id_A \right)  \otimes \ket{1}_C\bra{0}_C\right)\\
		=&  i\left( 2^{2n+1} + 1 \right) \bra{b_i} U_i \Big( \left(H_l \otimes \Id_A \right) \ket{\Phi_d}_{SA} \bra{\Phi_d}_{SA} \otimes \ket{0}_C\bra{1}_C    \Big)U_i^\dagger \ket{b_i} \\
		&-i  \left( 2^{2n+1} + 1 \right)\bra{b_i} U_i\Big(  \ket{\Phi_d}_{SA} \bra{\Phi_d}_{SA}\left(H_l \otimes \Id_A \right)  \otimes \ket{1}_C\bra{0}_C    \Big)U_i^\dagger \ket{b_i} \\
		=&  i\left( 2^{2n+1} + 1 \right) \Big(\bra{b_i} U_i  \left(H_l \otimes \Id_A \right) \ket{\Phi_d}_{SA}\otimes \ket{0}_C\Big) \Big( \bra{\Phi_d}_{SA} \otimes \bra{1}_C    U_i^\dagger \ket{b_i}\Big) \\
		&-i  \left( 2^{2n+1} + 1 \right)\Big( \bra{b_i} U_i  \ket{\Phi_d}_{SA}\otimes \ket{1}_C\Big) \Big( \bra{\Phi_d}_{SA}\left(H_l \otimes \Id_A \right)  \otimes \bra{0}_C    U_i^\dagger \ket{b_i}\Big)
    \end{split}
\end{align}
\end{proof}

\begin{proof}[Proof of equation \eqref{eq:EfficientTrace3}]
Recall from Definition~\ref{Def: CliffDecodingOperators} that $O_{\alpha} = \ket{\Phi_d}_{SA} \bra{\Phi_d}_{SA} \otimes \ket{1}_C\bra{1}_C$ is the decoding operator corresponding to the normalization constant $\alpha$. From this it follows that
\begin{align}
    \begin{split}
        {\rm Tr}( \hat{\rho}_i O_\alpha) =&   {\rm Tr}\Big( \left[\left( 2^{2n+1} + 1\right) U^\dagger_i \ket{b_i}\bra{b_i} U_i - \Id\right] \Big[ \ket{\Phi_d}_{SA} \bra{\Phi_d}_{SA} \otimes \ket{1}_C\bra{1}_C \Big]   \Big)\\
        =& \left( 2^{2n+1} + 1\right) {\rm Tr}\Big( \left[ U^\dagger_i \ket{b_i}\bra{b_i} U_i\right] \Big[ \ket{\Phi_d}_{SA} \bra{\Phi_d}_{SA} \otimes \ket{1}_C\bra{1}_C \Big]   \Big)\\
        &- {\rm Tr}\Big(\ket{\Phi_d}_{SA} \bra{\Phi_d}_{SA} \otimes \ket{1}_C\bra{1}_C \Big)\\
        =& \left( 2^{2n+1} + 1\right)  \bra{b_i} U_i \Big(\ket{\Phi_d}_{SA} \bra{\Phi_d}_{SA} \otimes \ket{1}_C\bra{1}_C \Big) U^\dagger_i \ket{b_i}  - 1\\
        =& \left( 2^{2n + 1} + 1 \right) \abs{\bra{b_i}U_i \ket{\phi_d}_{SA}\ket{1}_C}^2 - 1 
    \end{split}
\end{align}
\end{proof}


\subsection{Proof of Lemma~\ref{lem:shadownormUB}}\label{App:shadownormUB}
\begin{proof}[Proof of Lemma~\ref{lem:shadownormUB}]
Recall the definitions of the Hermitian operators in Lemma~\ref{lem:shadownormUB}:
\begin{align}
    O_{\alpha} &= \ket{\Phi_d}_{SA} \bra{\Phi_d}_{SA} \otimes \ket{1}\bra{1},
\end{align}
and
\begin{align}
    O_l^{+} &= O_l + \left(O_l\right)^\dagger\\
    O_l^{-} &= iO_l - i\left(O_l\right)^\dagger
\end{align}

where
\begin{align}
    O_l = \left(H_l \otimes \Id\right) \ket{\Phi_d}_{SA} \bra{\Phi_d}_{SA} \otimes \ket{0}\bra{1}  
\end{align}
is defined for all $0 \leq l \leq M-1$.

First consider the square of the Hilbert-Schmidt norm of $O_l^{+}$:
\begin{align}
    \text{Tr}\left(\left(O_l^{+}\right)\left(O_l^{+}\right)^\dagger\right) & =\text{Tr}\left(\left(O_l^{+}\right)^2\right) \nonumber \\
    & = \text{Tr}\left(O_l^2\right) + \text{Tr}\left(O_l O_l^\dagger \right) + \text{Tr}\left(O_l^\dagger O_l \right) + \text{Tr}\left(\left(O_l^\dagger \right)^2\right) \nonumber \\
    & = \text{Tr}\left(O_l^2\right) + 2\text{Tr}\left(O_l O_l^\dagger \right) + \text{Tr}\left(\left(O_l^\dagger \right)^2\right)
\end{align}

Examining each term above individually, we see that:
\begin{align}
    \text{Tr}\left(O_l^2\right) & = \text{Tr}\left(\left( \left(H_l \otimes \Id\right) \ket{\Phi_d}_{SA} \bra{\Phi_d}_{SA} \otimes \ket{0}\bra{1}\right)^2 \right) \nonumber \\
    & = \text{Tr}\Big(\left( \left(H_l \otimes \Id\right) \ket{\Phi_d}_{SA} \bra{\Phi_d}_{SA} \otimes \ket{0}\bra{1}\right)\left( \left(H_l \otimes \Id\right) \ket{\Phi_d}_{SA} \bra{\Phi_d}_{SA} \otimes \ket{0}\bra{1}\right) \Big) \nonumber \\
    & = \text{Tr}\Big(\bra{\Phi_d}_{SA}\left( H_l \otimes \Id\right) \ket{\Phi_d}_{SA} \bra{\Phi_d}_{SA}  \left(H_l \otimes \Id\right) \ket{\Phi_d}_{SA}   \Big)\text{Tr}\Big(  \ket{0}\bra{1}\ket{0}\bra{1}\Big) \nonumber \\
    & = 0
\end{align}

Similarly, 
\begin{align}
    \text{Tr}\left(\left(O_l^\dagger \right)^2\right) &=0
\end{align}

\begin{align}
    \text{Tr}\left(O_l O_l^\dagger \right) & = \text{Tr}\left(\Big( \left(H_l \otimes \Id\right) \ket{\Phi_d}_{SA} \bra{\Phi_d}_{SA} \otimes \ket{0}\bra{1}\Big) \Big( \left(H_l \otimes \Id\right) \ket{\Phi_d}_{SA} \bra{\Phi_d}_{SA} \otimes \ket{0}\bra{1}\Big)^\dagger \right) \nonumber \\
    & = \text{Tr}\left(\Big( \left(H_l \otimes \Id\right) \ket{\Phi_d}_{SA} \bra{\Phi_d}_{SA} \otimes \ket{0}\bra{1}\Big) \Big(  \ket{\Phi_d}_{SA} \bra{\Phi_d}_{SA} \left(H_l^\dagger \otimes \Id\right) \otimes \ket{1}\bra{0}\Big) \right) \nonumber \\
    & = \text{Tr}\left( \left(H_l \otimes \Id\right) \ket{\Phi_d}_{SA}  \bra{\Phi_d}_{SA} \left(H_l^\dagger \otimes \Id\right)  \right)\text{Tr}\left( \ket{0}\bra{1}\ket{1}\bra{0} \right) \nonumber \\
    &= \bra{\Phi_d}_{SA} \left(H_l^\dagger \otimes \Id\right)\left(H_l \otimes \Id\right) \ket{\Phi_d}_{SA} \nonumber  \\
    &= \bra{\Phi_d}_{SA} \ket{\Phi_d}_{SA} \nonumber  \\
    &= 1
\end{align}
where we have used the fact that $H_l$ is part of an orthonormal basis, so $H_l^\dagger H_l = \Id$.

From this we arrive at the following upper bound:
\begin{align}
    \text{Tr}\left(\left(O_l^{+}\right)^2\right) &= 2
\end{align}
Likewise:
\begin{align}
    \text{Tr}\left(\left(O_l^{-}\right)^2\right) &= 2
\end{align}
Finally, it is obvious from the definition of $O_\alpha$ that 
\begin{align}
    \text{Tr}\left(O_{\alpha}^2\right) &= 1
\end{align}

Section 5B in \cite{Huang_2020} proves that $\norm{O_{i}}^2_{shadow} \leq 3\text{Tr}\left(\left(O_{i}\right)^2\right)$ for a Hermitian operator $O_i$. 

Since Lemma~\ref{lem:shadownormUB} defines $\textbf{O} \equiv \{O_i \mid i \in \mathbb{Z}_{2M}\}$ as the set \{$O_{\alpha}$\} $\cup$  $\{O_l^{+}$, $O_l^{-} \mid l \in \mathbb{Z}_{M}\}$, we see that 
\begin{align}
    \text{Tr}\left(\left(O_{i}\right)^2\right) & \leq 2,
\end{align}
so
\begin{align}
    \norm{O_{i}}^2_{shadow} \leq 6,
\end{align}
completing the proof of Lemma~\ref{lem:shadownormUB}.
\end{proof}


\subsection{Choosing a Sufficiently Small Classical Shadows Error}\label{App:esUB}

To keep the total error of the Hamiltonian learning problem to at most $\epsilon$, the error of the classical shadows procedure, $\epsilon_s$, cannot be too large. However, decreasing $\epsilon_s$ requires increasing the number of measurements $N$. Therefore, we seek the largest value of $\epsilon_s$ such that the total error is at most $\epsilon$. This value is given by Proposition~\ref{prop:errbd}, which is proven below.

\begin{proof}[Proof of Proposition~\ref{prop:errbd}]
    Using classical shadows in Algorithm~\ref{algorithm: FindCoeffClifford}, we compute $\hat{o}_i$ and $\hat{o}_\alpha$, which, with a probability of at least $\delta_s$, are estimates of $\frac{1}{\alpha^2}$ and $\frac{c_i}{\alpha^2}$ to within an error of $\epsilon_s$ . This means that upon success, the uncertainty in $\alpha^2$, which we get by taking the reciprocal of our estimate of $\frac{1}{\alpha^2}$, is at most
    \begin{align}
        \epsilon_{\alpha} = \alpha^4 \epsilon_s
    \end{align}
    
    The next step in Algorithm~\ref{algorithm: FindCoeffClifford} is to divide $\hat{o}_i$ by $\hat{o}_\alpha$ to get $\hat{c}_i$, our estimate of $c_i$. The uncertainty in $c_i$ is therefore at most
    \begin{align}
        \epsilon_i &= c_i \sqrt{\left(\frac{\epsilon_{\alpha}}{\alpha^2}\right)^2 + \left(\frac{\epsilon_s}{\frac{c_i}{\alpha^2}}\right)^2} \nonumber \\
        &= \epsilon_s \alpha^2 \sqrt{c_i^2 +1}
    \end{align}
    
    The total $l_2$ error in the vector of Hamiltonian coefficients is then
    \begin{align}
        \norm{\hat{\textbf{c}} - \textbf{c}}_2 & \leq \sqrt{M} \max_i \epsilon_i \nonumber \\
        & = \sqrt{M} \epsilon_s \alpha^2  \sqrt{\max_i c_i^2 +1}
    \end{align}
    For this total error to be at most $\epsilon$, as required by the Hamiltonian learning problem, it serves to choose 
    \begin{align}
        \epsilon_s & = \frac{\epsilon}{\alpha^2 \sqrt{c^2_{\rm max} +1} \sqrt{M}} \label{eq:esUB}
    \end{align}
\end{proof}


\subsection{Proof of Lemma~\ref{lemma: ProduceChoiState}}\label{App:ProduceChoiState}

Here we provide a formal proof of Lemma~\ref{lemma: ProduceChoiState} which provides our key result surrounding the cost of preparing the pseudo-Choi state that represents the Hamiltonian.
\begin{proof}[Proof of Lemma~\ref{lemma: ProduceChoiState}]

Begin by preparing the state $\ket{\psi_0} = \ket{0}_C \ket{0}_B \ket{\Phi_d}_{SA}$, and use it as the input state to the circuit in Figure~\ref{figure: ProduceChoiState}. 
Applying the Hadamard gate to register $C$ gives
\begin{align}
    \ket{\psi_1} &= \frac{1}{\sqrt{2}}\Big(  \ket{0}_C \ket{0}_B \ket{\Phi_d}_{SA} + \ket{1}_C \ket{0}_B \ket{\Phi_d}_{SA} \Big)
\end{align}

After the controlled application of $U_{block}$, which acts on registers $B$ and $S$, we have the state 
\begin{align}
    \ket{\psi_2} &= \frac{1}{\sqrt{2}}\Big(  \ket{0}_C (U_{block}\otimes \Id_A) \ket{0}_B \ket{\Phi_d}_{SA} + \ket{1}_C \ket{0}_B \ket{\Phi_d}_{SA} \Big)
\end{align}

Recall that the block-encoding of the Hamiltonian is of the form
\begin{align}
    U_{block} & = \ket{0}_B \bra{0}_B \otimes \frac{2\widetilde{H}t}{\pi} + \ket{0}_B \bra{1}_B \otimes I_S + \ket{1}_B \bra{0}_B \otimes J_S + \ket{1}_B \bra{1}_B \otimes K_S
\end{align}
where $I_S$, $J_S$, and $K_S$ are ``junk'' operators we do not care about. Thus we can write the previous state as
\begin{align}
    \ket{\psi_2} &= \frac{1}{\sqrt{2}}\Bigg(  \ket{0}_C   \ket{0}_B  \left(\frac{\widetilde{H}}{\Delta} \otimes \Id_A\right) \ket{\Phi_d}_{SA}  + \ket{0}_C   \ket{1}_B  ( J_S \otimes \Id_A) \ket{\Phi_d}_{SA}  + \ket{1}_C \ket{0}_B \ket{\Phi_d}_{SA} \Bigg)
\end{align}

The probability of obtaining the outcome $\ket{0}_B$ upon measuring the block-encoding qubit is given by:
\begin{align}
    \mathbf{Pr}(0) &= \bra{\psi_2} (\Id_C \otimes \ket{0}_B\bra{0}_B \otimes \Id_{SA})\ket{\psi_2} \nonumber \\
    & = \bra{\psi_2}  \left(\frac{  \ket{0}_C   \ket{0}_B  \left(\frac{\widetilde{H}}{\Delta} \otimes \Id_A\right) \ket{\Phi_d}_{SA}   + \ket{1}_C \ket{0}_B \ket{\Phi_d}_{SA}}{\sqrt{2}} \right) \nonumber  \\
    & =   \frac{1}{2}\left( \bra{\Phi_d}_{SA} \left(\frac{\widetilde{H}^\dagger \widetilde{H}}{\Delta^2} \otimes \Id_A\right)  \ket{\Phi_d}_{SA}   + 1 \right) \nonumber \\
    & = \frac{\norm{\widetilde{\textbf{c}}}_2^2}{2 \Delta^2} + \frac{1}{2},
\end{align}
where in the last line we inserted the representation of the Hamiltonian $\widetilde{H}$ from Definition~\ref{def:Hamdefs} (the missing steps are nearly identical to the proof of Equation \eqref{eq:alpha} in Appendix~\ref{section:randomsection2}). 

If the outcome of the measurement of register B is $\ket{0}_B$, the resulting state is
\begin{align}
    \ket{\psi_3} &= \frac{\left(\Id_C \otimes \ket{0}_B \bra{0}_B \otimes \Id_{SA}\right)\ket{\psi_2}}{\sqrt{\mathbf{Pr}(0)}} \nonumber  \\
    & = \frac{\ket{0}_C \ket{0}_B (\frac{\widetilde{H}}{\Delta} \otimes \Id_A) \ket{\Phi_d}_{SA} + \ket{1}_C \ket{0}_B \ket{\Phi_d}_{SA}}{ \sqrt{2}\sqrt{\frac{\norm{\widetilde{\textbf{c}}}_2^2}{2\Delta^2} + \frac{1}{2}}} 
\end{align}
where $\Delta = \frac{\pi}{2t}$.

Ignoring the extra qubit in register $B$, this is a pseudo-Choi state of $\frac{\widetilde{H}}{\Delta}$:

\begin{align}
    \ket{\psi_c'} &= \left(\frac{\ket{0}_C(\frac{\widetilde{H}}{\Delta} \otimes \Id_A) \ket{\Phi_d}_{SA} + \ket{1}_C\ket{\Phi_d}_{SA}}{\gamma}\right) ,
\end{align}
where
\begin{align}
    \gamma &= \sqrt{\frac{\norm{\widetilde{\textbf{c}}}_2^2}{\Delta^2} + 1}.
\end{align}

\end{proof}

To check that the success probability of producing the pseudo-Choi state (i.e. measuring the block-encoding qubit and getting the outcome $\ket{0}_B$) is a valid probability, we would like to be sure that the term $\bra{\Phi_d}_{SA} \left(\frac{\widetilde{H}^\dagger \widetilde{H}}{\Delta^2} \otimes \Id_A\right)  \ket{\Phi_d}_{SA}$ is in $[1/2,1] $. We start by relating $\norm{\widetilde{\textbf{c}}}_2^2$ to the spectral norm of the Hamiltonian:
\begin{align}
    \norm{\widetilde{\textbf{c}}}_2^2 & = \bra{\Phi_d}_{SA} (\widetilde{H}^\dagger \widetilde{H} \otimes \Id_A)  \ket{\Phi_d}_{SA} \nonumber \\
    &= \frac{1}{d}\sum_{i=1}^d \sum_{j=1}^d \bra{i}_S \bra{i}_A (\widetilde{H}^\dagger \widetilde{H} \otimes \Id_A) \ket{j}_S \ket{j}_A \nonumber  \\
    & = \frac{1}{d}\sum_{i=1}^d \sum_{j=1}^d \bra{i}_S \widetilde{H}^\dagger \widetilde{H} \ket{j}_S \bra{i}_A \ket{j}_A \nonumber  \\
    & = \frac{1}{d}\sum_{i=1}^d  \bra{i}_S \widetilde{H}^\dagger \widetilde{H} \ket{i}_S  \nonumber  \\
    & = \frac{1}{d}{\rm Tr} \left(\widetilde{H}^\dagger \widetilde{H} \right) \nonumber   \\
    & = \frac{1}{d}\norm{ \widetilde{H} }^2_F  \nonumber  \\
    & \geq \frac{\norm{\widetilde{H}}^2_2}{d}.
\end{align}
 Using this, we see that
\begin{align}
    \mathbf{Pr}(0) & \geq \frac{1}{2}\left( \frac{\norm{\widetilde{H}}^2_2}{d\Delta^2}   + 1 \right) \nonumber \\
    & =  \frac{2\norm{\widetilde{H}t}^2_2}{d\pi^2}   + \frac{1}{2} \ge \frac{1}{2}.
\end{align}
where in the second line we used $\Delta = \frac{\pi}{2t}$.  Similarly, using the fact that $\norm{\tilde{H}}_F \le \sqrt{r} \norm{\tilde{H}}_2$ we have that
\begin{align}
    \mathbf{Pr}(0) & \leq \frac{1}{2}\left( \frac{r\norm{\widetilde{H}}^2_2}{d\Delta^2}   + 1 \right) \nonumber \\
    & =  \frac{2r\norm{\widetilde{H}t}^2_2}{d\pi^2}   + \frac{1}{2}.
\end{align}

From Lemma~\ref{lemma:ProduceBlockEncoding} we inherit the following upper bound on the evolution time:
\begin{align}
    t & \leq \frac{1}{2\norm{H}},
\end{align}
which means that if the block-encoded Hamiltonian had no error (i.e. $\widetilde{H} = H$) the success probability would certainly be less than 1. However, we must take into account the error from the block-encoding procedure. Thankfully, the result of Lemma~\ref{lemma:ProduceBlockEncoding} guarantees that 
\begin{align}
    \norm{Ht - \widetilde{H}t} &\leq \epsilon_b,
\end{align}
for $\epsilon_b \in (0, \frac{1}{2})$.
Therefore, the success probability remains valid as it can never be more than 1 or less than $\frac{1}{2}$.




\subsection{Proof of Equation~\eqref{eq: chernoff}}\label{App: Hoeffding}

Since we have an upper bound on $N_s$, the number of pseudo-Choi states required to solve the Block-Encoded Hamiltonian Learning Problem~\ref{def:BELearningProblem}, we would now like to know how many queries to $U_{block}$ will suffice to generate $N_s$ pseudo-Choi states with high probability. An upper bound on this value is given by equation~\eqref{eq: chernoff}, which we now prove.

\begin{proof}[Proof of equation~\eqref{eq: chernoff}]

Consider the following Chernoff bound, from~\cite{Alon2000}:
\begin{lemma}\label{lemma: Chernoff}
    Let $X_1, ..., X_n \in [0,1]$ be independent random variables with $\mathbb{E}[X_i] = p_i$, let $X = \sum_{i=1}^n X_i$, and let $\mu = \mathbb{E}[X] = \sum_{i=1}^N X_i p_i$. For $\beta \in (0,1)$ we have
    \begin{align}
        \mathbf{Pr}(X \leq (1 - \beta)\mu) & \leq e^{- \frac{\beta^2 \mu}{2}}
    \end{align}
\end{lemma}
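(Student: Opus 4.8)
The plan is to prove this standard multiplicative Chernoff lower-tail bound by the exponential moment (Cram\'er--Chernoff) method. First I would fix an arbitrary $t>0$ and rewrite the lower-tail event in terms of a nonnegative random variable: since $X \le (1-\beta)\mu$ is equivalent to $e^{-tX} \ge e^{-t(1-\beta)\mu}$, Markov's inequality gives $\mathbf{Pr}(X \le (1-\beta)\mu) \le e^{t(1-\beta)\mu}\,\mathbb{E}[e^{-tX}]$. By independence of the $X_i$ the moment generating function factorizes as $\mathbb{E}[e^{-tX}] = \prod_{i=1}^n \mathbb{E}[e^{-tX_i}]$, so the problem reduces to controlling each single-variable factor (here $\mu=\sum_i p_i$).

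The key single-variable estimate uses that $x \mapsto e^{-tx}$ is convex, so on $[0,1]$ it lies below its chord: $e^{-tx} \le 1 - x(1-e^{-t})$. Taking expectations, using $\mathbb{E}[X_i]=p_i$ together with the elementary bound $1+y \le e^{y}$, yields $\mathbb{E}[e^{-tX_i}] \le 1 - p_i(1-e^{-t}) \le \exp(p_i(e^{-t}-1))$. Multiplying over $i$ and summing the $p_i$ to $\mu$ gives $\mathbb{E}[e^{-tX}] \le \exp((e^{-t}-1)\mu)$, hence $\mathbf{Pr}(X \le (1-\beta)\mu) \le \exp(\mu[t(1-\beta) + e^{-t} - 1])$.

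Next I would optimize the exponent over $t>0$. Differentiating $t(1-\beta)+e^{-t}-1$ in $t$ and setting the derivative to zero gives $e^{-t}=1-\beta$, that is $t = -\ln(1-\beta)$, which is positive precisely because $\beta \in (0,1)$. Substituting this choice back collapses the estimate to the sharp form $\mathbf{Pr}(X \le (1-\beta)\mu) \le \exp(-\mu[(1-\beta)\ln(1-\beta)+\beta])$.

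Finally, to reach the clean bound stated in the lemma it remains to establish the elementary inequality $(1-\beta)\ln(1-\beta)+\beta \ge \beta^2/2$ on $[0,1)$; I expect this last analytic step to be the main (though modest) obstacle, since everything preceding it is mechanical. I would prove it by setting $g(\beta) := (1-\beta)\ln(1-\beta)+\beta-\beta^2/2$, observing $g(0)=0$ and $g'(\beta) = -\ln(1-\beta)-\beta$ so that $g'(0)=0$, and computing $g''(\beta) = \beta/(1-\beta) \ge 0$ for $\beta \in [0,1)$. Convexity of $g$ then forces $g' \ge 0$ and hence $g \ge 0$ on $[0,1)$, which combined with the sharp bound above gives $\mathbf{Pr}(X \le (1-\beta)\mu) \le e^{-\beta^2\mu/2}$, as claimed.
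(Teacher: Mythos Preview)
Your proof is correct and follows the standard Cram\'er--Chernoff exponential-moment argument; every step checks out, including the optimization and the final analytic inequality via convexity of $g$. The paper itself does not prove this lemma at all---it simply quotes it as a known result from the reference \cite{Alon2000} (Alon and Spencer)---so you have supplied a complete self-contained proof where the paper only gives a citation.
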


Now, let $X$ be the number of pseudo-Choi states produced after $\widetilde{N}$ queries to $U_{block}$. We want $X$ to be greater than or equal to $N_s$, so we seek an upper bound on $\mathbf{Pr}(X \leq N_s)$. Since the success probability of obtaining each pseudo-Choi state from one query to the block-encoding is $\frac{\gamma^2}{2}$ (see Appendix~\ref{App:ProduceChoiState}), we have $\mu = \frac{\widetilde{N} \gamma^2}{2}$.

To use the Chernoff bound~\ref{lemma: Chernoff}, we therefore choose
\begin{align}
    \beta & = 1 - \frac{2 N_s}{\gamma^2\widetilde{N}}
\end{align}
Note that since $\beta$ must be greater than zero to apply the Chernoff bound, we require $\widetilde{N} \geq \frac{2 N_s}{\gamma^2}$.

By the Chernoff bound we then have:
\begin{align}
    \mathbf{Pr}(X \leq N_s) & \leq e^{-\frac{\widetilde{N} \gamma^2}{4}\left(1 - \frac{4 N_s}{\gamma^2\widetilde{N}} + \frac{4 N_s^2}{\gamma^4\widetilde{N}^2}\right)}\\
    &= e^{\left(-\frac{\widetilde{N} \gamma^2}{4} + N_s - \frac{N_s^2}{\gamma^2\widetilde{N}}\right)}\\
    & \leq e^{\left(-\frac{\widetilde{N} \gamma^2}{4} + N_s \right)}
\end{align}
We want our failure probability to be at most $\delta_{N_s}$, so we let
\begin{align}
    \delta_{N_s} & \equiv e^{\left(-\frac{\widetilde{N} \gamma^2}{4} + N_s \right)}, 
\end{align}
from which we see that
\begin{align}
    \ln\left({\frac{1}{\delta_{N_s}}}\right) & = \frac{\widetilde{N} \gamma^2}{4} - N_s
\end{align}
Finally, solving for $\widetilde{N}$ gives
\begin{align}
    \widetilde{N} & = 4\frac{ \ln\left({\frac{1}{\delta_{N_s}}}\right)}{\gamma^2} + 4 \frac{N_s}{\gamma^2}
\end{align}

In general, we assume $N_s$ will be much larger than $\ln\left({\frac{1}{\delta_{N_s}}}\right)$. Therefore, to generate at least $N_s$ pseudo-Choi states with probability at least $1 - \delta_{N_s}$, the number of queries to $U_{block}$ needed is
\begin{align}
    \widetilde{N} & \in \mathcal{O} \left(\frac{N_s}{\gamma^2} \right)
\end{align}

\end{proof}

\end{document}